\xpatchcmd{\@todo}{\setkeys{todonotes}{#1}}{\setkeys{todonotes}{inline,#1}}{}{}
\newcommand*\linenomathpatch[1]{%
  \cspreto{#1}{\linenomath}%
  \cspreto{#1*}{\linenomath}%
  \csappto{end#1}{\endlinenomath}%
  \csappto{end#1*}{\endlinenomath}%
}
\newcommand*\linenomathpatchAMS[1]{%
  \cspreto{#1}{\linenomathAMS}%
  \cspreto{#1*}{\linenomathAMS}%
  \csappto{end#1}{\endlinenomath}%
  \csappto{end#1*}{\endlinenomath}%
}
  \crefname{theorem}{Theorem}{Theorems}
  \crefname{thm}{Theorem}{Theorems}
  \crefname{lem}{Lemma}{Lemmas}
  \crefname{rem}{Remark}{Remarks}
  \crefname{prop}{Proposition}{Propositions}
  \crefname{proposition}{Proposition}{Propositions}
\crefname{notation}{Notation}{Notations}
\crefname{claim}{Claim}{Claims}
  \crefname{defn}{Definition}{Definitions}
  \crefname{corollary}{Corollary}{Corollaries}
  \crefname{section}{Section}{Sections}
  \crefname{figure}{Figure}{Figures}
  \crefname{question}{Question}{Questions}
  \crefname{exercise}{Exercise}{Exercises}
    \crefname{assumption}{Assumption}{Assumptions}
\newtheorem{thm}{Theorem}[section]
\newtheorem{lem}[thm]{Lemma}
\newtheorem{corollary}[thm]{Corollary}
\newtheorem{defn}[thm]{Definition}
\newtheorem{question}[thm]{Question}
\numberwithin{equation}{section}
\theoremstyle{definition}
\newtheorem{rem}[thm]{Remark}
  \let\linenomathAMS\linenomathWithnumbers
  \patchcmd\linenomathAMS{\advance\postdisplaypenalty\linenopenalty}{}{}{}
  \let\linenomathAMS\linenomathNonumbers
\patchcmd{\mmeasure@}{\measuring@true}{
  \measuring@true
  \ifnum-\linenopenaltypar>\interdisplaylinepenalty
    \advance\interdisplaylinepenalty-\linenopenalty
  \fi
  }{}{}
\renewcommand{\leq}{\leqslant} 
\renewcommand{\geq}{\geqslant} 
\renewcommand{\le}{\leqslant} 
\renewcommand{\ge}{\geqslant}
\newcommand{\eps}{\varepsilon}
\newcommand{\norm}[1]{\left\Vert#1\right\Vert}
\let\ga=\alpha \let\gb=\beta  \let\gd=\delta 
     \let\gl=\lambda           \let\gs=\sigma  
 \let\gD=\Delta   
\let\gO=\Omega         \let\gS=\Sigma  
\newcommand{\cA}{\mathcal{A}}\newcommand{\cB}{\mathcal{B}}
\newcommand{\cD}{\mathcal{D}}\newcommand{\cE}{\mathcal{E}}\newcommand{\cF}{\mathcal{F}}
\newcommand{\cG}{\mathcal{G}}\newcommand{\cI}{\mathcal{I}}
\newcommand{\cL}{\mathcal{L}}
\newcommand{\cP}{\mathcal{P}}\newcommand{\cR}{\mathcal{R}}
\newcommand{\cV}{\mathcal{V}}
\newcommand{\mv}[1]{\boldsymbol{#1}}
\newcommand{\mvH}{\boldsymbol{H}}
\newcommand{\mvJ}{\boldsymbol{J}}
\newcommand{\mvM}{\boldsymbol{M}}
\newcommand{\mvZ}{\boldsymbol{Z}}
\newcommand{\mvy}{\boldsymbol{y}}
\newcommand{\mvz}{\boldsymbol{z}}
\newcommand{\dE}{\mathds{E}}
\newcommand{\dN}{\mathds{N}}
\newcommand{\dP}{\mathds{P}}
\newcommand{\dR}{\mathds{R}}
\newcommand{\dT}{\mathds{T}}
\newcommand{\dZ}{\mathds{Z}} 
\DeclareMathOperator{\E}{\mathds{E}}
\DeclareMathOperator{\var}{Var}
\DeclareMathOperator{\cov}{Cov}
\DeclareMathOperator{\argmin}{argmin}
\def \bf {\boldsymbol}
\title{ Uniqueness and CLT for the  Ground State of the Disordered Monomer-Dimer Model on $\dZ^{d}$}
\author{Kesav Krishnan and Gourab Ray }
\date{\today}
\thanks{ Department of Mathematics and Statistics, University of Victoria, PO Box 1700 STN CSC, Victoria,  V9B 0Z2. Email: gourabray,kkrishnan@uvic.ca.\\
Research supported by NSERC 50711-57400 of GR and PIMS pdf fellowship of KK}
\begin{document}

\begin{abstract}
    We prove that the disordered monomer-dimer model does not admit infinite volume incongruent ground states in $\dZ^d$ which can be obtained as a limit of finite volume ground states. Furthemore, we also prove that these ground states are stable under perturbation of the weights in a precise sense.
    As an application, we obtain a CLT for the ground state weight for a growing sequence of tori. Our motivation stems from a similar and long standing open question for the short range Edwards-Anderson spin glass model.
\end{abstract}

\maketitle

\section{Introduction}
Understanding ground states for disordered statistical physics models has been one of the most tantalizing research topics in the past few decades. A notable example is the \emph{Edwards Anderson spin glass model}, which is essentially the well-studied Ising model but with the couplings given by i.i.d.\ Gaussian weights. Despite fantastic progress of rigoruous mathematical understanding of the mean field case \cites{sherrington_kirkpatrick,parisi,talagrand,panchenko}, progress has been slow in the short range setting (i.e. on $\dZ^d$), and in fact has even been a source of considerable controversy in the physics literature (see \cite{chatterjee2023spin} and references therein). The goal of this article is to address the question of uniqueness and central limit theorem for the ground state of the  disordered \textbf{monomer-dimer model} in the short range setup. We also shed light on the question of \emph{disorder chaos} for this particular model.

Let $G = (V,E)$ be a finite, connected, simple graph and let $\Sigma = V \cup E$. A monomer-dimer covering of $G$ is defined by a collection of non overlapping edges and vertices of $G$, which we denote by $M\subsetneq \Sigma$, otherwise known as a \textbf{matching}. By non overlapping, we mean that no two distinct edges in $M$ share an endpoint and no vertex in $M$ is an endpoint of an edge in $M$. The vertices in $M$ are called \textbf{monomers} and the edges \textbf{dimers}. 

\begin{figure}[t]
    \centering
    \includegraphics[scale=0.4]{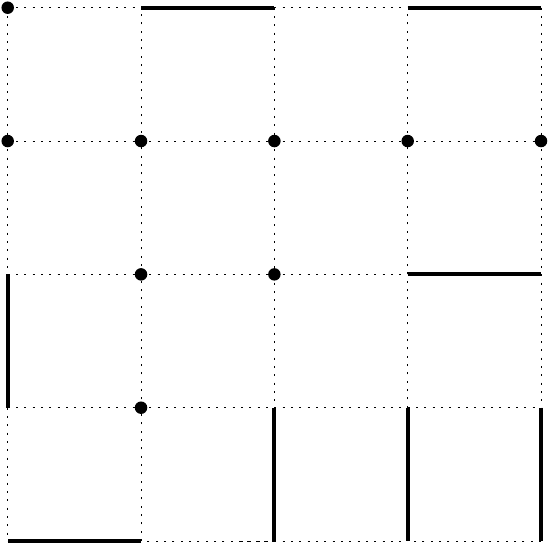}
    \caption{A monomer-dimer configuration on the $5\times 5$ grid}
    \label{fig:md}
\end{figure}
We will also think of $M$ as a percolation on $\Sigma$, which allows us to think of it as an element of $\{0,1\}^\Sigma$, where a 1 corresponds to an edge or a vertex being occupied and $0 $ corresponds to them being unoccupoied. Given a collection of weights $J:\Sigma \to \dR$ on a finite graph the weight of a matching is given by 
\[
H(M):=\sum_{x\in M}J_{x}. 
\]
The ground state of the monomer-dimer model on $G$ with weights $J$ is any matching of minimal weight. On finite graphs, if the weights are independent random variables with continuous distribution, there is zero probability that any two distinct monomer dimer configurations will have the same weight, thus uniqueness of the ground state holds with probability $1$. 

On infinite graphs, while the notion of weight of the matching does not make sense any more, following \cite{AW_90,NS_01}, we can still extend notion of ground state using the energy difference of two configurations that differ at finitely many sites.  Namely, we can define a matching $M\subset V \cup E $ to be a ground state if we cannot `lower the energy' by modifying it at finitely many edges and vertices, where the lowering of the energy is given precise meaning by considering energy differences (see \Cref{sec:infinite_ground}). If the weights come from i.i.d.\ distribution then the cardinality of the set of ground states ${\sf G}(\mvJ)$ is almost surely a constant via ergodicity. What can we say about this number, in particular, is it finite or infinite? We answer this question for the particular case of ground states which are obtained through an infinite volume limit of finite volume tori. 

In this paper, we work on the integer lattice $\dZ^d$, $d \ge 2$ with nearest neighbor edges. Let $\sf V$ be the vertex set and $\sf E$ be the edge set of $\dZ^d$, with $\sf \Sigma  = \sf V \cup \sf E$. We let $\dT_n^d$ to be the torus $(\dZ/n\dZ)^d$. For ease of proof technique, we let ${\bf J}= (J_x)_{x \in \sf \Sigma}$ be a collection of i.i.d.\ weights coming from a \textbf{good} distribution, see \Cref{def:good}. For now let us mention that this includes a large class of distributions, in particular the cases of $N(0,1)$ (standard Gaussian) and Exp$(1)$ (standard exponential).   
Let $M_n$ denote the ground state for the monomer-dimer configuration on for the weights restricted to the vertices and edges of $\dT_n^d$. We think of $(\bf J, M_n)$ as a joint probability distribution on $[\gb,\infty)^{\sf \Sigma} \times \{0,1\}^{\sf \Sigma}$ where $M_n$ is extended periodically to the infinite lattice. It is standard that the collection of probability measures induced by $\{(\bf J, M_n)\}_{n \ge 1}$ is tight and hence has subsequential limits. Our main theorem concerns the uniqueness of these limits.
\begin{thm}\label{thm:main}
    The weak limit of $\{(\bf J, M_n)\}_{n \ge 1}$ exists as $n \to \infty$. Furthermore, letting $(\bf J, M)$ be the limit, $M$ is a measurable function of $\bf J$.
\end{thm}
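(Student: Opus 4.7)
The plan is to prove uniqueness of the weak limit and measurability of $M$ with respect to $\bf J$ simultaneously, via a single coupling argument. Tightness of $\{(\bf J, M_n)\}_n$ is standard, so let $(\bf J, M^{(1)})$ and $(\bf J, M^{(2)})$ be any two subsequential weak limits. Since both share the same marginal $\bf J$-distribution, a standard coupling (Skorohod representation applied to an interleaved sequence) places them on a common probability space with the same disorder $\bf J$; the task then reduces to showing $M^{(1)} = M^{(2)}$ almost surely. Applied to two distinct subsequential limits, this yields uniqueness of the weak limit; applied to two independent copies of the same subsequential limit drawn conditionally on $\bf J$, it forces the conditional law of $M$ given $\bf J$ to be a point mass, i.e.\ measurability.

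The main structural tool is the symmetric difference $D := M^{(1)} \triangle M^{(2)}$, viewed as a subset of vertices and edges of $\dZ^d$. Since each vertex has at most one matching edge in each of $M^{(1)}, M^{(2)}$, the edge-subgraph of $D$ has maximum degree at most $2$, and its components are finite alternating cycles of even length, finite alternating paths whose endpoints lie in $D \cap {\sf V}$ (i.e.\ monomers in exactly one of the two matchings), and doubly infinite alternating paths. A preliminary step, which I would establish by passing the finite-volume minimality property to the limit, is that both $M^{(1)}$ and $M^{(2)}$ are infinite-volume ground states in the sense of \Cref{sec:infinite_ground}; in particular every finite local modification has nonnegative energy change. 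For any finite component $C$, flipping $M^{(1)}$ along $C$ produces a valid matching $M''$ with $E(M'') - E(M^{(1)}) = \sum_{x \in M^{(2)} \cap C} J_x - \sum_{x \in M^{(1)} \cap C} J_x$, a nontrivial $\{-1,+1\}$-linear combination of the i.i.d.\ continuous weights indexed by $C$; applying the ground state property to both $M^{(1)}$ and $M^{(2)}$ (the second via the reverse flip of $M^{(2)}$ along $C$) forces this combination to equal $0$, which happens with probability $0$ for each fixed $C$. A union bound over the countable family of finite subgraphs of $\dZ^d$ then rules out finite components of $D$ almost surely.

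The hard part, and the main obstacle, is ruling out doubly infinite alternating paths in $D$, since no single finite modification flips the whole path at once. My approach is to perform many different finite truncation-flips of $M^{(1)}$ and $M^{(2)}$ along shifted sub-segments of a hypothetical infinite path $\gamma$, each yielding a ground state inequality expressible as a signed partial sum of i.i.d.\ weights along $\gamma$ plus a few boundary vertex-weight corrections arising from the new monomers created at the truncation endpoints. Combining infinitely many such constraints forces random-walk-like partial sums in the weights along $\gamma$ to satisfy conflicting one-sided bounds at all scales, which by the CLT behavior for good distributions (cf.\ \Cref{def:good}) is an almost sure violation along any fixed deterministic path. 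To upgrade from a single deterministic path to the uncountable family of random paths potentially appearing in $D$, I would invoke translation invariance and ergodicity of $\bf J$ together with a Burton--Keane-style mass-transport argument, reducing to a countable enumeration of deterministic path prefixes at which the union bound applies. Once $D$ is shown to be almost surely empty, $M^{(1)} = M^{(2)}$ and the theorem follows.
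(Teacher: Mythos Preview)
Your setup (coupling two subsequential limits on common disorder, sampling conditionally independently given $\mvJ$) and your treatment of finite components of $D=M^{(1)}\triangle M^{(2)}$ are correct and match the paper. You omit one case in your component classification: one-ended infinite alternating paths beginning at a monomer. The paper handles these separately by a Burton--Keane counting argument (\Cref{lem:BK}); this step is not hard but is not the same as your use of Burton--Keane later.

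The genuine gap is your treatment of bi-infinite paths. The inequalities you extract from truncation-flips are real: for a bi-infinite alternating path $\gamma$ you do get, for every finite sub-segment, a two-sided bound of the form $|S_b-S_a|\le (\text{a few endpoint weights})$, where $S_n$ is the signed partial sum of edge weights along $\gamma$. The problem is that these are \emph{not} partial sums of i.i.d.\ variables. The path $\gamma$ is itself a random object determined by $\mvJ$, and it is selected precisely so that these ground-state inequalities hold; conditioning on $\gamma\subset D$ destroys any a priori fluctuation behaviour of the weights along $\gamma$. Your proposed remedy, ``reduce to a countable enumeration of deterministic path prefixes at which the union bound applies,'' does not work: there are exponentially many self-avoiding paths of length $n$ in $\dZ^d$, and the event that the partial-sum constraints hold along a fixed path of length $n$ decays at best like $c^n$ for some $c<1$, so no union bound closes. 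Neither translation invariance nor a mass-transport argument converts this into a countable problem; the Burton--Keane mechanism controls the \emph{number} of distinguished vertices (endpoints, trifurcations), not fluctuations along a single path.

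The paper's argument for bi-infinite paths is entirely different and does not attempt to control sums along the path. It introduces the \emph{optimality} $O(v)=\max_{u\sim v}(J_u+J_v-J_{u,v})$, a purely local function of $\mvJ$, and observes that every vertex on a bi-infinite component of $D$ must have $O(v)\ge 0$. By translation invariance the infimum ${\sf m}$ of $O$ along the path through $\mv0$ is a.s.\ not attained (\Cref{lem:unique_min}), yet one can find, with positive probability, a vertex of optimality within $\eps$ of ${\sf m}$ sandwiched between two edges of flexibility at least $\eps$ (\Cref{lem:MTP1}). One then performs an \emph{absolutely continuous} perturbation of $\mvJ$ on a finite set: raise the weights of all off-path edges incident to the segment to make them inaccessible, and gauge-shift at $\mv0$ to lower $O(\mv0)$ below ${\sf m}$. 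The high-flexibility edges fence in any resulting change to the ground state, forcing either a monomer to appear in the new symmetric difference (impossible by \Cref{lem:BK} plus absolute continuity) or the infimum of optimality to be attained (impossible by \Cref{lem:unique_min} plus absolute continuity). This is the content of \Cref{lem:pathmod}, \Cref{lem:perturb_as_event}, \Cref{lem:abs_inf}, and \Cref{lem:upper_bound_0}. No random-walk or CLT input is used; the goodness hypothesis on the weight law enters only to guarantee absolute continuity of the perturbed disorder.
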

Even though each $M_n$ is clearly a measurable function of $\bf J$, measurability might not hold for weak limits in general. For example, on $\dZ$, we can define a matching on $\dT_{2n}$ as follows: the matching is a perfect matching (only dimers and no monomers) if the sum of the weights on $\dT_{2n}$ is strictly positive, and otherwise, the matching is all monomers and no dimers. It is easy to see that in the limit, we get all monomers with probability $1/2$ and a perfect matching  with probability $1/2$, which is in fact completely independent of ${\bf J}$.

The broad strategy to prove \Cref{thm:main} is as in \cite{NS_01}: superimpose two potential ground states and arrive at a contradiction.  To that end let $({\bf J}, M)$ and $({\bf J}, M')$ be two distinct subsequential limits. Conditioned on ${\bf J}$, sample $M$ and $M'$ independently according to their conditional laws (the spaces are Borel which allows us to consider the a.s. conditional laws). This defines a coupling $({\bf J}, M, M')$. We prove the following theorem
\begin{thm}\label{thm:conditional_equality}
    Let  $({\bf J}, M, M')$ be as above. Then $M= M'$ almost surely.
\end{thm}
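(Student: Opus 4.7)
The plan is to assume for contradiction that $\Pr(M\neq M')>0$ and derive a contradiction from the ground-state property of $M$ and $M'$, the continuity of the weight distribution, and the translation invariance of $({\bf J},M,M')$ inherited from the torus limit.

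First I would analyse the symmetric difference $D:=M\triangle M'\subset \sf\Sigma$. At every vertex $v\in\sf V$ there is at most one $M$-dimer and one $M'$-dimer incident, so the dimer part of $D$ induces a subgraph of $(\sf V,\sf E)$ of maximum degree two. Hence the connected components of $D$ are (finite or infinite) paths and cycles, and the monomer part of $D$ sits precisely on the degree-one path endpoints: each such vertex is a monomer in one of $M,M'$ and covered by a dimer in the other, while every degree-two vertex has identical monomer status in the two matchings.

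The first main step is to rule out finite components. Given any finite component $C$ of $D$, the path/cycle structure lets me define a matching $M''$ that agrees with $M$ outside $C$ and with $M'$ on $C$; it differs from $M$ only on the finite set $C$. The ground-state property of $M$ then gives
\[
\sum_{x\in M\cap C}J_x\;\le\;\sum_{x\in M'\cap C}J_x,
\]
and the symmetric flip applied to $M'$ yields the reverse inequality, forcing equality. But $M\cap C$ and $M'\cap C$ are distinct finite subsets of $\sf\Sigma$, and since the common law of the $J_x$'s is continuous, for each fixed pair of distinct finite subsets of $\sf\Sigma$ the event that their ${\bf J}$-weights coincide has probability zero; a union bound over the countable collection of such pairs shows that a.s.\ $D$ has no finite components.

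The harder step is to rule out infinite components of $D$. The plan here is to leverage the torus-limit structure of $M$ and $M'$ together with the stability of ground states under weight perturbations announced in the abstract. Given an infinite path component $P$ of $D$, long finite sub-segments of $P$, with parity-consistent modifications at the two boundary vertices (introducing or removing monomers as forced by the segment parity), can play the role of the finite $C$ in the previous step, yielding simultaneous strict inequalities from the ground-state property of $M$ and of $M'$. Translation invariance of $({\bf J},M,M')$ and the i.i.d.\ nature of the weights along a simple path in $\dZ^d$ then allow one to average these inequalities over translates and over increasing segment lengths, and I would combine this with the stability result to conclude that the two ground states must in fact coincide on $P$, contradicting $M\neq M'$.

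The main obstacle I expect is exactly this last step. The boundary modifications at the endpoints of a long sub-segment contribute only an $O(1)$ correction, whereas the interior weight sums are $O(\ell)$ in the segment length $\ell$, so a crude law-of-large-numbers argument cancels the two inequalities against each other and yields no contradiction on its own. The two simultaneous strict inequalities (from the $M$-flip and the $M'$-flip on the same segment) must be combined with the stability-under-perturbation machinery, and with the specific fact that $M,M'$ are torus limits rather than arbitrary infinite-volume ground states, in order to upgrade ``no asymptotic weight asymmetry'' to ``no infinite disagreement component at all''; this coupling step is where I anticipate the real technical work of the proof.
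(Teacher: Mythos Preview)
Your treatment of finite components of $M\triangle M'$ is correct and matches the paper (Lemma~\ref{lem:ground_sym}): swapping a finite component contradicts the ground-state property of one of $M,M'$ unless two distinct finite ${\bf J}$-sums coincide, which has probability zero.

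The rest of your proposal, however, is not a proof but an acknowledgement that you do not yet have one. You correctly observe that for a long sub-segment of an infinite disagreement path the two flip inequalities cancel to order $o(\ell)$, so no contradiction emerges from averaging. Your suggested fix, to ``combine with the stability-under-perturbation machinery'', is not an argument: you do not specify what is perturbed, what quantity is tracked, or why a contradiction results. The paper's actual argument is substantially different from anything you sketch and contains several ideas you are missing entirely.

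First, infinite components come in two flavours: one-ended paths (with a monomer at the tip) and bi-infinite paths. These are handled by completely different mechanisms. One-ended paths are ruled out by a Burton--Keane/amenability count (Lemma~\ref{lem:BK}): each tip forces a distinct boundary crossing of a large box, so a positive density of tips is incompatible with $|\partial B_K|=o(|B_K|)$. You do not separate these cases and have no analogue of this step.

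Second, for bi-infinite paths the paper introduces the local quantities \emph{optimality} $O(v)$ and \emph{flexibility} $F(x)$ and the notion of an \emph{inaccessible} edge, and performs a carefully designed, absolutely continuous perturbation of ${\bf J}$ near the origin (decreasing the optimality at ${\bf 0}$ while making all side edges along a finite segment inaccessible, with the segment bounded by edges of flexibility $\ge\eps$). This forces a dichotomy: either the perturbed ground states differ from the old ones, in which case the change is confined to the segment and creates a monomer in the new symmetric difference (impossible with positive probability by the one-ended argument and absolute continuity), or they do not change, in which case the infimum of $O(v)$ along the bi-infinite path is now attained at a unique vertex (impossible by a mass-transport argument, Lemma~\ref{lem:unique_min}). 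None of these ingredients --- optimality, flexibility, inaccessibility, the specific perturbation, the absolute continuity transfer, or the mass-transport contradiction --- appears in your proposal, and your long-segment averaging idea does not lead to them.
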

It is easy to see that \Cref{thm:main} follows immediately from \Cref{thm:conditional_equality}. Indeed, since $M$ and $M'$ can be coupled to be equal, all subsequential limits must be the same. Furthermore, taking $M$ and $M'$ to be limits along the same subsequence, we conclude that $M$ is a measurable function of ${\bf J}$.

Our next result concerns the change in the ground state under a translation invariant perturbation. To that end, for any two elements $\xi , \xi' \in \{0,1\}^{\sf \Sigma}$, let $\xi \Delta \xi' := \{x \in \Sigma: \xi_x \neq \xi'_x\}$ denote their symmetric difference. Let ${\bf J'}$ be an independent copy of ${\bf J}$. Fix $p\in [0,1]$ and now define ${\bf J}(p)$ as follows: for each edge toss an independent coin and change $J_e$ to $J_e'$ with probability $p$. Note that ${\bf J}(p)$ has the same law as ${\bf J}$. We get a new ground state $M(p)$ using \Cref{thm:main}.

\begin{thm}\label{thm:perturbation}
Fix $p \in [0,1]$ and let $M,M(p)$ be as above. Then $M \Delta M(p)$ has finite components almost surely.  Furthermore, $M\Delta M(p)\to \emptyset$ as $p \to 0$ in law.
\end{thm}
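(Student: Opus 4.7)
The plan is to adapt the superposition strategy behind \Cref{thm:conditional_equality} to the pair $(M, M(p))$, which are now ground states for the \emph{correlated} weight systems $\bf J$ and ${\bf J}(p)$; write $S \subset \sf\Sigma$ for the Bernoulli-$p$ random set at which the two weight systems differ, so that ${\bf J} = {\bf J}(p)$ on $S^c$. The basic input for Step 1 (almost sure finiteness of components) is the following edge-swap identity: by the alternating-path / alternating-cycle structure of symmetric differences of matchings, for any finite component $C$ of $M \Delta M(p)$, swapping $M$ and $M(p)$ along $C$ yields legal monomer-dimer matchings. Ground state optimality then gives the two inequalities
\begin{align}
\sum_{x \in (M(p)\setminus M) \cap C} J_x &\ge \sum_{x \in (M \setminus M(p)) \cap C} J_x,\\
\sum_{x \in (M \setminus M(p)) \cap C} J_x(p) &\ge \sum_{x \in (M(p) \setminus M) \cap C} J_x(p);
\end{align}
summing, the off-$S$ contributions cancel, so a finite component disjoint from $S$ would force an exact equality of two distinct sums of i.i.d.\ continuous weights -- a null event -- hence almost surely every component of $M \Delta M(p)$ meets $S$.

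To rule out infinite components I would then recycle the contradiction at the heart of \Cref{thm:conditional_equality}: along a purported infinite alternating structure one selects finite truncations at suitable closing vertices so as to produce legal swaps, and the concentration afforded by the good distribution assumption of \Cref{def:good} prevents both ground state inequalities from simultaneously holding along an unbounded sequence of such truncations. The principal technical obstacle here is the asymmetry between $\bf J$ and ${\bf J}(p)$, which forces one to track boundary behaviour under two weight systems at once rather than one as in \Cref{thm:conditional_equality}.

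For Step 2 (convergence as $p \to 0$), it suffices to show $\pr(A \cap (M \Delta M(p)) \ne \emptyset) \to 0$ for any finite $A \subset \sf\Sigma$. By Step 1, such an intersection requires a finite component of $M \Delta M(p)$ hitting both $A$ and $S$; for any fixed box $B \supset A$, $\pr(S \cap B = \emptyset) = (1-p)^{|B|} \to 1$, so the remaining contribution is controlled by the probability that the component of $M \Delta M(p)$ through some point of $A$ exits $B$. The crux is therefore a uniform-in-$p$ tail bound on component diameters, which I would try to extract by quantifying Step 1 into (hopefully) stretched-exponential tails whose constants depend continuously on the weight law; failing a direct bound, one may instead approximate via the finite tori $\dT_n^d$ of \Cref{thm:main}, where the finite-volume ground state is an argmin of a piecewise-linear function of $\bf J$ and is therefore stable under small rare perturbations, before passing $n \to \infty$.
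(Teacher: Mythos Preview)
Your opening observation (each finite component of $M\Delta M(p)$ must meet the resampled set $S$) is correct and pleasant, but it is auxiliary to the theorem; the real work in Step~1 is ruling out \emph{infinite} components, and there your description does not match the machinery of \Cref{thm:conditional_equality}. The contradiction in that proof is not obtained by truncating an infinite path, swapping, and invoking ``concentration'' of the weights; it comes from the optimality/flexibility/inaccessibility apparatus: one first uses Burton--Keane to exclude one-ended paths, then on a purported bi-infinite path one locates a vertex of near-minimal optimality sandwiched between edges of high flexibility (one for each ground state), makes the neighbouring off-path edges inaccessible, and performs an absolutely continuous shift of the origin's weight to force either a monomer in the symmetric difference or attainment of the optimality infimum---both of which have probability zero by translation invariance. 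For the pair $(M,M(p))$ the same scheme works, but the perturbation must be designed so that the shifted weights are simultaneously absolutely continuous with respect to \emph{both} $\mvJ$ and $\mvJ(p)$ (the paper takes $z_S$ large enough to make the off-path edges inaccessible under either weight system, and perturbs both $\mvJ$ and $\mvJ'$ in parallel). None of this is visible in your sketch, and the ``concentration'' substitute you suggest is not developed enough to be assessable.

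Your Step~2 has a more serious gap. The argument you outline requires a tail bound on the diameter of the component of $M\Delta M(p)$ through the origin that is \emph{uniform in small $p$}; the paper does not establish any such bound and in fact flags the decay rate of these components as an open question (\Cref{sec:open}). The paper's route is entirely soft: along any sequence $p_k\to 0$, pass to a subsequence so that $(\mvJ(p_{k_l}),M(p_{k_l}))$ converges in law, observe that the limit is of the form $(\mvJ,M(0+))$ with $M(0+)$ a translation-invariant ground state for $\mvJ$ arising as a torus limit, and then invoke the uniqueness argument of \Cref{thm:conditional_equality} to conclude $M(0+)=M$. This compactness-plus-uniqueness manoeuvre sidesteps any quantitative control, which is precisely what your approach needs but cannot supply.
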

The result $M\Delta M(p)\to \emptyset$ as $p \to 0$ in law is analogous to the absence of disorder chaos (see \cite{BM87,chatterjee2023spin} and references therein). 


As an application of our results, we obtain a central limit theorem for the  partition function.

\begin{thm}\label{thm:CLT}
Let $M_{n}$ be the ground state for the monomer dimer model with $\mvJ$ now additionally satisfying the condition that $\E(|J_{x}|^{4+\gd})<\infty$ for some $\gd>0$. Then as $n\to \infty$,
\[
\frac{H(M_{n})-\E H(M_{n})}{\sqrt{\var{H(M_{n})}}}\to N(0,1)
\]
in distribution. 
\end{thm}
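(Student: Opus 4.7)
The plan is to prove the CLT by applying a martingale central limit theorem to the Doob martingale of $H(M_n)$. Enumerate $\Sigma_n := {\sf V}(\dT_n^d)\cup{\sf E}(\dT_n^d) = \{y_1,\dots,y_N\}$ with $N\asymp n^d$, put $\cF_k := \sigma(J_{y_1},\dots,J_{y_k})$, and set
\[
D_k := \E[H(M_n)\mid\cF_k] - \E[H(M_n)\mid\cF_{k-1}] = \E\bigl[H(M_n)-H(\wt M_n^{(k)})\bigm|\cF_k\bigr],
\]
where $\wt M_n^{(k)}$ denotes the ground state on $\dT_n^d$ after replacing $J_{y_k}$ by an independent copy $\wt J_{y_k}$. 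Then $H(M_n)-\E H(M_n) = \sum_k D_k$ and $\sigma_n^2 := \var(H(M_n)) = \sum_k \E D_k^2$.

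The key technical input I would establish is a quantitative single-site version of \Cref{thm:perturbation}: there is a random connected cluster $C_k\subset\Sigma_n$ containing $y_k$, with $M_n \Delta \wt M_n^{(k)}\subset C_k$, whose radius has tails $\pr(\mathrm{rad}(C_k)\ge r)$ decaying faster than any polynomial, uniformly in $n$ and $k$. Such a bound should be implicit in the contour/augmenting-path argument underlying \Cref{thm:conditional_equality}: any long path in the symmetric difference would admit a local swap with arbitrarily small energy gain, which is ruled out by the anti-concentration estimate built into the \emph{good} distribution hypothesis. Combined with $\E|J_x|^{4+\delta}<\infty$ and Cauchy--Schwarz, this yields $\sup_{n,k}\E|D_k|^{2+\delta'}\le C<\infty$ for some $\delta'>0$.

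Next I would establish the variance lower bound $\sigma_n^2 \ge c n^d$. By translation invariance on the torus, $\E D_k^2$ takes only two values (one for vertex sites, one for edge sites), so it suffices to show both are bounded below by positive constants independent of $n$. For each fixed realization of the weights other than $J_{y_k}$, the optimal matching is piecewise constant in $J_{y_k}$ with at least one crossover threshold (between a matching that uses $y_k$ and one that does not), so the conditional variance of $H(M_n)$ given the remaining weights is a.s.\ strictly positive; \Cref{thm:main} together with the uniform integrability provided by the cluster tail bound then passes this strict positivity to a uniform-in-$n$ lower bound.

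Finally, I apply a martingale CLT such as McLeish's theorem. With $\sigma_n\asymp n^{d/2}$ and uniformly bounded $\|D_k\|_{2+\delta'}$, the Lindeberg condition $\sigma_n^{-2}\sum_k \E[D_k^2\,\mathbf{1}_{|D_k|>\eps\sigma_n}]\to 0$ follows immediately. The main obstacle, and where the bulk of the work lies, is the conditional variance convergence
\[
\frac{1}{\sigma_n^2}\sum_k \E[D_k^2\mid\cF_{k-1}] \longrightarrow 1 \quad \text{in probability,}
\]
which I would attack through the stronger $L^2$ statement $\var\bigl(\sum_k \E[D_k^2\mid\cF_{k-1}]\bigr) = o(n^{2d})$. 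By a standard covariance-sum reduction this becomes a rapid-decay estimate on $\cov\bigl(\E[D_k^2\mid\cF_{k-1}], \E[D_l^2\mid\cF_{l-1}]\bigr)$ as $|y_k-y_l|\to\infty$. The locality furnished by the cluster $C_k$ makes this accessible: on the high-probability event $\{\mathrm{rad}(C_k)\vee\mathrm{rad}(C_l) < |y_k-y_l|/3\}$, the two conditional expectations depend on disjoint blocks of weights and are therefore independent, while the complementary event contributes negligibly by the cluster tail bound and the uniform $L^{2+\delta'}$ control on $D_k$.
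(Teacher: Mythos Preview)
Your proposal has a genuine gap at the step you label the ``key technical input'': a quantitative tail bound $\pr(\mathrm{rad}(C_k)\ge r)\to 0$ faster than any polynomial, uniformly in $n$. Nothing of the sort is available from the arguments behind \Cref{thm:conditional_equality} or \Cref{thm:perturbation}. Those proofs are entirely soft: they proceed by contradiction via translation invariance, mass transport, Burton--Keane, and an absolutely continuous perturbation of the weights. No step yields a rate, and indeed the paper lists the tail behaviour of the component through the origin in $M\Delta M(p)$ as an open question. Your heuristic that ``any long path in the symmetric difference would admit a local swap with arbitrarily small energy gain, ruled out by anti-concentration'' does not match the actual mechanism and would have to be built from scratch; it is not at all clear how to make it quantitative. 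Without this input, your Lindeberg verification survives (it only needs $\|D_k\|_{2+\delta'}$ bounded, which follows from the 1-Lipschitz property alone), but the conditional variance step collapses: your proposed covariance decay relies entirely on the unavailable cluster tail.

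The paper's route avoids this obstacle. It applies Chatterjee's normal approximation bound \eqref{eq:chatterjee}, which reduces the CLT to controlling covariances of the form $c(x,y)=\max_{S,T}\cov(\partial_x g\,\partial_x g^S,\partial_y g\,\partial_y g^T)$. The crucial point is that these can be handled with \emph{purely qualitative} input: one compares $\partial_x H(M_n)$ to $\partial_x H(M_{R,x})$, the discrete derivative on a small torus of side $R$ around $x$, and shows the difference is small in $L^4$ as $n,R\to\infty$ with $R=o(n)$. This follows from the convergence of the transition points $K_{n,x}$ and $K_{R,x}$ to a common limit, which is a direct (non-quantitative) consequence of \Cref{thm:main}. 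The variance lower bound is also obtained without any cluster control, by restricting to the local event $\{J_x,J_x'<K_{n,x}\}$, whose probability is bounded below by an event depending only on the neighbourhood of $x$.
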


As a by product of our proof, we also obtain a decay of correlation for the ground state, see \Cref{cor:decay_corr} for a precise statemtent.
\begin{rem}
For each of our results, it is possible to consider the weight distributions to be different for the vertices and edges. As we will touch upon later, it is also possible to choose the monomer weights to be a fixed constant. See \Cref{sec:generalization} for details.
\end{rem}

The monomer dimer model in the disordered setting has been studied before in the context of a \emph{Gibbs measure} \cite{ac1,ac2,DK_23,SL_23}. The mean field setting with two distinct types of environmental disorder, the first being studying the standard monomer dimer model on the Erdos-Renyi random graph, and the second corresponding to the vertex weights being i.i.d random variables are addressed in \cite{ac1} and \cite{ac2} respectively. In each case, the limiting free energy and limiting monomer densities are computed explicitly. The approaches used are inherently mean field, in particular using the cavity method to express the limiting free energy in terms of a fixed point relation. In \cite{DK_23,SL_23}, the authors establish the finite temperature analogue of \Cref{thm:CLT} in the setting of pseudo one dimensional graphs and general bounded degree graphs respectively, and in \cite{DK_23}, the CLT for the ground state energy is briefly addressed. However, none of these works analyze the ground state in detail, which is far more rigid in structure.

Our proof is more inspired by the technique of Newman and Stein \cite{NS_01} where they analyze the superimposition of two ground states in dimension 2 and derive a contradiction via ergodic theoretic methods. Although their proof does not yield the uniqueness of the ground state for the Edwards Anderson model, which is a major open question in the area of spin glasses with far reaching consequences, the special structure of the monomer dimer model does allow us to obtain uniqueness in any dimension. In contrast, it is unclear whether the ground state  for the Edwards Anderson model is in fact unique in any dimension (see \cite{chatterjee2023spin} for a recent account), although it is strongly believed to be the case in dimension 2. Our proof also uses the concept of \emph{metastates} put forth by Aizenmann and Wehr \cite{AW_90}, which is a crucial concept used to capture the chaotic behaviour of the ground states. From a broader perspective, the ground state of the disordered monomer-dimer model is the solution of a randomized combinatorial optimization problem, in the spirit of \cites{chatterjee_sen_mst,cao,aldous_zeta_2,dey2024random}.

\subsection{Outline of the proof}\label{sec:outline}
One convenient aspect of using tori to obtain limits is that $({\bf J},M,M')$ is  invariant in law with respect to translations of the lattice, which enables us to use ergodic theoretic methods. Following Newman and Stein \cite{NS_01} we consider $M \Delta M'$. It follows immediately from the fact that $M$ and $M'$ are ground states that $M \Delta M'$ cannot have any finite component almost surely. Furthermore, any infinite component of $M \Delta M'$ can be either a (one-ended) infinite path of dimers started from a monomer, or a bi-infinite simple path of dimers.
We first use the celebrated technique of Burton and Keane \cite{BK} to rule out the existence of the one-ended paths (in fact this is a much more straightforward consequence of the amenability of $\dZ^d$, we do not need the full force of the Burton-Keane technique). This in turn rules out the presence of any monomer in $M \Delta M'$.

It remains to rule out bi-infinite path. Here we crucially use the fact that we are in the monomer-dimer setup and define a quantity which we call the \textbf{optimality} of a vertex, denoted $O(v)$ (see \Cref{def:o}). Optimality is  a local function of the weights and has the property that $O(v)<0$ implies that $v$ has to be a monomer in both $M$ and $M'$. Consequently, none of the monomers in any bi-infinite path of $M\Delta M'$ can have negative optimality.

Another concept we borrow from Newman and Stein is the notion of an \textbf{inaccessible} edge, which is the analogue of a super satisfied edge from \cite{NS_01}. Namely an edge is inaccessible if its weight is so high compared to the two monomers adjacent to it that no ground state can occupy it. 

The next notion we introduce the idea of \textbf{flexibility} of a vertex or an edge, which tells us the amount by which we need to change the weight of that vertex or edge in order to change the ground state (the analogue of this idea for Edwards Anderson model was introduced in \cite{NS_01,ADNS_10}). One needs to be careful with this definition in infinite volume limits which is done by taking appropriate weak limits along subsequences borrowing from the notion of metastates going back to \cite{AW_90}. A crucial property of flexibility is that if we change the weight of some $x \in \sf \Sigma$ by $\eps$, the paths obtained by superimposing the new and the old ground states cannot pass through an edge of flexibility strictly bigger than $\eps$.

Let us restrict to the case where the density is positive on all of $\dR$ for simplicity, the other cases need a slight tweak which we don't mention here.
The main argument is now as follows; we restrict to the event that a bi-infinite path $P$ passes through the origin, and fix $\eps>0$ so small so that there are infinitely many edges of flexibilities at least $\eps$ along both directions of $P$. Find the two nearest ones to the origin and let $Q$ be the portion of $P$ between these edges of high flexibility. Next we let $m$ be the infimum of the optimality of the vertices along $P$, which is non-negative as argued above. We further restrict to the event that the origin has optimality at most ${\sf m}+\delta$ with $\delta<\eps/2$ and all the edges adjacent to vertices in $Q$ except the origin are inaccessible. We show that this event has positive probability. Now we decrease the weight of the origin by $2\delta$ to obtain a new collection of weights $\tilde {\bf J}$ and new ground states $\tilde M, \tilde M'$. We prove that $\tilde {\bf J}$ is absolutely continuous with respect to ${\bf J}$ and with some more work $(\tilde {\bf J}, \tilde M, \tilde M')$ is absolutely continuous with respect to $({\bf J}, M,M')$ as well.

The perturbation causes one of two things to happen to the ground state; either it changes, or it doesn't. In the former case, due to topological constraints, a monomer has to appear in $\tilde M \Delta \tilde M'$. This is a contradiction as $M \Delta M'$ cannot have a monomer with positive probability and the law of $\tilde M \gD \tilde M'$ is absolutely continuous with that of $M\gD M'$. The crucial thing we need to rule out for this step is that lowering the weight of the origin does not wipe out the entire bi-infinite path through it.

In the latter case when there is no change to the ground state, we are on an event where the infimum of optimality over all the vertices in the bi-infinite path of $\tilde M \Delta \tilde M'$ passing through the origin is achieved at some vertex, an event which has probability 0 in $M\Delta M'$ because of translation invariance. The contradiction again follows from absolute continuity.

The proof of \Cref{thm:perturbation} follows similar lines, except we are allowed to have finite components in this setting. The proof of \Cref{thm:CLT} uses the powerful Normal approximation technique put forth by Chatterjee \cite{C_Normal_08}, and the crucial input of \Cref{thm:main} which yields a mild form of spatial decorrelation of the ground states.

\subsection{Outline of the paper:} In \Cref{sec:finite} we derive some basic lemmas about the ground state in finite graphs. In \Cref{sec:infinite_ground}, we extend these properties to infinite graph ground states. In \Cref{sec:uniqueness} we prove \Cref{thm:conditional_equality}. In \Cref{sec:perturbation}, we prove \Cref{thm:perturbation}. In \Cref{sec:CLT} we prove \Cref{thm:CLT}. We finish with some discussion about generalizations in \Cref{sec:generalization} and some open questions in \Cref{sec:open}.
\subsection*{Notations:} We denote by $\mathsf V, \mathsf E$ respectively the vertex and edge sets of $\dZ^d$ and by $\mathsf V_n, \mathsf E_n$, the vertex and edge sets of the torus $(\dZ/n\dZ)^d$. Let ${\mathsf \Sigma}  = \mathsf V \cup \mathsf E$ and $\Sigma_n = \mathsf V_n \cup \mathsf E_n$ Let $\Omega_G$ be the set of matchings of a graph $G$.  Let $B_K$ denote the box $[-k,k]^d \cap \dZ^d$.

\section{Perturbation of the Ground State on Finite Graphs}\label{sec:finite}
In this section, we will provide certain basic lemmas that primarily concern the change of the ground state when one or more of the weights are perturbed. The difference between two ground states is clearly encoded in the symmetric difference between them.  {Throughout this section, we assume that $G = (V,E)$ is a finite, connected, simple graph,\ with $\gS = V \cup E$. Sometimes we shall write $w_{u,v}$ to be the weight of $(u,v)$ admitting an abuse of notation}. We also denote by $\Omega_G$ the set of all matchings of $G$. A collection of weights $\{w_{x}\}_{x\in \gS}$ is said to be generic if for every finite set $S \subset \Sigma$ and every collection of integers $\{k_{x}\}_{x\in S}$ not all 0, 
\[
\sum_{x\in S} k_{x}w_{x}\neq 0.
\]
Throughout this section, we assume that the weight collection $W=\{w_x\}_{x \in \gS}$ are deterministic and \textbf{generic}. 
Clearly, the weights being generic guarantees the uniqueness of the ground state, and any collection of i.i.d. continuous random variables are generic almost surely. 
A \textbf{simple, finite path} is a sequence   $(u_1,(u_1,u_2),\ldots, (u_{k-1},u_k), u_k)$ in $\Sigma$ so that and $u_i$s are all distinct vertices. A \textbf{simple loop} is a finite sequence of edges $((u_1,u_2), (u_2,u_3), \ldots (u_{k-1}, u_1))$ so that $u_i$s are all distinct.

Given a finite, simple path $P:=(u_1,(u_1,u_2),\ldots, (u_{k-1},u_k), u_k)$, we can 
partition it into two subsets $P_1:=\{u_1, (u_2,u_3), (u_4,u_5), \ldots, (u_{k-2}, u_{k-1}), u_k\}$ and $P_2:=\{(u_1,u_2),(u_3,u_4), \ldots, (u_{k-1}, u_k)\}$. Call the pair $P_1,P_2$, \emph{complementary matchings of $P$.} Similarly, given a simple finite loop $L := (u_1,u_2, \ldots, u_k= u_1)$, we can decompose its edges into $L_1:=(u_1,u_2), (u_3,u_4), \ldots $ and the remaining $L_2:= (u_2,u_3), (u_4,u_5), \ldots$, and call the pair $(L_1,L_2)$ complementary matchings of $L$. Observe that if one of $L_i$ ( or $P_i$) is in a matching, we can remove every element of $L_i$ (or $P_i$) from it, and add back its complementary pair, to get another valid matching. 
Let $N(v)$ denote the \emph{neighborhood} of $v$ which includes $v$ and all the edges adjacent to $v$. We start  with a few elementary lemmas.
\begin{figure}
    \centering
    \includegraphics[scale=0.7]{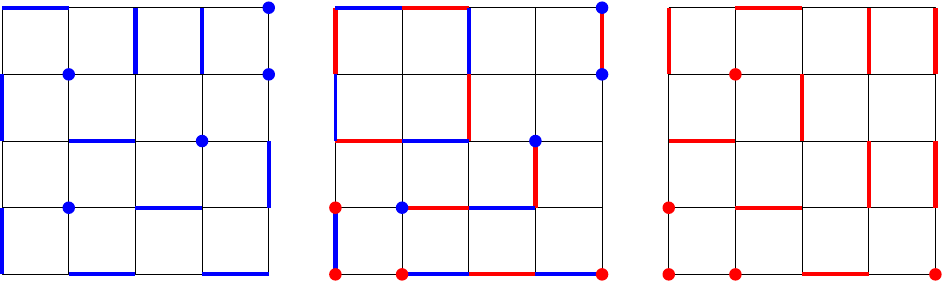}
    \caption{$M_{1}$ is depicted on the left in blue, $M_{2}$ is depicted on the right in red. The symmetric difference $M_{1}\gD M_{2}$ is depicted in the middle.}
    \label{fig:enter-label} 
\end{figure}
\begin{lem}\label{lem:pathsloops}
The symmetric difference of any two matchings $M_{1}$ and $M_{2}$ is a disjoint collection of  simple, finite paths and  simple loops.

Furthermore, in each component, the set of edges and vertices in $M_1$ and those in $M_2$ form a complementary pair.
\end{lem}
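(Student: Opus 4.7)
The plan is to reduce the structure theorem to a degree-count argument in the ``subgraph with monomers'' associated to $M_1 \Delta M_2$. First I would classify each vertex $v \in V$ by its status in the pair $(M_1, M_2)$: namely, whether $v$ is a monomer in both, a monomer in exactly one, or a non-monomer in both (in which case $v$ is necessarily the endpoint of a unique dimer $e_i \in M_i$ for $i = 1,2$, under the convention that a matching covers every vertex). A short case check then yields: if $v$ lies in $M_1 \Delta M_2$ (as a vertex), then $v$ has exactly one edge of $M_1 \Delta M_2$ incident to it — namely the dimer covering $v$ in the matching in which $v$ is not a monomer; and if $v \notin M_1 \Delta M_2$, then $v$ has either $0$ or $2$ edges of $M_1 \Delta M_2$ incident to it, according to whether $e_1 = e_2$ or not.

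From these local observations, view $M_1 \Delta M_2$ as a subgraph of $G$ whose vertex set consists of the monomers of $M_1 \Delta M_2$ together with the endpoints of its edges. Every vertex in this subgraph has degree either $1$ or $2$, the degree-$1$ vertices being precisely the monomers of $M_1 \Delta M_2$. Any finite graph whose vertex degrees all lie in $\{1,2\}$ decomposes into a disjoint union of simple finite paths (with the degree-$1$ vertices as endpoints) and simple cycles, so the first assertion follows.

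For the complementary-matching claim, I would note that at any non-monomer vertex $v$ of a component, the two incident $M_1 \Delta M_2$-edges must come one from $M_1$ and one from $M_2$, since $v$ is the endpoint of exactly one dimer in each matching. Hence the edges along each component alternate between $M_1$ and $M_2$. For a simple cycle this directly produces the bipartition $(L_1, L_2)$ of the paper's definition. For a simple path with monomer endpoint $u_1 \in M_1 \setminus M_2$, the first edge must belong to $M_2$ (as $u_1$ has no $M_1$-edge incident), and the alternation propagates along the path, determining in particular whether $u_k$ is a monomer of $M_1$ or of $M_2$. Grouping the monomer endpoint(s) belonging to $M_i$ together with the $M_i$-edges of the path reproduces precisely $P_i$.

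Conceptually the argument is a straightforward degree count plus an alternation observation, so I do not expect a serious obstacle. The only care needed is parity bookkeeping at the path endpoints, together with the implicit convention — used already in the definition of a monomer-dimer covering — that every vertex of $G$ is covered by each matching. This convention is what ensures that a non-monomer vertex is the endpoint of exactly one dimer in each $M_i$, and hence contributes degree $0$ or $2$ (never $1$) to $M_1 \Delta M_2$.
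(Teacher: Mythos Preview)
Your proposal is correct and follows essentially the same approach as the paper: both arguments classify vertices by their monomer/dimer status in $(M_1,M_2)$, deduce that the edge set of $M_1\Delta M_2$ has maximum degree $2$ with the degree-$1$ vertices being exactly the monomers in the symmetric difference, and conclude the path/loop decomposition from this. Your treatment of the complementary-pair claim via edge alternation is more explicit than the paper's (which simply calls it ``straightforward from the definition''), but the underlying reasoning is the same.
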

\begin{proof}
    This is elementary as for any vertex $v$, if $N(v) \cap (M_1\Delta M_2 ) \neq \emptyset$, then exactly one of the following two cases occur. Either $v$ is occupied by one matching and exactly one edge adjacent to $v $ is occupied by the other; or $v$ is occupied by neither matchings and exactly one edge adjacent to $v$ is occupied by $M_1$ and another different edge is occupied by $M_2$. Thus the set of edges of $M_1 \Delta M_2$ is a subgraph with degree either $0,1$ or $2$. Furthermore if a vertex $v$ has degree 0, both matchings occupy $v$, if it has degree 1, exactly one of them occupy $v$, and if it has degree 2, none of them occupy $v$. It is easy to see that these constraints make every component of $M_1\Delta M_2$ either a simple, finite path or a simple loop. 

    The second assertion is a straightforward from the definition of a complementary pair.
\end{proof}

In light of \Cref{lem:pathsloops}, we call each element  of the disjoint collection of finite, simple paths and loops a \textbf{component} of the symmetric difference between two matchings. 

\begin{lem}\label{lem:obvious}
Let $M$ be the ground state and assume $v \in M$. Suppose we change the weight at $v$ to $\tilde{w}_{v}$, {so that it remains generic and the new ground state is $\tilde{M}$.} Then the symmetric difference $M \gD \tilde{M}$ is either empty or is {a simple, finite path} beginning at the vertex $v$. 
\end{lem}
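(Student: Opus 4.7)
The plan is to use Lemma 2.2 to decompose $M \Delta \tilde{M}$ into components, then show via a swapping argument that every component must involve the vertex $v$, which forces there to be at most one component, and then rule out the case that the component is a loop.

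First I would invoke \Cref{lem:pathsloops} to write $M \Delta \tilde{M}$ as a disjoint union of simple finite paths and simple loops, with the $M$-part and $\tilde M$-part of each component forming a complementary pair. Now suppose, for contradiction, that some component $C$ of $M \Delta \tilde{M}$ does not contain $v$ (either as an isolated vertex of the complementary pair or as an endpoint of a path). Let $(C_M, C_{\tilde M})$ be the complementary matchings of $C$, with $C_M \subset M$ and $C_{\tilde M} \subset \tilde M$. Then $M' := (M \setminus C_M) \cup C_{\tilde M}$ and $\tilde M' := (\tilde M \setminus C_{\tilde M}) \cup C_M$ are valid matchings. Since $M$ is the ground state under $W$ and weights are generic,
\[
\sum_{x \in C_{\tilde M}} w_x > \sum_{x \in C_M} w_x.
\]
On the other hand, since $v \notin C$, neither $C_M$ nor $C_{\tilde M}$ contains $v$, so the weight contributions of $C_M, C_{\tilde M}$ are identical under $W$ and under the modified weight system $\tilde W$. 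Comparing $\tilde M$ with $\tilde M'$ under $\tilde W$ and using that $\tilde M$ is the new ground state, we obtain
\[
\sum_{x \in C_M} w_x = \sum_{x \in C_M} \tilde w_x > \sum_{x \in C_{\tilde M}} \tilde w_x = \sum_{x \in C_{\tilde M}} w_x,
\]
which contradicts the previous inequality. Hence every component of $M \Delta \tilde{M}$ must meet $v$, and since components are disjoint, there is at most one component.

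To finish, if $v \in \tilde M$ then $v \notin M \Delta \tilde M$, so there can be no components at all, i.e., $M = \tilde M$. Otherwise $v \notin \tilde M$, so $v \in M \Delta \tilde{M}$, and the unique component containing $v$ must be a simple finite path: indeed, a simple loop component, by the definition of complementary matchings for a loop, contains only edges (no isolated vertices), so $v$ cannot belong to it. Finally, among the vertices appearing in the complementary pair of a simple path $(u_1,(u_1,u_2),\ldots,u_k)$ only the endpoints $u_1$ and $u_k$ occur as elements of $\Sigma$ in $P_1 \cup P_2$; interior vertices are covered by dimers on both sides. Therefore $v$ must be an endpoint of the path, completing the proof.

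The main (and essentially only) obstacle is the swapping argument ruling out components away from $v$; once that is in place the rest is a clean parsing of which elements of $\Sigma$ can belong to a loop versus a path. Genericity of both $W$ and $\tilde W$ is used to ensure strict inequalities on both sides of the contradiction.
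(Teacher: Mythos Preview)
Your proof is correct and follows essentially the same approach as the paper's: decompose $M\Delta\tilde M$ via \Cref{lem:pathsloops}, then rule out any component not containing $v$ by swapping its complementary matchings to contradict minimality of one of the two ground states. The paper's version is terser---it asserts that ``no component avoids $v$'' is equivalent to the full statement and leaves the deduction that the unique component must be a path with endpoint $v$ implicit---whereas you spell out that last step explicitly.
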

\begin{proof}
Using \Cref{lem:pathsloops}, another way to state the lemma is to claim that there is no component of $M \Delta \tilde{M}$ which do not contain $v$, which is what we shall prove. Take one such component $C$. Since the weight of all the edges and vertices in $C$ is the same for $M$ and $\tilde{M}$, switching between the complementary matchings of $C$ either decreases the weight of $M$ or $M'$, contradicting the minimality of their weights. 
\end{proof}

Prior to completely delving into the issue of perturbations that cause the ground state configuration to change, we state the following notion of \textbf{gauge invariance} for the monomer-dimer model. 
\begin{lem}\label{lem:gauge}
Let $M_{G}$ denote the ground state associated to a system of weights $\{w_{x}\}_{x\in \gS}$. Let $u\in V$ be a vertex, and let $E_{u}$ denote the collection of edges incident to $u$. For a fixed $\gl\in \dR$, consider the new system of weights given by 
\[
w^{u,\gl}_{x}:=\begin{cases}w_{x}+\gl \text{ if $x\in\{u\}\cup E_{u}$ }\\ w_{x} \text{ otherwise}. \end{cases}
\]
The ground state $M_{G}$ remains the same for the new collection of weights. 
\end{lem}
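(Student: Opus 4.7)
The plan is to show that the two Hamiltonians $H$ and $H^{u,\lambda}$ differ by the same constant $\lambda$ on every matching of $G$; since a uniform shift preserves the ranking of matchings by weight, the unique minimizer is unaffected.

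The key combinatorial observation is that in any monomer-dimer covering $M$, the vertex $u$ is covered in exactly one of two mutually exclusive ways: either $u$ is itself a monomer, in which case $u \in M$ and, by the non-overlap condition, no edge of $E_u$ lies in $M$; or $u$ is the endpoint of a single dimer $e \in E_u$, in which case $e \in M$, $u \notin M$, and no other edge of $E_u$ is in $M$. In either case,
\[
|M \cap (\{u\} \cup E_u)| = 1.
\]
Unwinding the definition of $w^{u,\lambda}$, this immediately gives
\[
H^{u,\lambda}(M) = \sum_{x \in M} w^{u,\lambda}_x = H(M) + \lambda \cdot |M \cap (\{u\}\cup E_u)| = H(M) + \lambda,
\]
which is independent of $M$. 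Therefore $\argmin_{M \in \Omega_G} H^{u,\lambda}(M) = \argmin_{M \in \Omega_G} H(M) = M_G$, proving the claim.

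The only subtle point is the identity $|M \cap (\{u\} \cup E_u)| = 1$, which uses the convention implicit in the phrase ``monomer-dimer covering'' that every vertex of $G$ is covered by the matching, either as a monomer or as an endpoint of a dimer. Once this is granted, the gauge invariance is a one-line computation and there is no serious obstacle; moreover genericity of the modified weight collection is not needed here, since identical rankings of matchings transfer uniqueness of the minimizer from the original system to the perturbed one.
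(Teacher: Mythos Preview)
Your proof is correct and follows essentially the same approach as the paper: both arguments observe that every matching contains exactly one element of $\{u\}\cup E_u$, so $H^{u,\lambda}(M)=H(M)+\lambda$ for all $M$, and a uniform shift preserves the minimizer.
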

\begin{proof}
Note that every matching $M\in \gO_{G}$ must either contain $u$ itself or exactly one edge contained in $E_{u}$. Thus, $\forall M\in \gO_{G}$, $H^{u,\gl}(M)=H(M)+\gl$, where $H^{u,\gl}$ is the energy functional defined with respect to the new weights. Thus, although the energies have changed, the energy of every matching has been shifted by $\gl$, and the energy differences remain the same. $M_{G}$ remains to be the ground state. 
\end{proof}
The question of stability of certain vertices and edges with regards to modification of the weights is natural {and central to our analysis,} and we will explore this now. We write $u \sim v$ if $u$ is adjacent to $v$.
\begin{defn}\label{def:o}
    The \textbf{optimality} of $v$, denoted $O(v)$ is defined as \begin{equation}\label{eq:opt}
O(v):=\max_{u:u\sim v} \left(w_{u}+w_{v}-w_{u,v}\right).
\end{equation}
An edge $e  = (u,v)$ is called \textbf{inaccessible} if 
\begin{align}\label{eq:inaccesible}
w_e > w_u + w_v.
\end{align}
\end{defn}

Note that both the  notions in \Cref{def:o}  are completely local which will be useful later in the infinite volume setup. The next lemma justifies the nomenclature.

\begin{lem} Let $v\in V, e \in E$. If $e = (u,v)$ is inaccessible, then $e$ is not in the ground state matching.
If $O(v)<0$, then $v$ must be in the ground state matching. 
\end{lem}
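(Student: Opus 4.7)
The plan is to prove both claims by contradiction: assuming the conclusion fails, I will construct a strictly cheaper matching than $M$ by a purely local modification, contradicting the ground-state property.

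For the first claim, suppose $e=(u,v)$ is inaccessible and $e \in M$. Since $M$ is a matching and $e \in M$, no other element of $M$ is incident to $u$ or $v$, so the configuration
\[
M' := (M \setminus \{e\}) \cup \{u,v\}
\]
is again a valid monomer-dimer covering of $G$. The weight change is
\[
H(M') - H(M) = w_u + w_v - w_e < 0
\]
by the inaccessibility hypothesis \eqref{eq:inaccesible}, contradicting the minimality of $H(M)$.

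For the second claim, suppose $O(v) < 0$. Unpacking \Cref{def:o}, this says $w_u + w_v - w_{u,v} < 0$ for every neighbor $u \sim v$, i.e., \emph{every} edge incident to $v$ is inaccessible in the sense of \eqref{eq:inaccesible}. By the first claim just established, none of these edges belongs to $M$. Since every $M \in \Omega_G$ covers every vertex of $G$ — each vertex is either a monomer or an endpoint of exactly one dimer in $M$, as used in the proof of \Cref{lem:gauge} — the vertex $v$ has no incident dimer in $M$ and must therefore appear in $M$ as a monomer.

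Neither part presents a real obstacle, since both are one-step local swaps; the only point worth flagging is that the second claim genuinely relies on the covering convention for $\Omega_G$, without which a vertex with $w_v > 0$ all of whose incident edges are inaccessible could simply be omitted from an optimal matching and the claim would fail.
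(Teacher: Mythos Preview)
Your proof is correct and follows essentially the same approach as the paper: replace an inaccessible dimer by its two endpoint monomers to lower the energy, and for the second claim observe that $O(v)<0$ makes every edge at $v$ inaccessible so $v$ must be covered by a monomer. Your version is simply more explicit, and your remark about the covering convention is well taken.
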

 
\begin{proof}
If $e = (u,v)$ is inaccessible and is in a  matching, then we can lower the energy by removing it and adding back two monomers at $u$ and $v$. Thus $e$ cannot be in a ground state. If $O(v)<0$ then all the edges adjacent to $v$ are inaccessible, and we apply the previous result.
\end{proof}

\subsection{Excited States}\label{sec:excitedstates}
For any  $S \subset \gS$ we define the external boundary of $S$, denoted $\partial_{\textrm{ext}}S$ as the collection 
\[
\partial_{\textrm{ext}}S:=\{x:x\in S^{c}, x\sim y\in S\}. 
\] 
We will define for any set $S$,
\[
\overline{S}:=S\cup \partial_{\textrm{ext}}S.
\]
Next, we consider configurations $\xi:S\to \{0,1\}$, which are \textbf{valid} in the sense that they are extendable to yield a matching. That is, $\xi$ is said to be valid if $\exists \zeta \in \gO_{G}$ such that $\zeta(x)=\xi(x)$ for all $x\in S$. (For example, we are not allowed to have a vertex  $v$ and all its neighborhood to be in $S$ and empty.) Let $\cV(S)$ be the set of all valid configurations in $\{0,1\}^S$. We define $M_{G,S, \xi}$ to be the ground state with the constraint $M_{G,S,\xi} (x) = \xi(x)$ for all $x \in S$. More precisely,  
\begin{align}
M_{G,S,\xi}:=\argmin\{H(\zeta): \zeta \in \gO_{G},\text{ }\zeta(x)=\xi(x)\text{ }\forall x\in S\}. 
\end{align}
Note that since the weights are generic, the constrained ground state is still unique. We also define 
$$
\Delta H(G,S,\xi,\xi') = H(M_{G,S, \xi'})-H(M_{G,S, \xi})
$$
to be the energy difference between two ground states with constraints $\xi$ and $\xi'$ on $S$. For a given $S$, the energy difference enables us to examine how the ground state changes as $\{w_{x}\}_{x\in S}$ are varied, with all other weights remaining fixed . Consider the following two simple observations:
\begin{itemize}
\item For local configurations $\xi,\xi'$ and $\xi'' \in \{0,1\}^{S}$, 
\begin{equation}
    \Delta H(G,S,\xi,\xi') +\Delta H(G,S,\xi',\xi'') = \Delta H(G,S,\xi,\xi'').\label{eq:E_linear}
\end{equation}
\item $M_{G,S,\xi}$ does not depend on the weights $(w(x))_{x \in S}$. 
\end{itemize}
Using these elementary properties, we can partition $\dR^S$ into sets where $M_G=M_{G,S,\xi}$ as follows. Define
$$
\cI(G,S,\xi) = \{w_S: \Delta H(G,S, \xi, \xi') > 0 \text{ for all }\xi' \in \cV(S), \xi' \neq \xi \}.
$$
Let $B(n,S) = \{w_S \in \dR^S: \Delta H(G,S, \xi, \xi') = 0 \text{ for some }\xi \neq \xi', \xi, \xi' \in \cV(S)\}$.
The following lemma is clear now since the weights are chosen to be generic.
\begin{lem}\label{lem:partition}
    $\{\cI(n,S,\xi)\}_{\xi \in \cV(S)} \cup B(n,S)$ is a partition of $\dR^S$ and furthermore $w_S \not \in B(n,S)$. 
\end{lem}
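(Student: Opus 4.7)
The proof plan is to verify three things: (a) the sets cover $\dR^S$, (b) they are pairwise disjoint, and (c) genericity of $W$ forces $w_S \notin B(n,S)$.

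For the covering step, I would start with an arbitrary $w_S \in \dR^S \setminus B(n,S)$ and produce an explicit $\xi \in \cV(S)$ with $w_S \in \cI(G,S,\xi)$. Since $\cV(S)$ is finite, let $\xi^{*} \in \argmin_{\xi \in \cV(S)} H(M_{G,S,\xi})$. Because $w_S \notin B(n,S)$, we have $\Delta H(G,S,\xi^{*},\xi') \neq 0$ for every $\xi' \neq \xi^{*}$, and by minimality these differences must be strictly positive. Hence $w_S \in \cI(G,S,\xi^{*})$. This uses only the definitions and the fact that $H(M_{G,S,\xi})$ does not depend on the weights $(w_x)_{x\in S}$, which is the second bullet preceding the lemma.

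For disjointness, suppose $w_S \in \cI(G,S,\xi) \cap \cI(G,S,\xi')$ for some $\xi \neq \xi'$ in $\cV(S)$. Then $\Delta H(G,S,\xi,\xi') > 0$ and $\Delta H(G,S,\xi',\xi) > 0$, which contradicts \eqref{eq:E_linear} applied with $\xi'' = \xi$, namely $\Delta H(G,S,\xi,\xi') + \Delta H(G,S,\xi',\xi) = \Delta H(G,S,\xi,\xi) = 0$. Disjointness of each $\cI(G,S,\xi)$ from $B(n,S)$ is then immediate from their definitions, since membership in $\cI$ forces all relevant energy differences to be strictly positive, hence nonzero.

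For the final assertion, suppose toward contradiction that $w_S \in B(n,S)$, so $\Delta H(G,S,\xi,\xi') = 0$ for some $\xi \neq \xi'$ in $\cV(S)$. If the two constrained ground states $M_{G,S,\xi}$ and $M_{G,S,\xi'}$ were equal, they would realize both boundary conditions simultaneously and we would have $\xi = \xi'$ on $S$, a contradiction. Hence $M_{G,S,\xi} \neq M_{G,S,\xi'}$, and writing out
\[
0 \;=\; \Delta H(G,S,\xi,\xi') \;=\; \sum_{x \in M_{G,S,\xi'}} w_x \;-\; \sum_{x \in M_{G,S,\xi}} w_x
\]
expresses $0$ as a non-trivial $\{-1,0,1\}$-linear combination of finitely many $w_x$'s, contradicting the genericity hypothesis on $W$.

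The only conceptual subtlety is the last step, where one must note that equal energies could in principle arise from identical matchings, so the argument hinges on separating the "same matching" case (which contradicts $\xi\neq\xi'$) from the "different matchings" case (which contradicts genericity). Everything else is bookkeeping from the additivity relation \eqref{eq:E_linear} and the fact that $H(M_{G,S,\xi})$ is independent of $(w_x)_{x\in S}$.
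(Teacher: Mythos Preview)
Your argument is essentially what the paper intends (the paper offers no proof beyond ``clear now since the weights are chosen to be generic''), and your verification of pairwise disjointness of the $\cI(G,S,\xi)$'s via \eqref{eq:E_linear}, of the covering of $\dR^S \setminus B(n,S)$ by picking the minimizing $\xi^{*}$, and of $w_S \notin B(n,S)$ from genericity are all correct. One small side remark: it is $M_{G,S,\xi}$, not $H(M_{G,S,\xi})$, that is independent of $(w_x)_{x\in S}$; the energy does depend on $w_S$ through the elements of $S$ with $\xi(x)=1$. This slip is harmless for your covering step, which only needs that $\cV(S)$ is finite so an argmin exists.

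There is, however, a real gap in your claim that each $\cI(G,S,\xi)$ is disjoint from $B(n,S)$. Membership in $\cI(G,S,\xi)$ forces $\Delta H(G,S,\xi,\xi') \neq 0$ for every $\xi' \neq \xi$, but $B(n,S)$ only requires that \emph{some} pair $\xi_1 \neq \xi_2$ satisfy $\Delta H(G,S,\xi_1,\xi_2)=0$, and these may both differ from $\xi$. Concretely, take $G$ a path on four vertices with $S$ the two endpoints: with suitable weights outside $S$ one can arrange that the $(1,1)$-constrained energy is the strict minimum while the $(0,1)$- and $(1,0)$-constrained energies coincide, so $w_S \in \cI(G,S,(1,1)) \cap B(n,S)$. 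Hence the collection is not literally a partition. This appears to be a minor imprecision in the paper's own statement as well; what is actually used downstream (namely \eqref{eq:metastate}) only needs the $\cI$'s to be pairwise disjoint, to cover $\dR^S \setminus B(n,S)$, and $w_S \notin B(n,S)$, all of which you correctly establish.
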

The following equality is clear from the definitions:
\begin{equation}
    M_G = \sum_{\xi \in \cV(S)} M_{G,S,\xi}1_{w_S \in \cI(G,S,\xi)}.\label{eq:metastate}
\end{equation}

The following elementary corollary will be useful later when we discuss absolute continuity of limiting measures on matchings.
\begin{corollary}\label{cor:modification}
    Let $\{w_{x}\}_{x\in \gS}$ be a collection of generic weights and $S\subset \gS$ be a finite subset. Let $f:\dR^{S}\to \dR^{S}$ be a measurable function such that the new weights defined by $\tilde{w}_{x}=f(w_{x})$ for all $x\in S$, all other weights unchanged, remain generic. Then the ground state with respect to the new weights is given by $M_{G,S,\xi}$ where $\xi$ is such that 
    \[f(w_{S})\in \cI(G,S,\xi). 
\]
\end{corollary}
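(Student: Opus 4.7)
The plan is to observe that this corollary is essentially a formal consequence of the decomposition \eqref{eq:metastate} together with the two elementary observations that precede Lemma \ref{lem:partition}, and to verify that all the relevant objects are unchanged when one modifies only the weights on $S$.

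First I would note that, by hypothesis, the perturbed collection $\tilde w$ (which equals $w$ outside $S$ and equals $f(w_S)$ on $S$) is generic, so there is a unique ground state $\tilde M_G$ for $\tilde w$. Lemma \ref{lem:partition} applied to $\tilde w$ then guarantees $\tilde w_S \notin B(n,S)$, so there exists a unique $\xi \in \cV(S)$ with
\[
\tilde w_S \;=\; f(w_S) \;\in\; \cI(G,S,\xi).
\]

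Next I would invoke the key observation that both the constrained ground state $M_{G,S,\xi}$ (as an element of $\{0,1\}^\Sigma$) and the region $\cI(G,S,\xi)\subset \dR^S$ are determined entirely by the weights $(w_x)_{x \in \Sigma\setminus S}$ and are independent of $(w_x)_{x\in S}$. This is precisely the content of the second bullet before Lemma \ref{lem:partition}, combined with the fact that the energy differences $\Delta H(G,S,\xi,\xi')$ depend on $w_S$ only through the $S$-coordinates of $\xi,\xi'$ themselves, so the defining inequalities of $\cI(G,S,\xi)$ use the weights on $S$ as a free variable. In particular these two objects are \emph{the same} whether one computes them with the weight vector $w$ or $\tilde w$, since the coordinates outside $S$ agree.

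Combining these, applying \eqref{eq:metastate} to the weight vector $\tilde w$ yields
\[
\tilde M_G \;=\; \sum_{\xi' \in \cV(S)} M_{G,S,\xi'} \,\mathbf 1_{\tilde w_S \in \cI(G,S,\xi')} \;=\; M_{G,S,\xi},
\]
which is the desired conclusion. There is no real obstacle here — the only point that requires any care is the bookkeeping that $\cI(G,S,\cdot)$ and $M_{G,S,\cdot}$ are functions only of the weights off $S$, so that modifying the weights on $S$ simply moves the point $w_S \in \dR^S$ across a fixed partition rather than changing the partition itself. Everything else is a direct invocation of Lemma \ref{lem:partition} and the representation \eqref{eq:metastate}.
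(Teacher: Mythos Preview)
Your proposal is correct and is precisely the argument the paper has in mind: the corollary is stated without proof as an immediate consequence of the observation that $M_{G,S,\xi}$ and the partition $\{\cI(G,S,\xi)\}_{\xi}$ depend only on $(w_x)_{x\notin S}$, together with \eqref{eq:metastate}. You have simply spelled out the bookkeeping the paper leaves implicit.
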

Another extremely convenient feature, when it comes to taking the infinite volume limits is the tightness of $\gD H(G,S,\xi,\xi')$ in the size of $G$ (which will be applied later to a growing sequence of tori). To that end, we first need the following combinatorial lemma, which essentially says that two different valid configurations in $S$ can be extended to be the same outside a small neighborhood of $S$.

\begin{lem} \label{lem:local_extension}
    Let $\xi, \xi' \in \cV(S)$ which are extendable to $G$. Given any extension  $m_\xi$ of $\xi$, there exists an extension $m_{\xi,\xi'}$ of $\xi'$ such that $m_{\xi} \Delta m_{\xi, \xi'} \subset \partial_{\textrm{ext}} \bar S$. 
\end{lem}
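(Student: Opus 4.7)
The plan is to construct $m_{\xi, \xi'}$ by starting from an arbitrary extension of $\xi'$ and performing local surgeries that confine its symmetric difference with $m_\xi$ to a neighborhood of $S$. Since $\xi'$ is valid, fix any $\bar m \in \Omega_G$ extending $\xi'$ and form $D := m_\xi \Delta \bar m$. By \Cref{lem:pathsloops}, $D$ decomposes as a disjoint union of simple finite paths and simple loops, and on each component the contributions from $m_\xi$ and from $\bar m$ form a complementary pair in the sense introduced before that lemma.

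The first surgery is cancellation: for every component $C$ of $D$ with $C \cap \bar S = \emptyset$, replace the $\bar m$-portion of $C$ by its complementary matching (which coincides with the $m_\xi$-portion on $C$). By \Cref{lem:pathsloops} the result is again a valid matching, and since $C$ does not meet $S$, it still extends $\xi'$. After doing this for all such components, every surviving component of the symmetric difference intersects $\bar S$. The second surgery is truncation: any surviving component extending past $\partial_{\textrm{ext}} \bar S$ is a simple path whose edges alternate between $m_\xi$ and the current $\bar m$. Following such a path outward from $\bar S$, one can pick a vertex $v$ of the correct parity just outside $\bar S$ and swap the outer tail of the path to match $m_\xi$; the only side effect is that $v$ becomes a monomer, which is automatically valid and, being outside $S$, is consistent with $\xi'$. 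Iterating over all long components produces the desired extension $m_{\xi, \xi'}$.

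The main obstacle is making the truncation precise: one must verify that a cutting vertex of the correct parity is always available just outside $\bar S$, and that each swap really is a bounded-range modification rather than triggering a fresh long-range component (this is where \Cref{lem:pathsloops} is used again, to check that a single local swap merely splits or shortens an existing component). A tidier alternative that avoids iteration is to define $m_{\xi,\xi'}$ directly by prescribing it to equal $\xi'$ on $S$, to equal $m_\xi$ outside a two-step neighborhood of $S$, and to make every vertex in the intermediate ``annulus'' a monomer; a short case analysis at the interface, using the validity of $\xi'$ and of $m_\xi$ separately, then shows this global prescription is a genuine matching, and by construction $m_\xi \Delta m_{\xi,\xi'}$ is confined to the desired neighborhood of $\bar S$.
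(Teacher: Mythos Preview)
Your first strategy --- take an arbitrary extension $\bar m$ of $\xi'$, decompose $m_\xi \Delta \bar m$ into alternating paths and loops via \Cref{lem:pathsloops}, cancel the components disjoint from $\bar S$, then truncate each remaining long arm at a vertex of the correct parity near $\partial_{\textrm{ext}}\bar S$ --- is sound and can be completed to a full proof. It is a genuinely different route from the paper's: the paper starts from $m_\xi$ rather than from an extension of $\xi'$, overwrites $m_\xi$ on $S$ by $\xi'$, and then repairs the resulting boundary defects (first attaching a dimer $(v,w)$ with $w\notin S$ to any $v\in S$ left uncovered by $\xi'$, then deleting conflicting dimers and redundant monomers in $\partial_{\textrm{ext}} S \cup \partial_{\textrm{ext}}\bar S$). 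Your approach trades the paper's ad hoc repairs for the alternating-path structure of \Cref{lem:pathsloops}, at the cost of the parity bookkeeping you correctly flag as the ``main obstacle''; both confine the symmetric difference to the same bounded neighborhood of $S$.

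Your ``tidier alternative'', however, has a genuine gap. If $v \in V \cap S$ has $\xi'(v)=0$ and $\xi'(e)=0$ for every $e \in E \cap S$ incident to $v$, then any matching extending $\xi'$ must cover $v$ by a dimer $(v,w)$ with $w \notin S$. Your prescription forces every such $w$ (which lies in the annulus) to be a monomer, so $v$ is left uncovered; the configuration fails to be a matching at $v$ itself, and no interface case analysis repairs this without inserting a dimer into the annulus. This is exactly the defect the paper handles explicitly in its first repair step. A symmetric problem appears at the outer boundary: if $m_\xi$ covers some vertex just outside the annulus by a dimer whose other endpoint lies in the annulus, then declaring that endpoint a monomer while retaining $m_\xi$ outside either double-covers it or leaves its partner uncovered. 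So the all-monomer buffer does not produce a matching without further surgery, at which point you are essentially redoing the paper's construction.
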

\begin{proof}
    Define $m_{\xi, \xi'}$ as follows. We begin with $m_{\xi}$ and replace the configuration for all $x\in S$ with $\xi'$. If at the end of this replacement we are left with a valid matching, we are done. If not, firstly, consider all vertices $v\in V\cap S$ such that $\xi'(v)=0$ and $\xi'(e)=0$ for all $e\in E\cap S$ where $v$ is an end point of $e$.  For every such $v\in S$, we add an edge $(v,w)$ for some $w\notin S$, and $w$ uncovered by any dimer adjacent to $S$. Such a $w$ will always exist, because otherwise $\xi'$ would not be extendable. The introduction of these edges, as well as the prior replacement can result in adjacent dimer pairs or an adjacent monomer dimer pair. Such defects may at most extend to $\partial_{\textrm{ext}}\overline{S}$, and we deal with them now. For every covered edge $e$ connecting a vertex in $\partial_{\textrm{ext}}S$ and $\partial_{\textrm{ext}}\overline{S}$, we perform a regularizing procedure. If both endpoints of $e$ are covered by other monomers/dimers, we remove $e$ from the configuration. If only one endpoint is otherwise covered, we remove $e$ and place a monomer at the now uncovered remaining vertex. Next if $v\in \partial_{\textrm{ext}}{S}\cup\partial_{\textrm{ext}}\overline{S}$, is a vertex that is covered by a monomer as well as an adjacent dimer, we remove the monomer from the configuration.
\end{proof}

Let $M_{G,S,\xi,\xi'}$ be a matching which is defined as in \Cref{lem:local_extension} for $m_\xi = M_{G, \xi}$.  The following trivial bound now holds:
\begin{equation}
   | \Delta H(G,S,\xi,\xi')| \le  \max\{|H(M_{G,S,\xi,\xi'}) - H(M_{G,S,\xi})|, |H(M_{G,S,\xi',\xi}) - H(M_{G,S,\xi'})|\} \le \sum_{x \in \overline{\overline{S} }}|w_e| .\label{eq:energy_diff_tight}
\end{equation}

Let us now focus on the case where $S$ is a single $x\in \gS$. In this case, the partition in \Cref{lem:partition} is simply two intervals of the form $(-\infty, K_{G,x}) \cup (K_{G,x},\infty) \cup \{K_{G,x}\}$ where $K_{G,x}$ is called the \textbf{transition point} at $x$. Using this notation, \eqref{eq:metastate} takes the following simple form:
\begin{equation}
    M_G = M_{G,x,0}1_{w_x > K_{G,x}} + M_{G,x,1}1_{w_x \le  K_{G,x}}\label{eq:metastate_e}
\end{equation}

{
\begin{defn}\label{def:Mx}
For any $x \in \Sigma$, we define \textbf{flexibility} of $x$ to be $F_G(x):=|K_{G,x} - w_x|=|\gD H(G,x,0,1)|$. 
\end{defn}

Both nomenclature of the transition point and the flexibility follows the usage of Newman and Stein \cite{NS_01}. In fact, it is not too hard to see that we can write down the following formula for $K_{G,x}$:
\begin{equation}
    K_{G,x} = H(M_{G,x,0}) - H(M_{G,x,1})+w_x.\label{eq:flexibility}
\end{equation}

We note here several elementary properties of $F_G(x)$, whose proof is straightforward application of the definition, so we skip it.
\begin{lem}\label{lem:flex_elementary}
Fix $x \in \Sigma$.
\begin{enumerate}[i.]
    \item $K_{G,x}$ if a function of $\{w_{y}: y \in \Sigma \setminus \{x\}\}$. 
    \item $x$ is in the ground state if and only if $w_x \le K_{G,x}$.
\end{enumerate}
\end{lem}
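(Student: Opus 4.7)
The plan is to deduce both items directly from the formula \eqref{eq:flexibility} together with the partition identity \eqref{eq:metastate_e} already established in the excerpt.

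For (i), I would observe that by one of the elementary observations made before \Cref{lem:partition}, the constrained ground states $M_{G,x,0}$ and $M_{G,x,1}$ do not depend on $w_x$ (their definition involves only the weights $w_y$ for $y \in \Sigma \setminus \{x\}$). Next I would split into the two cases in the energy functional: since $x \notin M_{G,x,0}$, the weight $w_x$ does not appear in $H(M_{G,x,0})$, so this quantity is a function of $\{w_y : y \neq x\}$; since $x \in M_{G,x,1}$, we can write $H(M_{G,x,1}) = w_x + H'$, where $H'$ depends only on $\{w_y : y \neq x\}$. Plugging into \eqref{eq:flexibility} gives $K_{G,x} = H(M_{G,x,0}) - w_x - H' + w_x = H(M_{G,x,0}) - H'$, which is manifestly a function of $\{w_y : y \neq x\}$ only.

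For (ii), this is essentially a restatement of \eqref{eq:metastate_e}. If $w_x \leq K_{G,x}$, then by \eqref{eq:metastate_e} we have $M_G = M_{G,x,1}$, and by definition $x \in M_{G,x,1}$, so $x$ lies in the ground state. Conversely, if $w_x > K_{G,x}$, then $M_G = M_{G,x,0}$ which by definition excludes $x$. The generic assumption on the weights rules out the equality $w_x = K_{G,x}$, so there is no ambiguity. This gives the desired equivalence.

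There is no real obstacle here; both statements are essentially formal consequences of the definitions. The only small point to be slightly careful about is that \eqref{eq:metastate_e} was stated with a non-strict inequality on the $M_{G,x,1}$ side and a strict one on the $M_{G,x,0}$ side, which is consistent with the convention in the lemma. The result will be used later in the infinite volume setup, where the locality of $K_{G,x}$ in $w_y$, $y \neq x$, is the genuinely useful content.
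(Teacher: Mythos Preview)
Your proposal is correct and matches the paper's intent: the paper simply states that these properties are ``straightforward application of the definition'' and skips the proof entirely. Your argument spelling out the cancellation of $w_x$ in \eqref{eq:flexibility} and reading off (ii) from \eqref{eq:metastate_e} is exactly the intended verification.
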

Now let us argue that perturbing the weight of an edge or a vertex by a quantity smaller than the flexibility of $x\in \gS$ will not change the status of $x$. This lemma justifies the use of the nomenclature `flexibility'.
\begin{lem}\label{lem:perturb}
    Suppose $M$ is the ground state and $x \in \Sigma$. Suppose the weight of some $y \in \Sigma$ is either increased or decreased by $\eps < F(x)$, to get a possibly new ground state $\tilde M$. Then $x \notin M\gD \tilde{M}$.
\end{lem}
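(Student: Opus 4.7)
The plan is to reduce the statement to a continuity estimate for the transition point $K_{G,x}$ (and for $w_x$ itself) under the perturbation, and then use Lemma \ref{lem:flex_elementary}(ii) to conclude. Define the signed gap $\Delta_x := w_x - K_{G,x}$. By Lemma \ref{lem:flex_elementary}(ii), $x \in M$ if and only if $\Delta_x \leq 0$, and by definition $|\Delta_x|=F_G(x)$. So the lemma will follow if we show that, after the perturbation, the new signed gap $\tilde \Delta_x$ differs from $\Delta_x$ by at most $\eps < F_G(x)$, forcing it to have the same sign.

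I would split into two cases depending on whether $y=x$ or $y\neq x$. If $y=x$, then Lemma \ref{lem:flex_elementary}(i) says $K_{G,x}$ does not depend on $w_x$, so $\tilde K_{G,x}=K_{G,x}$, while $\tilde w_x = w_x\pm \eps$. Thus $|\tilde \Delta_x -\Delta_x|=\eps<F_G(x)$, which gives the conclusion.

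The case $y\neq x$ is the only substantive one: here $w_x$ is unchanged but $K_{G,x}$ may move. Using formula \eqref{eq:flexibility}, $K_{G,x}=H(M_{G,x,0})-H(M_{G,x,1})+w_x$, so I need to show that each of the two constrained ground state energies $H(M_{G,x,\xi})$, $\xi\in\{0,1\}$, shifts by a quantity in $[0,\eps]$ under the perturbation (once we fix the sign of the shift). For any matching $M$, the energies under the old and new weights differ by exactly $\eps\,\mathbf{1}(y\in M)$ (assuming $w_y$ is increased by $\eps$; the decreasing case is symmetric). Then a standard sandwich argument gives
\[
\eps\,\mathbf{1}(y\in \tilde M_{G,x,\xi})\;\leq\;\tilde H(\tilde M_{G,x,\xi})-H(M_{G,x,\xi})\;\leq\;\eps\,\mathbf{1}(y\in M_{G,x,\xi}),
\]
where the lower bound uses that $M_{G,x,\xi}$ minimizes $H$ and the upper bound uses that $\tilde M_{G,x,\xi}$ minimizes $\tilde H$. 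Each difference lies in $[0,\eps]$, so subtracting the $\xi=1$ version from the $\xi=0$ version shows $|\tilde K_{G,x}-K_{G,x}|\leq \eps$, hence $|\tilde \Delta_x - \Delta_x|\leq \eps < F_G(x)$, and the sign of $\Delta_x$ is preserved.

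No step here looks genuinely hard; the only thing to be a little careful about is the sandwich bound in Case 2, and in particular checking that the two-sided inequality gives a bound of $\eps$ rather than $2\eps$ on $K_{G,x}$ (it does, because both shifts have the same sign). The rest is bookkeeping.
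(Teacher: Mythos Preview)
Your proof is correct and follows essentially the same approach as the paper: split into the cases $y=x$ and $y\neq x$, bound the change in $K_{G,x}$ by $\eps$, and then invoke Lemma~\ref{lem:flex_elementary}(ii). Your sandwich argument in the $y\neq x$ case is in fact more carefully stated than the paper's version, which argues by casework on whether $y$ lies in $M_{G,x,0}$ and/or $M_{G,x,1}$ but does not explicitly address the possibility that these constrained minimizers themselves change under the perturbation; your two-sided inequality handles that cleanly.
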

\begin{rem}
Note the lemma when applied to $x=y$ simply yields $M\gD\tilde{M}=\emptyset$.
\end{rem}
\begin{proof}
    We observe from  \eqref{eq:flexibility} that changing the weight of a vertex or an edge by $\eps$ changes $K_{G,x}$ by at most $\eps$. Indeed, if $y=x$, $K_{G,x}$ is unchanged. On the other hand, if $y\neq x$ then if $y$ is present in both $M_{G,x,1} $ or $M_{G,x,0}$, or if it is present in neither, $K_{G,x}$ does not change. However, if it is present in one of $M_{G,x,1} $ or $M_{G,x,0}$, $K_{G,x}$ changes by at most $\eps$ and $w_x$ does not change.  Now we can use \Cref{lem:flex_elementary}, item $ii.$ and conclude.
\end{proof}
}

{The next lemma is fairly intuitive, we include it for the sake of exposition. It concerns the robustness of the flexibility with regards to the raising of the weights of edges (or vertices) not selected by the ground state.  
\begin{lem}\label{lem:flex2}
Let $M$ be the ground state, let $S \subset \Sigma $ such that no element of $S$ is  in $M$.  Consider new weights $w'_{y}\ge w_y$ for all $y \in S$. Then the ground state is unchanged and furthermore for any $x\in \Sigma$, $F_G(x)$ for the new weights cannot be strictly smaller than that for the old weights. 
\end{lem}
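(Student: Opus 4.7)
The plan is to exploit the monotonicity of the energy functional under this perturbation. I would write $H$ and $H'$ for the energies with the old and new weights respectively, and extract two elementary observations from the hypotheses: (i) $H'(\zeta) \ge H(\zeta)$ for every matching $\zeta$, since only weights in $S$ were possibly raised; and (ii) $H'(M) = H(M)$, because $M \cap S = \emptyset$. The invariance of the ground state is then immediate: for any matching $\zeta \neq M$, combining (i) with the strict inequality $H(\zeta) > H(M)$ coming from genericity, and then (ii), we get $H'(\zeta) > H'(M)$, so $M$ remains the unique minimizer.

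For the flexibility statement, I would use \Cref{lem:flex_elementary}(ii) together with the definition $F_G(x) = |\Delta H(G,x,0,1)|$ to write $F_G(x) = H(M_{G,x,\xi^\star}) - H(M)$, where $\xi^\star \in \{0,1\}$ is the value opposite to the occupation status of $x$ in $M$. Since the ground state is preserved, the same identity applied to the new weights reads $F'_G(x) = H'(M'_{G,x,\xi^\star}) - H'(M) = H'(M'_{G,x,\xi^\star}) - H(M)$ by (ii), where $M'_{G,x,\xi^\star}$ is the constrained minimizer of $H'$. Using (i) together with the variational characterization of $M_{G,x,\xi^\star}$ as the minimizer of $H$ subject to $\zeta(x) = \xi^\star$,
\[
H'(M'_{G,x,\xi^\star}) \;\ge\; H(M'_{G,x,\xi^\star}) \;\ge\; H(M_{G,x,\xi^\star}),
\]
which yields $F'_G(x) \ge F_G(x)$, as desired.

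I do not anticipate any real obstacle. The one case that warrants a small check is when $x \in S$ itself: then necessarily $x \notin M$ so $\xi^\star = 1$, and $M'_{G,x,\xi^\star}$ is forced to contain $x$, contributing an extra $w'_x - w_x \ge 0$ in the first of the two displayed inequalities, which only strengthens the conclusion. Similarly, genericity of the perturbed weights is never invoked, since the strict inequality $H'(\zeta) > H'(M)$ for $\zeta \neq M$ is inherited from genericity of the original weights via (i).
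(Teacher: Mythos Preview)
Your proof is correct and follows essentially the same approach as the paper, using the identity $F_G(x)=H(M_{G,x,\xi^\star})-H(M)$ together with the monotonicity $H'(\zeta)\ge H(\zeta)$. If anything, your argument for the invariance of the ground state (via the direct chain $H'(\zeta)\ge H(\zeta)>H(M)=H'(M)$) is cleaner than the paper's, which routes the same conclusion through the transition point $K_{G,x}$.
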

\begin{proof}
  If $x\in M$, then $ M = M_{G,x,1}$ and by \Cref{lem:flex_elementary}, item $ii.$ $F_G(x) = K_{G,x} -w_x$. Using the formula \eqref{eq:flexibility} of $K_{G,x}$, clearly  $K_{G,x}$ cannot decrease by the given change of weights and hence $w_x $ remains smaller than $K_{G,x}$ for the new weights. Thus the ground state remain unchanged.

  Note that \eqref{eq:flexibility} also entails that 
  $F_G(x) = H(M_{G,x,0})  - H(M_{G,x,1})$ if $x \in M$ and $F_G(x) = H(M_{G,x,1})  - H(M_{G,x,0})$ otherwise.
  It is easy to see from this formula that $F_{G}(x)$ cannot decrease.
\end{proof}
}
We conclude this section with a lemma which perhaps seems contrived right now, but will be a crucial ingredient of the proof of Theorem \ref{thm:main}. 

\begin{lem}\label{lem:pathmod}

Let $S :=(u_1,(u_1,u_2),\ldots, (u_{k-1},u_k), u_k)$ be a simple, finite path and let $1<j< k$ be an integer such that the following conditions hold
\begin{itemize}
\item The edges of $S$ are alternatingly occupied by dimers corresponding to the ground state $M$.
\item $(u_j,u_{j+1})$ is occupied by a dimer. 
\item   $F_G((u_{k-1},u_k))\geq \eps$ for some $\eps$. 
\end{itemize}
Let $E_S$ be the set of edges adjacent to $(u_2,u_3, \ldots, u_{j-1}, u_{j+1},\ldots, u_{k-1})$ which are not in $S$. Let $\{\tilde w_{x}\}_{x\in \gS}$ denotes a new set of generic weights defined as follows:
\begin{itemize}
    
    \item  $\tilde w_e  = w_e+z_{P}$  for all $e \in E_S$, $z_{S}>0$ chosen to make every edge in $E_S$ inaccessible
    \item $\tilde{w}_{(u_1,v)}={w}_{(u_1,v)}+\eps/2\text{ }\forall v\sim u_{1}$ 
    \item All other weights are unchanged.
\end{itemize}  
Let $\tilde M$ be the ground state with respect to the new weights. Then $\tilde{O}(u_1)=O(u_1)-\eps/2$, and the following dichotomy holds for $M\gD \tilde{M}$. Either
\begin{itemize}
\item $M\gD \tilde{M}=\emptyset$, or,
\item $M\gD \tilde{M}= (u_j,(u_j,u_{j+1}), \ldots, (u_{p-1}, u_p), u_p)$ for some $p \le k-1$.
\end{itemize}

\end{lem}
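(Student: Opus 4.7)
The identity $\tilde O(u_1)=O(u_1)-\eps/2$ is immediate from \eqref{eq:opt}: each term $w_v+w_{u_1}-w_{(u_1,v)}$ with $(u_1,v)\notin E_S$ drops by exactly $\eps/2$ under the perturbation (vertex weights being unchanged), while any $E_S$-edge incident to $u_1$ is pushed far below the others by $z_P$ and cannot attain the maximum. For the dichotomy, the first ingredient is to show that $E_S$ is disjoint from both $M$ and $\tilde M$. The alternating structure of $M$ along $S$ combined with $(u_j,u_{j+1})\in M$ implies that every vertex in the list $u_2,\dots,u_{j-1},u_{j+1},\dots,u_{k-1}$ is already covered by a dimer of $S\cap M$, so $E_S\cap M=\emptyset$; the inaccessibility condition on $z_P$ guarantees $E_S\cap\tilde M=\emptyset$ as well. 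In particular, $M\gD\tilde M$ never meets $E_S$, which forces every internal vertex of $S$ (other than $u_j$) to be covered by $S$-edges in both matchings, so the trace of $M\gD\tilde M$ on the neighborhood of $S$ is a subpath of $S$.

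The second ingredient is that the edge $(u_{k-1},u_k)$ keeps its status. Since $E_S\cap M=\emptyset$, \Cref{lem:flex2} lets us raise the $E_S$-weights without changing the ground state and without decreasing any flexibility, so $F((u_{k-1},u_k))\ge\eps$ in the intermediate weight system. Passing from this intermediate system to $\tilde w$ requires only the $\eps/2$ raise at edges incident to $u_1$; iterating \Cref{lem:perturb} (and using that at most one edge incident to $u_1$ can lie in the ground state at any instant, so the flexibility bound is not drained below $\eps/2$ by simultaneous perturbations), we conclude $(u_{k-1},u_k)\notin M\gD\tilde M$.

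Finally, any component of $M\gD\tilde M$ must meet a perturbed element (by the standard genericity-plus-optimality argument underlying \Cref{lem:obvious}, switching a single component and comparing energies under the two weight systems); combined with the exclusion of $E_S$ and the freezing of $(u_{k-1},u_k)$, the only non-trivial possibility is a forward shift of the dimer $(u_j,u_{j+1})$ along $S$, namely $\tilde M$ releases $u_j$ (either as a monomer or into a non-$S$ neighbor, whose incident edges are left accessible because $u_j$ is deliberately excluded from the index list of $E_S$), which cascades through the alternating pattern until it terminates at some $u_p$ with $p\le k-1$. This yields precisely $M\gD\tilde M=(u_j,(u_j,u_{j+1}),\dots,(u_{p-1},u_p),u_p)$; otherwise $M=\tilde M$. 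The main obstacle is the careful iteration in the second ingredient: tracking the flexibility of $(u_{k-1},u_k)$ edge-by-edge under multiple simultaneous $\eps/2$-perturbations at $u_1$, so that the bound $\ge \eps/2$ is maintained throughout and \Cref{lem:perturb} remains applicable at each step.
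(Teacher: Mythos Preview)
Your first two ingredients are essentially fine. In particular, the flexibility claim can be justified directly rather than by ``iterating \Cref{lem:perturb}'': since any matching contains at most one edge incident to $u_1$, raising all such edges by $\eps/2$ shifts $H(m)$ by at most $\eps/2$ for every matching $m$, hence shifts each of $H(M_{x,0}),H(M_{x,1})$ by at most $\eps/2$, and the flexibility of $x=(u_{k-1},u_k)$ drops by at most $\eps/2$. So the obstacle you flag is not the real one.

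The genuine gap is in your final paragraph. First, the conclusion should read $u_1$, not $u_j$: the statement carries a typo (compare \Cref{lem:perturb_as_event} and the paper's own proof, both of which produce a path beginning at $u_1$). Your sentence ``$\tilde M$ releases $u_j$'' is a non sequitur, since the perturbed weights (after excluding $E_S$) are precisely the edges at $u_1$, so any component of $M\gD\tilde M$ must pass through $u_1$, not begin at $u_j$. Second, and more importantly, you never rule out the component leaving $S$ through a non-$S$ edge at $u_1$. Nothing in your argument prevents $\tilde M$ from covering $u_1$ by some $(u_1,v)$ with $v\neq u_2$; in that case $M\gD\tilde M$ contains both $(u_1,u_2)$ and $(u_1,v)$ and escapes $S$ along $(u_1,v)$, contradicting the asserted dichotomy. (Your own parenthetical about $u_j$ being released ``into a non-$S$ neighbor'' is the same leak one index over, and is likewise incompatible with the conclusion $M\gD\tilde M\subset S$.)

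The paper avoids this with the gauge transformation (\Cref{lem:gauge}): add $\eps/2$ to $w_{u_1}$ and to every edge at $u_1$ (ground state unchanged), then raise $E_S$ (ground state unchanged by \Cref{lem:flex2}, flexibilities not decreased); the passage from this intermediate system to $\tilde w$ is then a \emph{single} decrease of $w_{u_1}$ by $\eps/2$. Now \Cref{lem:obvious} applies verbatim: $M\gD\tilde M$ is empty or a single path with endpoint $u_1$, which then enters $S$ via $(u_1,u_2)$ and is funneled along $S$ by the inaccessibility of $E_S$. Equivalently, for every matching $m$ one has $\tilde H(m)-H^{\mathrm{int}}(m)=(\eps/2)\mathbf 1_{u_1\notin m}$, so $\tilde H$ differs by a constant from the energy obtained by lowering $w_{u_1}$ by $\eps/2$; this is exactly the missing observation that forces $u_1$ to become a monomer in $\tilde M$ whenever $M\neq\tilde M$.
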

\begin{proof}
The order of changing the weights in this context does not matter as the ground state is unique. We first apply a gauge transformation at $u_{1}$, and add $\eps/2$ to the weights at ${u_{1}}$ as well as all adjacent edges. This of course preserves the ground state by Lemma \ref{lem:gauge}. We next change the weights for all the edges in $E_{S}$. We choose
\begin{align}\label{eq:zs}
z_{S}:=\max_{(u_{i},v)\in E_{S}}|w^{u_{1},\eps/2}_{(u_{i},v)}-w^{u_{1},\eps/2}_{u_{i}}-w^{u_{1},\eps/2}_{v}|+V,
\end{align}
where $V$ is a uniform $[0,1]$ random variable, and define $\tilde{w}_{e}$ for this choice of $z_{S}$. $V$ is included so that the system of weights $\{\tilde{w}_{x}\}$ remains generic. It is a consequence of Lemma \ref{lem:flex2} that the ground state remains unchanged, and it is immediately clear that all $e\in E_{S}$ are inaccessible as per \eqref{eq:inaccesible}. Next, we decrease the weight of $u_{1}$ by $\eps/2$. The fact that $\tilde{O}(u_1)=O(u_1)-\eps/2$ is trivial and follows directly from the definition of optimality \eqref{eq:opt}. Since  we are changing the weight of the vertex $u_1$, we know that if $M\gD \tilde{M} \neq \emptyset$, $u_1\in M\gD \tilde{M}$, and furthermore $M\gD \tilde{M}$ must be a simple, finite path beginning at $u_1$ by \Cref{lem:obvious}. We denote this path by $Q$. Since $(u_1,u_2)\in M$, we must have that $u_1\in \tilde{M}$ and thus the path $Q$ begins at $u_1$ and contains the edge $(u_1,u_2)$. Now, none of the other edges that are adjacent to $u_2$ can be in $M\gD M'$, since the adjacent weights are unaffected and they remain inaccessible. Therefore, the path $Q$ must proceed along the path $S$. Finally, by Lemma \ref{lem:perturb}, $(u_{k-1}, u_k)\notin Q$ since $F_G((u_{k-1}, u_k))\geq \eps$. Thus, $Q$ must terminate before reaching $u_{k-1}$. This completes the proof. 
\end{proof}
\section{Ground State on Infinite Graphs and Modifications}
\subsection{Defining the ground state}\label{sec:infinite_ground}
In this section go back to the Euclidean lattice $\dZ^d$ and their approximations by tori $\dT_n^d$. We will use the notations from \Cref{sec:finite}, with $G=\dT_n^d$. For notational simplicity we will replace $\dT_n^d$ by simply $n$. We will prove \Cref{thm:main} for weights with \textbf{good} distribution, and will remark later how this can be generalized. 
\begin{defn}\label{def:good}
Let $J$ be a random variable where $\{x:p(x)>0\}\supseteq (\beta,\infty)$ where $\beta\in \{-\infty\}\cup \dR$. Let $g_{z}$ denote the Radon-Nikodym derivative of the distribution $X+z$ with respect to that of $X$ for some fixed $z>0$, that is $g_{z}(x)=p(x-z)/p(x)$. We say that the distribution of $J$ is good if $p(\cdot)$ is continuous and there is some $\ga>1$ such that 
\begin{align}\label{eq:good}
C(z,\ga)=\int_{\gb}^{\infty}\bigl(g_{z}(x)\bigr)^{\ga}p(x)dx
\end{align}
is finite and continuous in $z$. 
 \end{defn}
To see that the goodness condition is rather broad, observe that it applies to Gaussian, Exponential and several families of power-law distributions, covering the spectrum in terms of tail. We do not know if \Cref{thm:main} holds for any nonatomic weight distribution, and we do not pursue such avenues in this article.

A collection of i.i.d.\ random variables coming from a good distribution is of course, almost surely generic. Given a matching $M$ of $\dZ^d$, its energy is now undefined. Nevertheless given two matchings $M$ and  $M'$ such that $M\Delta M'$ is finite, we can define the energy difference between $M$ and $M'$ as $H(M) - H(M'):= \sum_{x \in M \setminus M'} J_x - \sum_{y \in M'\setminus M}J_y$. We say $M$ is a \textbf{ground state of $\boldsymbol J$} if there is no other matching $M'$ such that $M\Delta M'$ is finite and $H(M) - H(M') >0$.

One way to construct ground states is to consider the a.s. unique ground state $M_n$ of  ${\boldsymbol J}_n := (J_x)_{x\in \Sigma_n}$. Let $\tilde M_n$ be the periodic extension of $M_n$ to all of $\dZ^d$. Then the collection $\{({\boldsymbol J}_n,\tilde{M}_n)\}_{n \ge 1}$ form a tight collection of random variables with range $\dR^{\gS} \times \Omega_{\dZ^d}$. For any subsequential weak limit $ (\boldsymbol J', M)$, the following properties hold:
\begin{itemize}
\item $\boldsymbol J'$ has the same law as $\boldsymbol J$, 
\item $M$ is a ground state matching with respect to $\boldsymbol J'$ almost surely, 
\item The law of $(\mvJ',M)$ is invariant with respect to translations of $\dZ^d$. 
\end{itemize}
The final item above holds simply because we took limits along tori. See \Cref{sec:generalization} about other possibilities.

Recall the definition of the excited states from \Cref{sec:excitedstates} and the notation $(M_{n,S,\xi})$ from \Cref{eq:metastate}. Let $\bf M_{n}$ denote the vector $$(M_{n,S,\xi})_{S \subset \Sigma, |S | <\infty, \xi \in \cV(S)}.$$ Of course, this includes $M_n$ as a marginal, corresponding to $S=\emptyset$. Let $\Delta H(n,S,\xi,\xi')$ be as in \eqref{eq:energy_diff_tight} (but with $G = \dT_n^d$). Let $\Delta \mvH$ be the vector $$ \Delta \mvH := (\Delta H(n,S,\xi,\xi'))_{S \subset \Sigma, |S| < \infty, \xi,\xi' \in \cV(S)}.$$ 
Here $\cV(S)$ is the set of valid configurations on $S$ in $\dZ^d$. We point out here that a configuration is valid in $\dT_n^d$ for a large enough $n$ if and only if it is valid in $\dZ^d$, see \Cref{lem:local_extension}. Since we are only interested in the limit, this slight abuse of notation is unambiguous.

Tightness of $\Delta \mvH$ is guaranteed by \eqref{eq:energy_diff_tight}, and hence we can take a further subsequential limit such that the vector $(\mvJ_{n},\mvM_{n}, \Delta \mvH_n)$ weakly converges to $(\mvJ,\mvM, \Delta \mvH)$. We emphasize that the limit at this point depends on the subsequence chosen, although it is not reflected in the notation. We will use different notations for different subsequential limit in later sections.

For later use, we list the following observations:  
\begin{itemize}
    \item The vector $(\bf J, \bf M)$ includes the limit $M$ as a marginal by taking $S = \emptyset$.
\item The flexibility vector $(F_n(x))_{x \in \sf \Sigma_n}$ is a continuous function of $(\mvJ_{n},\mvM_{n}, \Delta \mvH_n)$ and hence converges in law, call the limit $(F(x))_{x \in \sf \Sigma}$.
    
    \item For every $S, \xi$ as above, $M_{S, \xi}$ is independent of $J_S$ since this is true in finite volume. Using this fact and the continuity of the distribution of $J_x$,  we conclude that for every $x \in \gS$, 
    \begin{equation}
       \dP(F(x) >0 ) = 1  \label{eq:flex_0}.
    \end{equation}

    \item By an application of the Skorokhod representation theorem and Fatou's lemma, the infinite volume version of \Cref{eq:E_linear} is also true almost surely.
   Now observe that since the weights come from a continuous distribution, $$\{{\mvJ}: \Delta H(S, \xi, \xi') \neq  0 \text{ for all $S\subset \Sigma$}, |S | <\infty, \xi , \xi' \in \cV(S), \xi \neq \xi'\}$$ is a set of probability $1$. This allows us to define the following set:
    \begin{equation}
\cI(S,\xi) = \{J_S: \Delta H(S, \xi, \xi') < 0 \text{ for all }\xi' \in \cV(S) \} \label{eq:partition_infinite_volume}.
    \end{equation}
    which partitions a set of probability 1. Consequently, $(1_{J_S \in \cI(S, \xi)})_{S \subset \Sigma, |S |<\infty, \xi \in \cV(S)}$ is an almost surely 
   continuous function of $(\mv J, \mv M, \Delta \mvH)$.  Thus, \eqref{eq:metastate} also translates in the infinite volume to the following equality which holds almost surely:
\begin{equation}
    M = \sum_{\xi \in \cV(S) }M_{S, \xi}1_{J_S \in I(S, \xi)}\label{eq:metastate_infinite}
\end{equation}
\end{itemize}

It is a standard measure theoretic fact that the conditional law of ${\bf M}$ given an a.s. sample of ${\bf J}$ exists, call it $\mu_{{\bf J}}$ (the regular conditional probability exists for random variables on Borel spaces, see \cite{Durbook}). It is also standard that we can write ${\bf M} = \Phi({\bf J}, U)$ where $\Phi$ is a measurable function and $U$ is a Uniform random variable independent $\mvJ$ (since we are working on Borel spaces). Since $M_{S,\xi}$ is independent of $J_S$, we can actually write $M_S = \phi_{S,\xi}(J_{S^c}, U)$. Applying \eqref{eq:metastate_infinite}, we conclude that
\begin{align}
    M=\Phi({\bf J}, U) & = \sum_{\xi \in \cV(S) }\phi_{S,\xi}(J_{S^c}, U)1_{J_{S} \in \cI(S, \xi)}.  \label{eq:function_M}
\end{align}
For a given finite set $S$, by the joint convergence of $(\mvJ_{n},\mvM_{n})$, Corollary \ref{cor:modification} extends in a straightforward manner to the infinite lattice setting.

    

\begin{corollary}\label{cor:contmodification}
Suppose $S \subset \sf \Sigma_n$ is finite.
    Let $f:\dR^{S}\to \dR^{S}$ be a continuous function, and consider a new system of weights $\tilde \mvJ_n$ such that  $\tilde{\mvJ}_{n,S}=f(\mvJ_{n,S})$ and $\tilde{\mvJ}_{n,\Sigma_n \setminus S}=\mvJ_{n,\Sigma_n \setminus S}$. Similarly define $\tilde \mvJ$ such that  $\tilde{\mvJ}_{S}=f(\mvJ_{S})$ and $\tilde{\mvJ}_{\Sigma \setminus S}=\mvJ_{\Sigma \setminus S}$. Let $\tilde{M}_{n}$ denote the ground state with respect to $\tilde{\mvJ}_{n}$. Define $\tilde M =M_{S,\xi}$ where $\xi$ is such that $\tilde \mvJ_{S}\in \cI(S,\xi) $. In other words
    \[
    \tilde{M}=\sum_{\xi\in \cV(S)}M_{S, \xi}1_{\tilde{J}_{S}\in \cI(S,\xi)}.
    \]
    Then 
    $$
    (\mvJ_n,\tilde \mvJ_n, \tilde M_n, \mvM_n, \Delta \mvH_n) \xrightarrow[]{ }(\mvJ, \tilde \mvJ, \tilde M, \mvM, \Delta \mvH) 
    $$
   in law where the limit can be taken along any subsequence along which $( \mvJ_n, \mvM_n, \Delta \mvH_n) $ converges. Consequently, we can write
   \begin{equation}
       \tilde{M}=\Phi(\tilde {\mvJ},U)=\sum_{\xi\in \cV(S)}\phi_{S,\xi}(\tilde \mvJ_{S^{c}},U)1_{\tilde{J}_{S}\in \cI(S,\xi)}.\label{eq:function_u}
   \end{equation}
   for some measurable functions $\phi_{S,\xi}$, $\Phi$ and $U \sim $ Unif $[0,1]$.
\end{corollary}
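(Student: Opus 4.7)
The plan is to lift the finite-volume identity of \Cref{cor:modification} by passing to a Skorokhod coupling. Along the given subsequence, invoke the Skorokhod representation theorem to realize the convergence $(\mvJ_n, \mvM_n, \Delta \mvH_n) \to (\mvJ, \mvM, \Delta \mvH)$ almost surely on a common probability space; continuity of $f$ then yields $\tilde \mvJ_n \to \tilde \mvJ$ almost surely. \Cref{cor:modification} gives the identity
\[
\tilde M_n = \sum_{\xi \in \cV(S)} M_{n, S, \xi}\, 1_{f(\mvJ_{n,S}) \in \cI(n, S, \xi)}
\]
for every $n$, so the task reduces to passing to the limit term by term.

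The key structural observation is the affine dependence of $\Delta H(n, S, \xi, \xi')$ on $w_S$: since $M_{n, S, \xi}$ agrees with $\xi$ on $S$ while its complement outside $S$ is determined only by $w_{\Sigma_n \setminus S}$, one may write $H(M_{n, S, \xi}) = \sum_{x \in S} \xi(x) w_x + C_n(\xi)$ with $C_n(\xi)$ independent of $w_S$. Consequently, evaluating $\Delta H(n, S, \xi, \xi')$ at weights $f(\mvJ_{n,S})$ on $S$ differs from its value at weights $\mvJ_{n,S}$ by the explicit linear correction $\sum_{x \in S}(\xi'(x) - \xi(x))\bigl(f(\mvJ_{n,S})_x - J_{n,x}\bigr)$, and both pieces converge almost surely under the Skorokhod coupling and continuity of $f$. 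Continuity of the weight distribution then ensures that all limiting energy differences are almost surely nonzero, so the signs stabilize for large $n$ and $1_{f(\mvJ_{n,S}) \in \cI(n, S, \xi)} \to 1_{\tilde \mvJ_S \in \cI(S, \xi)}$ almost surely.

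Combining this indicator convergence with the almost sure convergence $M_{n, S, \xi} \to M_{S, \xi}$ (a component of $\mvM_n$) gives $\tilde M_n \to \tilde M$ almost surely, which upgrades to the claimed joint convergence in law along the subsequence. For the explicit representation \eqref{eq:function_u}, substitute the identity \eqref{eq:function_M}, namely $M_{S, \xi} = \phi_{S, \xi}(\mvJ_{S^c}, U)$ with $U$ uniform and independent of $\mvJ$; since $\tilde \mvJ_{S^c} = \mvJ_{S^c}$ by construction, the substitution produces \eqref{eq:function_u} exactly. The main obstacle I anticipate is the non-degeneracy step above: one must rule out almost surely that $\tilde \mvJ_S$ lands on the boundary of the partition $\{\cI(S, \xi)\}$, and this is precisely where continuity of $f$ together with non-atomicity of the weight law become essential.
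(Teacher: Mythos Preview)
Your approach coincides with the paper's: both invoke the finite-volume identity of \Cref{cor:modification} and pass to the limit by recognizing $\tilde M_n$ as an almost surely continuous function of $(\mvJ_n,\mvM_n,\Delta\mvH_n)$. The paper asserts this a.s.\ continuity in one line and appeals to continuous mapping, whereas you make the mechanism explicit via the affine dependence of $\Delta H(n,S,\xi,\xi')$ on $w_S$; that affine structure is exactly the content behind the paper's assertion, and your derivation of \eqref{eq:function_u} by substituting \eqref{eq:function_M} is identical to what the paper intends.

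You are right that the non-degeneracy step (ruling out $\tilde\mvJ_S$ landing on the boundary of the partition) is where the argument has teeth, and isolating it is sharper than the paper's own proof, which glosses over the point. One caution on your proposed resolution: continuity of $f$ plus non-atomicity of the weights is not sufficient for \emph{arbitrary} continuous $f$ --- take $f$ constant, so that $\langle \xi'-\xi, f(J_S)\rangle$ is deterministic, and you would then need the $J_S$-free constant $C(\xi,\xi')$ in your affine decomposition to be non-atomic, which has not been established. What actually closes the gap is that in every downstream application of this corollary the perturbation is a shift $J_x\mapsto J_x+c_x$ (with $c_x$ depending on $J_{S^c}$ and possibly an auxiliary uniform), so $\langle \xi'-\xi, f(J_S)\rangle$ retains a continuous conditional law given $J_{S^c}$, and independence of $C$ from $J_S$ finishes the argument exactly as for the unperturbed $\Delta H$. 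Your plan is correct; just tighten the non-degeneracy claim to the class of perturbations actually in play rather than asserting it for general continuous $f$.
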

The point of the above corollary is that we can write $\tilde M$ as the \emph{same} function of $(\tilde {\mv J}, U)$ as that in \eqref{eq:function_M} which will allow us to conclude absolute continuity of $(\tilde {\mvJ}, \tilde M)$ with respect to $(\mvJ, M)$ from absolute continuity of $\tilde \mvJ$ with respect to $ \mvJ$.
\begin{proof}
    Observe that $\tilde \mvJ_n$ is a continuous function of $\mvJ_n$. Furthermore, by \Cref{cor:modification}, $\tilde M_n$ satisfies
    \begin{equation*}
    \tilde{M_n}=\sum_{\xi\in \cV(S)}M_{n,S, \xi}1_{\tilde{J}_{n,S}\in \cI(n,S,\xi)}.
    \end{equation*}
    This ensures that $\tilde M_n$ is an a.s. continuous function of $(\mvJ, \mvM, \Delta \mvH)$.
 Combining these two facts, the corollary is immediate.   
\end{proof}

\begin{lem}\label{lem:perturb_as_event}
Let $S = (u_1,(u_1,u_2),\ldots, (u_{k-1},u_k), u_k)$, $E_S,u_j$  be as in \Cref{lem:pathmod}. Let $\cR(S, \eps)$ denote the following event
\begin{itemize}
    \item The edges of $S$ are alternatingly occupied by dimers corresponding to the ground state $M$,
    \item $(u_j, u_{j+1})$ is in $M$,
    \item $F((u_{k-1},u_k)) \ge \eps$
\end{itemize}
Let $z_{S}:=\max_{(u,v)\in E_{S}}|J_{(u,v)}-J_{u}-J_{v}|+V$ as in \eqref{eq:zs}, where $V$ is an independent uniform $[0,1]$ random variable. Define the weights $\mvJ^{S,\eps}$ as follows: \[J^{S,\eps}_{e}=J_{e}+z_{S}\cdot1_{\cR(S,\eps)}\text{ }\forall e\in E_{S},\text{ }J^{S,\eps}_{(u_{1},v)}=J_{(u_{1},v)}+\frac{\eps}{2}\cdot 1_{\cR(S,\eps)} \text{ }\forall v\sim u_{1},\]
all other weights remaining unchanged. Let $O^{S,\eps}$ denote relevant optimalities computed with respect to $\mvJ^{S,\eps}$. Let \[M^{S,\eps}:=\sum_{\xi \in \cV(A) }M_{A, \xi}1_{J^{S,\eps}_A \in I(A, \xi)},\] 
where $M_{A,\xi}$ is as in \eqref{eq:metastate_infinite} and $A=E_{S}\cup \{u_{1}\}$. Consider the events 
\begin{align*}
\cA_{1}(S,\eps)&:=\biggl\{O^{S,\eps}(u_1)=O(u_1)-\frac{\eps}{2}\biggr\},\\
\cA_{2}(S,\eps)&:=\biggl\{\text{All edges $e\in E_{S}$ are inaccessible} \biggr\}, \\\cA_{3}(S,\eps)&:=\bigl\{M\gD M^{S,\eps} =\emptyset \text{ or } M\gD M^{S,\eps} =Q\subset (u_1,(u_1,u_{2}, \ldots, (u_{p-1},u_p),u_p)) \text{ for some $p \le k-1$}\},
\end{align*}
where $Q$ is a simple finite path. The event 
\[
\cA(S,\eps):=\cA_{1}(S,\eps)\cap\cA_{2}(S,\eps)\cap\cA_{3}(S,\eps)
\]
holds with probability $1$ given $\cR(S,\eps)$, that is 
\[
\dP\bigl(\cA(S,\eps)|\cR(S,\eps)\bigr)=1. 
\]
 
\end{lem}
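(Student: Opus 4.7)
The plan is to verify each of the three constituent events $\cA_1(S,\eps)$, $\cA_2(S,\eps)$, and $\cA_3(S,\eps)$ separately given $\cR(S,\eps)$. The first two follow by direct unpacking of the definitions of optimality and inaccessibility, while $\cA_3$ is the infinite-volume lift of \Cref{lem:pathmod}, established by torus approximation combined with \Cref{cor:contmodification}.

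For $\cA_1$, only the edges $(u_1,v)$ with $v\sim u_1$ have their weights shifted, by $+\eps/2$, while the vertex weights $J_{u_1}$ and $J_v$ are untouched, so
\[O^{S,\eps}(u_1)=\max_{v\sim u_1}\bigl(J_v+J_{u_1}-J_{(u_1,v)}-\eps/2\bigr)=O(u_1)-\frac{\eps}{2}\]
holds surely on $\cR(S,\eps)$. For $\cA_2$, the choice of $z_S$ in \eqref{eq:zs} and the fact that $V>0$ a.s.\ give, for every $(u,v)\in E_S$,
\[J^{S,\eps}_{(u,v)}-J_u-J_v=(J_{(u,v)}-J_u-J_v)+z_S \ge (J_{(u,v)}-J_u-J_v)+|J_{(u,v)}-J_u-J_v|+V \ge V>0,\]
so every edge of $E_S$ is inaccessible a.s.\ on $\cR(S,\eps)$.

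For $\cA_3$, I would transfer \Cref{lem:pathmod} from the torus to the infinite-volume limit. Fix a subsequence along which $(\mvJ_n,\mvM_n,\Delta\mvH_n)$ converges jointly in law to $(\mvJ,\mvM,\Delta\mvH)$, and introduce the finite-volume analogs $\cR_n(S,\eps)$ (using $M_n$ and $F_n$ in place of $M$ and $F$) and $\mvJ^{S,\eps}_n$ (given by the same perturbation formula on $\dT_n^d$ with the same independent uniform $V$). On each torus, \Cref{lem:pathmod} applies surely on $\cR_n(S,\eps)$ and yields the dichotomy that $M_n\Delta M_n^{S,\eps}$ is either empty or a simple finite path along $S$ terminating at some $u_p$ with $p\le k-1$. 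By \Cref{cor:contmodification}, the modified finite-volume ground states converge jointly with the weights, giving $(\mvJ_n, \mvM_n, \Delta\mvH_n, M_n^{S,\eps}) \to (\mvJ, \mvM, \Delta\mvH, M^{S,\eps})$ in law. Because $\cA_3$ is determined by the restrictions of $M$ and $M^{S,\eps}$ to a bounded neighborhood of $S$, which is an a.s.\ continuous function at the level of the joint law on matchings, the finite-volume dichotomy passes to the limit and yields $\cA_3$ a.s.\ on $\cR(S,\eps)$.

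The main obstacle is that the indicator $1_{\cR(S,\eps)}$ built into the definition of $\mvJ^{S,\eps}$ makes the perturbation map $\mvJ\mapsto\mvJ^{S,\eps}$ discontinuous across $\partial\cR(S,\eps)$, whereas \Cref{cor:contmodification} requires a continuous modification. Continuity holds \emph{within} $\cR(S,\eps)$, where $z_S$ is a continuous function of the surrounding weights and $V$; what remains is to show that the boundary of $\cR(S,\eps)$ has probability zero. This is the case because $\cR(S,\eps)$ is the intersection of two events: a specified matching pattern of $M$ on the finite set $S$, whose boundary vanishes by the a.s.\ unique determination of $M$ from $\mvJ$ via \eqref{eq:metastate_infinite} and \eqref{eq:flex_0}; and $\{F((u_{k-1},u_k))\ge\eps\}$, whose boundary $\{F=\eps\}$ has measure zero because $F$ inherits a continuous distribution from the goodness of the weight density. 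Together these give $1_{\cR_n(S,\eps)}\to 1_{\cR(S,\eps)}$ in probability, allowing the finite-volume statements to pass cleanly to the limit and yield $\dP(\cA(S,\eps)\mid\cR(S,\eps))=1$.
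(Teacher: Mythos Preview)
Your proposal is correct and follows essentially the same route as the paper's proof. The paper verifies all three events $\cA_{n,1},\cA_{n,2},\cA_{n,3}$ at the finite-volume level (invoking \Cref{lem:pathmod} for $\cA_{n,3}$) and then passes all three to the limit simultaneously via the Portmanteau theorem, after checking that $\cR(S,\eps)$ is a continuity set; you instead handle $\cA_1,\cA_2$ directly in infinite volume (which is legitimate since they are purely local functions of the weights) and only lift $\cA_3$ through the torus approximation, but the key ingredients --- \Cref{lem:pathmod}, \Cref{cor:contmodification}, and the observation that $\partial\cR(S,\eps)$ is null because $\{F((u_{k-1},u_k))=\eps\}$ has probability zero --- are identical.
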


\begin{proof}
Pick $n$ large enough so that $S \subset \dT^d_{n}$. Let $\cR_n(S, \eps)$ be the same event as in $\cR(S, \eps)$ with $M$ replaced by $M_n$ and $F$ replaced by $F_n$.
Then define $\cA_{n,i}^{S, \eps}, \cA_n^{S, \eps}$ to be the same events as $\cA_i^{S, \eps}$, $i\in \{1,2,3\}, \cA^{S, \eps}$ but with $M$ replaced by $M_n$ and $M^{S, \eps}$ replaced by $M_n^{S, \eps}$ where $M_n^{S, \eps}$ is the ground state for $\mvJ^{S, \eps}$ in $\dT_n^d$. Merely as a consequence of the definitions, 
\[
\dP(\cA_{n,1}|\cR_{n}(S,\eps))=1.
\]
Again, merely as a consequence of definitions we know that if $\cR_{n}(S,\eps)$ holds, 
\[
J_{(u,v)}^{S,\eps}-J^{S,\eps}_{u}-J_{v}^{S,\eps}>0\text{ }\forall e=(u,v)\in E_{S}.
\]
Of course, this is exactly the condition for all $e\in E_{S}$ to be inaccesible and thus 
\[
\dP\bigl(\cA_{n,2}(S,\eps)|\cR_{n}(S,\eps)\bigr)=1.
\]
It is a corollary of Lemma \ref{lem:pathmod} that 
\[
\dP(\cA_{n,3}|\cR_{n}(S,\eps))=1.
\]
Thus,
\[
\dP\bigl(\cA_{n} (S,\eps)|\cR_{n}(S,\eps)\bigr)=1.
\]
It remains to be shown that these sequences of conditional probabilities do in fact converge to $\dP(\cA(S,\eps)| \cR(S, \eps))$. This convergence is guaranteed as we can verify that $\cR_{n}(S,\eps)$ is a continuity set for the probability measure induced by $(\mv J_n, \mv M_n, \Delta \mv H_n)$. The second coordinate is equipped with the discrete topology. Thus, it is not hard to see that the topological boundary $\partial \cR(S,\eps)$ is defined by the same conditions as $\cR(S,\eps)$, except with the terminal edge satisfying $F_{n}((u_{k-1},u_{k}))=\eps$, which we know holds with probability $0$ as the flexibility is a continuous random variable, and moreover remains a continuous random variable for any subsequential limit. Furthermore, by continuity of the transformation $\mvJ \mapsto \mvJ^{S, \eps}$ on $E_S$, \Cref{cor:contmodification} applies. By the Portmanteau theorem, for any weak limit, we now know that 
\[
\dP\bigl(\cA (S,\eps)|\cR(S,\eps)\bigr)=1. 
\]
This completes the proof.
\end{proof}
\section{Proof of Uniqueness}\label{sec:uniqueness}

Now, suppose we have two distinct subsequential limits $(\bf J,\bf M)$ and $(\bf J', \bf M')$. Since $\bf J$ and $\bf J'$ have the same distribution, let $\mu_{\bf J}$ denote the conditional law of $\bf M$ given an a.s. sample of $\bf J$ and let $\mu'_{\bf J}$ be the conditional law of $\bf M'$ given an a.s. sample of ${\bf J}$. Let $  {\bf M} \sim \mu_{\bf J}$ and $  {\bf M'} \sim \mu_{\bf J}'$ and assume they are independent (given $\mvJ$).  
The tuple $$\Xi:=(\bf J,  {\bf M},   {\bf M'})$$ defines a coupling between $\mvM$ and $\mvM'$.  By construction (since we took the weak limit of the ground states on tori), the law of $\Xi$ is invariant with respect to translations of $\dZ^d$.

 An \textbf{infinite path} is an infinite sequence  $(v_1,(v_1,v_2),(v_2,v_3),\ldots)$ such that $v_i$s are all distinct and $v_i \sim v_{i+1}$ for all $i \ge 1$. A \textbf{bi-infinite path} is a bi-infinite sequence of edges $(\dots, (v_{-2}, v_{-1}),(v_{-1}, v_0), (v_0, v_1), (v_2,v_3), \dots)$ such that  $v_i$s are all distinct. An analogue of \Cref{lem:pathsloops} in the infinite setup is that the symmetric difference of two matchings in $\dZ^d$ is a disjoint union of simple loops, finite simple paths, simple infinite and bi-infinite paths.

\begin{lem}\label{lem:ground_sym}
    Let $M$ and $M'$ be two ground states in $\dZ^d$ for a collection of almost surely generic weights ${\boldsymbol J}$. Then $M \Delta M'$ is a disjoint union of simple infinite and bi-infinite paths.
\end{lem}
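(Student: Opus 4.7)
The plan is to reduce to ruling out finite components of $M \Delta M'$. First I will observe that the vertex-degree analysis in the proof of \Cref{lem:pathsloops} is entirely local and extends verbatim to $\dZ^d$: every vertex has degree at most $2$ in the edge-set of $M \Delta M'$, and the same case analysis shows that $M \Delta M'$ decomposes as a disjoint union of isolated monomers (a vertex that is a monomer in one matching and uncovered in the other), finite simple loops, finite simple paths, simple one-ended infinite paths, and simple bi-infinite paths. The task is then to rule out the three finite possibilities almost surely.

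For an isolated monomer $v$ (say $v$ is a monomer in $M$ and uncovered in $M'$), the matchings $M \setminus \{v\}$ and $M' \cup \{v\}$ are both valid; the ground-state property of $M$ forces $J_v \le 0$ while that of $M'$ forces $J_v \ge 0$, giving $J_v = 0$ and contradicting genericity. For a finite simple loop or path $C$, the restrictions $M \cap C$ and $M' \cap C$ are the two distinct complementary matchings of $C$ from \Cref{lem:pathsloops}. The key enabling observation---following from the same local analysis together with the matching property---is that no edge of $M$ or $M'$ can join a vertex of $C$ to a vertex outside $C$, since any such edge would either belong to $M \Delta M'$ (forcing its other endpoint into $C$) or would violate the matching constraint. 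Consequently, the swap $M^{\ast} := (M \setminus C) \cup (M' \cap C)$ is a valid matching of $\dZ^d$ agreeing with $M$ off $C$, and
\[
H(M^{\ast}) - H(M) = \sum_{y \in M' \cap C} J_y - \sum_{x \in M \cap C} J_x,
\]
which is almost surely nonzero by genericity. A negative value contradicts $M$ being an infinite-volume ground state; a positive value is handled symmetrically via $(M')^{\ast} := (M' \setminus C) \cup (M \cap C)$ to contradict the analogous property of $M'$.

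The only delicate point in the argument is the boundary observation guaranteeing validity of the swap; once that is in hand, the rest is a direct application of the infinite-volume ground-state definition together with almost-sure genericity, and no further probabilistic machinery is required.
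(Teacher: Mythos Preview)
Your argument is correct and follows the same approach as the paper's: rule out finite components of $M\Delta M'$ by swapping to the complementary matching and invoking the ground-state property together with genericity. Your explicit verification that the swap produces a valid matching (no edge of $M$ or $M'$ leaves $C$) is a nice elaboration of what the paper leaves as a one-line remark.

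One point to flag: in this paper a matching is a monomer--dimer \emph{covering}, so every vertex is either a monomer or the endpoint of a dimer (see the proof of \Cref{lem:pathsloops} and of \Cref{lem:gauge}). Consequently your ``isolated monomer'' case---a vertex that is a monomer in $M$ and uncovered in $M'$---simply cannot occur, and the paragraph treating it is unnecessary. In fact, under the paper's convention $M\setminus\{v\}$ would not be a valid matching, so that step of your argument would fail as written; fortunately the case is vacuous. The rest of your proof (finite loops and finite paths) is exactly the paper's argument, just spelled out more carefully.
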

\begin{proof}
    There cannot be a finite, simple path or a simple loop in $M \Delta M'$ as we can switch between the complimentary matchings of one such component and lose energy, which is impossible as both $M$ and $M'$ are ground states.
\end{proof}

 \begin{defn}
 A vertex $v$ is a \textbf{start point} (SP) of $M\Delta M'$ if $v \in M\Delta M'$.
 \end{defn}
 If $v$ is a start point then by \Cref{lem:ground_sym}, the connected component  containing $v$ is an infinite simple path with endpoint $v$. 
\begin{lem}\label{lem:BK}
$\dP( \text{There is a  starting point of $M\Delta M'$} )=0$. 
\end{lem}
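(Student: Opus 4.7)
The plan is to argue by contradiction, exploiting the amenability of $\dZ^d$ together with the translation invariance of the law of $\Xi = ({\bf J}, {\bf M}, {\bf M}')$. Suppose for contradiction that $\dP(\text{there is a start point of } M\Delta M') > 0$. By a union bound over vertices, there must exist some $v_0 \in \dZ^d$ with $\dP(v_0 \text{ is a start point}) > 0$, and then by translation invariance of $\Xi$ the density
\[
p := \dP(\text{the origin is a start point of } M\Delta M')
\]
is strictly positive.

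The decisive structural fact is \Cref{lem:ground_sym}: every start point is the sole endpoint of a one-ended simple infinite path which is itself a \emph{component} of $M\Delta M'$. In particular, one-ended paths arising from distinct start points are vertex-disjoint (if they shared a vertex they would be the same component, hence have the same start point), and therefore use pairwise disjoint boundary edges of any finite region. This is the ingredient that replaces the usual Burton--Keane trifurcation argument and is available here because in $\dZ^d$ the edge boundary is genuinely smaller than the volume.

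I would then set up a volume-versus-boundary count in the box $B_K = [-K,K]^d \cap \dZ^d$. Each start point in $B_K$ generates a one-ended infinite path that must exit $B_K$, contributing at least one distinct edge to the edge boundary $\partial_{\mathrm{edge}} B_K$, so
\[
\#\{\text{start points in } B_K\} \le |\partial_{\mathrm{edge}} B_K| = O(K^{d-1}).
\]
Taking expectations and using translation invariance, the left-hand side is exactly $p \cdot |B_K| = p(2K+1)^d$, and dividing through by $|B_K|$ before sending $K \to \infty$ yields $p \le 0$, contradicting $p > 0$.

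The only genuine substance lies in verifying the disjointness claim used in the counting bound; everything else is bookkeeping. Once that is in hand, the main obstacle would be nonexistent, as the amenability of $\dZ^d$ makes the volume-versus-boundary estimate automatic in any dimension $d \ge 1$. No use of inaccessibility, optimality, or flexibility is needed here; those tools are reserved for ruling out the bi-infinite paths later on.
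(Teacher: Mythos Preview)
Your proposal is correct and uses essentially the same volume-versus-boundary (Burton--Keane style) argument as the paper. The only difference in execution is that the paper first passes to ergodic components and invokes the ergodic theorem to get an almost-sure lower bound on the number of start points in $B_K$, whereas you take expectations directly via linearity and translation invariance; your version is slightly more streamlined and avoids the ergodic decomposition entirely.
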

\begin{proof}
The proof follows from a version of the Burton-Keane argument. We do not know that the law of $\Xi$ is ergodic with respect to translations of $\dZ^d$, nevertheless we can restrict to its ergodic components. Let $\tilde \Xi =  (\tilde {\bf J}, \tilde M, \tilde M')$ be such an ergodic component of $\Xi$. Let $$\alpha:=\dP ({\bf 0} \text{ is a starting point of $\tilde M \Delta \tilde M'$}).$$ Let $N_K$
denote the number of starting points of $M\Delta M'$ in $B_K$. 
Observe that by ergodic theorem, almost surely, $N_K \ge \frac\alpha2 (2K)^d$ for all large enough positive $K $. On the other hand, the number of vertices in $\partial B_K$ which the component containing a starting point intersects is at least 1, and no vertex in $\partial B_k$ belong to two different such components. Thus $|\partial B_K| \ge N_k$ for all $K \ge 0$. Combining the two observations, we see that $|\partial B_K| \ge \frac\alpha2 (2K)^d$ for all large enough $K$. Since $|\partial B_K |/K^d \to 0$
 as $K \to \infty$, $\alpha$ must be 0. Since the choice of the ergodic component was arbitrary, the lemma is proved.
\end{proof}

Because of \Cref{lem:BK,lem:ground_sym} if $M\Delta M'$ is non empty with positive probability then  it must be the case that it has contains one or more bi-infinite simple paths with positive probability. 
We say a bi-infinite simple path \emph{passes through a vertex }$x$ if one its edges has $x$ as one endpoint. In the next section, we show that the probability of such a  path passing through ${\bf 0}$ is 0, which will conclude the proof of \Cref{thm:main}.


\subsection{Nonexistence of bi-infinite paths}\label{sec:bi-infinite}
We know from \Cref{lem:BK} that $M \Delta M'$ consists of bi-infinite paths only.
The strategy now as described in \Cref{sec:outline}  is to do an absolutely continuous modification to the weights and arrive at a contradiction.

To achieve this, the strategy is to perform a modification of a sensitive vertex weight on the path that changes each ground state in a controlled manner. By prior edge weight modifications, we will force the changes to occur along the bi-infinite path. We will do this by first finding two edges of high flexibility along the path such that the high sensitivity vertex is sandwiched between them, and then by making any edge neighbouring the path which is not occupied inaccessible. Of course, such a region with a sensitive vertex sandwiched by high flexibility edges has to be shown to exist with positive probability.

Let $\cP$ be the set of matchings $m,m'$ of $\dZ^d$ such that $m \Delta m'$ contains an infinite path through the origin ${\bf 0}$. On the event $M \Delta M' \in \cP$, we refer to the path going through ${\bf 0}$ as $P$.
Recall the definition of optimality from \eqref{eq:opt}. We define the \emph{key quantity}
\begin{align}
{\sf m}:=\inf_{P \text{ passes through } v}O(v)1_{M \Delta M' \in \cP} + \infty1_{M \Delta M' \not \in \cP}.
\end{align}
Observe that on the event that $P$ passes through $v$, $O(v) >0$ almost surely. Indeed, otherwise, all the edges adjacent to $v$ are inaccessible for any ground state, which is a contradiction as $P$ passes through $v$. Thus ${\sf m} \ge 0$ almost surely.  
\begin{lem}\label{lem:MTP2}
 Assume $\dP(M \Delta M' \in \cP)>0$. For all  $\eps>0$ there exists a $c>0$, such that $\dP(M \Delta M' \in \cP,O({\bf 0})<{\sf m}+\eps)\ge c$
\end{lem}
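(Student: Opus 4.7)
The plan is to combine the translation invariance of $\Xi$ (inherited from the fact that we took weak limits along tori) with a countable-union argument. For each $v\in\dZ^d$ let $P(v)$ denote the bi-infinite component of $M\Delta M'$ through $v$ (defined only when such a component exists) and let ${\sf m}(v):=\inf_{u\in P(v)} O(u)$ on the same event. By \Cref{lem:BK} and \Cref{lem:ground_sym} every infinite component of $M\Delta M'$ is a bi-infinite path of dimers, and for $v={\bf 0}$ the existence of $P(v)$ coincides with the event $A:=\{M\Delta M'\in\cP\}$, with ${\sf m}({\bf 0})={\sf m}$. Define
\[
\chi_v := 1_{\{v\in P(v)\}}\cdot 1_{\{O(v)<{\sf m}(v)+\eps\}}.
\]
Since the connected-component structure of $M\Delta M'$ and the optimality function $O(\cdot)$ are translation-covariant functions of $\Xi$, translation invariance of the law of $\Xi$ yields $\E[\chi_v]=\E[\chi_{\bf 0}]$ for every $v\in\dZ^d$.

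Next I would use the definition of infimum. On the event $A$ the path $P$ is bi-infinite and ${\sf m}<\infty$ (it is dominated by the a.s.\ finite random variable $O({\bf 0})$, using that the weights come from a good distribution); hence for every $\eps>0$ there is at least one $v\in P$ with $O(v)<{\sf m}+\eps$, and for such $v$ one has $v\in P(v)=P$ and ${\sf m}(v)={\sf m}$, so $\chi_v=1$. Therefore on $A$,
\[
\sum_{v\in\dZ^d}\chi_v \geq 1.
\]

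To conclude, I argue by contradiction: if $\E[\chi_{\bf 0}]=0$ then $\chi_v=0$ almost surely for every $v$, and countable additivity forces $\sum_{v\in\dZ^d}\chi_v=0$ a.s., contradicting the previous display on the positive-probability event $A$. So $c:=\E[\chi_{\bf 0}]>0$, and by construction $\chi_{\bf 0}=1_A\cdot 1_{\{O({\bf 0})<{\sf m}+\eps\}}$, which is exactly the event in the statement of the lemma. The main step requiring care — and the only part of the argument that is not immediate — is the translation covariance of $\chi_v$: one must check that the infinite component through a vertex, the infimum of $O(\cdot)$ along that component, and the optimality itself all transform covariantly under lattice shifts. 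This is routine bookkeeping and not a genuine obstacle.
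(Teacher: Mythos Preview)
Your proof is correct and follows essentially the same route as the paper: assume the probability vanishes at the origin, use translation invariance to conclude it vanishes at every vertex, and contradict the fact that on the positive-probability event $\{M\Delta M'\in\cP\}$ the definition of infimum forces some vertex along $P$ to satisfy $O(v)<{\sf m}+\eps$. Your version is simply a more explicit write-up of the same union-bound/translation-invariance argument the paper sketches in two lines.
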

\begin{proof}
Suppose there exists an $\eps$ such that $\dP(M \Delta M' \in \cP,O({\bf 0})<{\sf m}+\eps) =0$. But on the event $M \Delta M' \in \cP$, by definition of infimum, there exists a vertex $x$ such that $P$ passes through it and such that $O(x) <{\sf m}+\eps$. This contradicts translation invariance.
\end{proof}

\begin{lem}\label{lem:unique_min}
    $\dP(M \Delta M' \in \cP,\exists \text{ unique } v \in P, O(v) = {\sf m})=0$
\end{lem}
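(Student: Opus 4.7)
The plan is to argue via translation invariance of the law of $(\mvJ, M, M')$ combined with a boundary-counting argument in the spirit of \Cref{lem:BK}. The heuristic is that if a bi-infinite path carried a unique distinguished minimizer of $O$, such vertices would be too sparse: the number of them in $B_K$ is bounded by the number of bi-infinite components of $M \Delta M'$ meeting $B_K$, which is $O(K^{d-1})$; but translation invariance would force this count to be of order $|B_K| = \Theta(K^d)$, forcing the probability in question to vanish.

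Concretely, I will introduce the event
\[
U_x := \{x \text{ lies on a bi-infinite component of } M \Delta M' \text{ and is the unique minimizer of } O \text{ on that component}\}, \quad x \in \dZ^d.
\]
By translation invariance, $\dP(U_x) = \dP(U_0)$ for every $x$. The first step is to check that the event of the lemma, call it $U$, is contained in $\bigcup_{x \in \dZ^d} U_x$. On $U$, the bi-infinite path $P$ through ${\bf 0}$ in $M\Delta M'$ admits a unique minimizer $v^*$ of $O$ along $P$; by \Cref{lem:ground_sym} together with \Cref{lem:BK}, $P$ is exactly the connected component of $M \Delta M'$ containing $v^*$, so $U_{v^*}$ holds.

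Next I will show $\dP(U_0) = 0$ by a density argument. Set $N_K := \sum_{v \in B_K} \ind_{U_v}$. If $v_1 \neq v_2$ both satisfy $U_{v_i}$, then they lie on distinct bi-infinite components of $M\Delta M'$, since each component admits at most one unique minimizer. Hence $N_K$ is at most the number of bi-infinite components of $M\Delta M'$ meeting $B_K$. Each such component must exit $B_K$ in both directions, producing at least two edges with one endpoint in $B_K$ and the other outside; since components of $M\Delta M'$ are edge-disjoint, the number of such components is at most half of $|\partial B_K| = O(K^{d-1})$. Taking expectations with translation invariance, $|B_K|\, \dP(U_0) = \E[N_K] \le O(K^{d-1})$, and letting $K \to \infty$ forces $\dP(U_0) = 0$. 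A countable union bound then yields $\dP(U) \le \sum_{x \in \dZ^d} \dP(U_x) = 0$.

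The main subtlety lies in the bookkeeping that ties distinct $U_v$-events to distinct bi-infinite components: this uses both the vertex-disjointness of components (from \Cref{lem:ground_sym} together with the absence of starting points guaranteed by \Cref{lem:BK}, so that the components of $M\Delta M'$ consist purely of vertex-disjoint bi-infinite paths) and the uniqueness of the minimizer along each path. Once this identification is in place, amenability of $\dZ^d$ furnishes the contradiction almost automatically, mirroring the proof of \Cref{lem:BK}.
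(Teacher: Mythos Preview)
Your argument is correct. It is, however, a genuinely different route from the paper's. The paper proves \Cref{lem:unique_min} by a one-line mass transport: send mass $1$ from every vertex on a bi-infinite path of $M\Delta M'$ to the unique minimizer of $O$ on that path (when one exists). Mass out of any vertex is at most $1$, while on the event $U_{\bf 0}$ the mass into ${\bf 0}$ is infinite, forcing $\dP(U_{\bf 0})=0$. Your proof instead follows the Burton--Keane template: you bound $N_K=\sum_{v\in B_K}\ind_{U_v}$ deterministically by the number of bi-infinite components meeting $B_K$, hence by $|\partial B_K|$, and compare with $\E[N_K]=|B_K|\,\dP(U_{\bf 0})$. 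The paper in fact anticipates exactly this alternative in the remark following \Cref{lem:MTP1}, noting that both \Cref{lem:unique_min} and \Cref{lem:MTP1} can be proved via amenability but that it prefers mass transport. The trade-off is the expected one: your argument is more elementary and avoids invoking the mass transport principle, while the paper's argument does not use amenability and would carry over verbatim to any unimodular transitive graph (consistent with the generality discussed in \Cref{sec:generalization}). A minor cosmetic point: your expectation-based version of Burton--Keane is actually cleaner than the ergodic-decomposition phrasing used in the paper's proof of \Cref{lem:BK}, since you never need ergodicity, only translation invariance of the law of $\Xi$.
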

\begin{proof}
Send mass $1 $ from $x$ to $y$ if a bi-infinite path $P'$ in $M \Delta M'$ passes through both of them and $y$ is the unique vertex such that $O(y) = \inf_{ P' \text{ passes through $v$}} O(v)$. The mass out is at most 1, while the mass in is infinite on the event that ${\bf 0}$ is the unique vertex in $P$ with $O(\bf 0) = {\sf m}$. This is a contradiction if the latter event has positive probability, therefore the latter event must have 0 probability. The lemma follows by translation invariance.   
\end{proof}

Let us colour the edges belonging to $M$ as red
 and those belonging to $M'$ as blue. Recall that $P$ is composed of alternating red and blue edges. Furthermore, on the event $M \Delta M' \in \cP$, the path $P$ can be split at the origin into two single directional infinite paths one of them starting with a red edge, and another with a blue edge. Call the former the \textbf{red direction} and the latter the \textbf{blue direction} of $P$.  Let us denote  by $F_M(x)$ (resp. $F_{M'}(x)$) the flexibility of $x$ corresponding to $({\bf J}, \mvM)$ (resp. $(\mvJ,\mvM')$).

 Let $\cG =\cG(\eps)$ be the  following event:
\begin{itemize}
    \item $M \Delta M' \in \cP$ occurs.
    \item $O(\bf 0) <{\sf m}+\eps/4$.
    \item there exists infinitely many red edges $e$ in $P$ in the red direction with $F_M(e) \ge \eps$, and infinitely many blue edges $e'$ with $F_{M'}(e')\ge \eps$ in the blue direction. 
\end{itemize}
 \begin{lem}\label{lem:MTP1}
 Assume $\dP(M \Delta M' \in \cP)>0$. Then
There exists an $\eps_F>0$ such that $$\dP(\cG(\eps_F)) >0.$$
\end{lem}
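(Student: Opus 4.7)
The plan is to combine two translation-invariance / mass-transport arguments: first, to show that for sufficiently small $\eps > 0$ the high-flexibility red and blue edges along $P$ are distributed bi-infinitely on $P$ with positive probability; and second, to impose the near-minimal optimality condition at ${\bf 0}$ by reprising the translation-invariance argument of \Cref{lem:MTP2} restricted to this event.

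For the first step, fix $\eps > 0$ and let $S^{\ge \eps}_M$ (resp.\ $S^{\ge \eps}_{M'}$) denote the set of red edges $e \in P$ with $F_M(e) \ge \eps$ (resp.\ blue edges $e' \in P$ with $F_{M'}(e') \ge \eps$). I will show that almost surely on $\cP$, both sets are either empty or unbounded on both ends of $P$. Suppose to the contrary that $S^{\ge \eps}_M$ is finite and nonempty with positive probability; the semi-bounded case (infinite but bounded on one end) is analogous. Then the convex hull of $S^{\ge \eps}_M$ along $P$ has two intrinsically defined vertex endpoints, either of which yields a translation-equivariant random vertex $u^* \in \dZ^d$ satisfying $u^*(T_v \omega) = u^*(\omega) - v$ whenever $v \in P(\omega)$. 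Translation invariance of $\dP$ then forces the marginal law of $u^*$ to be translation invariant on $\dZ^d$, hence a constant multiple of counting measure; since its total mass is at most $1$, this constant must vanish, and the underlying event has probability $0$. Since \eqref{eq:flex_0} implies $F_M(e), F_{M'}(e) > 0$ almost surely for every edge, monotone convergence yields $\dP(\cP, S^{\ge \eps}_M = \emptyset) \to 0$ and $\dP(\cP, S^{\ge \eps}_{M'} = \emptyset) \to 0$ as $\eps \to 0^+$. Hence for sufficiently small $\eps > 0$ the event
\[
H^{\ge \eps} := \cP \cap \bigl\{S^{\ge \eps}_M \text{ and } S^{\ge \eps}_{M'} \text{ are each unbounded on both ends of } P\bigr\}
\]
satisfies $\dP(H^{\ge \eps}) > \dP(\cP)/2 > 0$. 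On $H^{\ge \eps}$ each half-line of $P$ emanating from ${\bf 0}$ contains infinitely many edges of both $S^{\ge \eps}_M$ and $S^{\ge \eps}_{M'}$, so $H^{\ge \eps}$ lies inside the intersection of the first and third defining conditions of $\cG(\eps)$.

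For the second step, observe that $H^{\ge \eps}$ is path-intrinsic: shifting the origin to any other $v \in P(\omega)$ preserves the defining condition, so $H^{\ge \eps}(T_v \omega) = H^{\ge \eps}(\omega)$. Supposing for contradiction that $\dP(\cG(\eps)) = 0$, translation invariance of $\dP$ gives
\[
\dP\bigl(v \in P,\ H^{\ge \eps},\ O(v) < {\sf m} + \eps/4\bigr) = 0 \quad \text{for every } v \in \dZ^d.
\]
Summing over $v$ yields $\dP\bigl(H^{\ge \eps} \cap \{\exists v \in P : O(v) < {\sf m} + \eps/4\}\bigr) = 0$. But on $H^{\ge \eps}$ the existence of such a $v$ is automatic from ${\sf m} = \inf_{v \in P} O(v)$, contradicting $\dP(H^{\ge \eps}) > 0$. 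Setting $\eps_F$ equal to any sufficiently small $\eps$ completes the proof. The main obstacle is the mass-transport step in the first paragraph, where one must define the extremal vertex $u^*$ of $S^{\ge \eps}$ along $P$ in a translation-equivariant way despite $P$ lacking a canonical orientation, and handle the finite and semi-bounded cases uniformly; the final translation-invariance contradiction is a direct analogue of \Cref{lem:MTP2}.
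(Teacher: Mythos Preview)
Your proof is correct and uses the same two ingredients as the paper --- a mass-transport ``no last high-flexibility edge'' argument and a reprisal of \Cref{lem:MTP2} --- but in the reverse order. The paper first fixes $\eps_1$ so that $\dP(\cP,\,O({\bf 0})<{\sf m}+\eps_1)\ge c$, and only then shows that within this event the flexibility conditions can fail only with probability $<c/10$, via a ``last vertex'' mass transport. You instead establish the cleaner standalone dichotomy that $S^{\ge\eps}_M$ and $S^{\ge\eps}_{M'}$ are each a.s.\ either empty or unbounded on both ends of $P$, use $F(e)>0$ a.s.\ plus monotone convergence to rule out emptiness for small $\eps$, and finally run the \Cref{lem:MTP2} argument restricted to the resulting path-intrinsic event $H^{\ge\eps}$. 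Your route is arguably tidier because the dichotomy in step~1 is a self-contained fact and step~2 is a direct replay of \Cref{lem:MTP2}; the paper's interleaving is slightly more economical in that it never proves more about the flexibility sets than is strictly needed.

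One small caution: your phrasing ``translation invariance of $\dP$ then forces the marginal law of $u^*$ to be translation invariant on $\dZ^d$'' is not quite right as stated, since the equivariance $u^*(T_v\omega)=u^*(\omega)-v$ holds only for $v\in P(\omega)$, not for all $v$. The correct formulation is exactly the mass transport you allude to at the end: send mass $1$ from each $x\in P'$ to $u^*(P')$ whenever $P'$ is a bi-infinite path with $S^{\ge\eps}_M(P')$ finite nonempty (or semi-bounded), where $u^*(P')$ is defined intrinsically, e.g.\ as the lexicographically smallest endpoint of an extremal edge. Then the expected mass out of ${\bf 0}$ is at most $1$ while the mass in is infinite on $\{u^*(P)={\bf 0}\}$, forcing that event to be null; summing over translates then kills the whole event. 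You clearly have this in mind, so this is a matter of presentation rather than a gap.
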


In what follows, we do not need the full strength of the last item in the above event.
\begin{proof}

We already know from \Cref{lem:MTP2} that we can choose $\eps_1>0$ such that $$\dP(M \Delta M' \in \cP, O({\bf 0}) < {\sf m}+\eps_1) \ge c>0.$$ Let $\cE(\eps)$ be the event that there are infinitely many red edges $e$ in $P$ in the red direction with $F_M(e) \ge \eps$. We now claim that there exists $\eps_2>0$ such that 
\begin{equation}
    \dP(M \Delta M' \in \cP, O({\bf 0}) < {\sf m}+\eps_1, \cE^c(\eps_2)) < \frac{c}{10}.
\end{equation}
Let $\cF$ be the event that the red edge $e$ adjacent to the root has $F_M(e) \ge \eps$. Using \eqref{eq:flex_0}, we can choose $\eps$ small enough such that $\dP(\cF^c) < c/10$. Let $\eps_2$ be the choice of $\eps$. We now claim
\begin{equation}
    \dP(M \Delta M' \in \cP, O({\bf 0}) < {\sf m}+\eps_1, \cE^c(\eps_2), \cF) =0.\label{eq:finite}
\end{equation}
Let $\cB$ be the event inside \eqref{eq:finite} and we prove \eqref{eq:finite} by contradiction. To that end, assume $\dP(\cB) >0$. On $\cB $ there is a `last vertex' $v$ whose red direction does not contain any edge $e$ with $F_M(e) \ge \eps_2$  in the following sense. For $x \in V$, let $\cL(x)$ be the event that a bi-infinite path in $M \Delta M'$ passes through $x$, $O({\bf 0})<{\sf m}+\eps_1$ and the red edge $e$ adjacent to $x$ is the only edge in the red direction with $F_M(e) \ge \eps_2$. By translation invariance in law of $\Xi$, we conclude that $\dP(\cL({\bf 0})) >0$ as the probability $\dP(\exists x, \cL(x) \text{ occurs}) \ge \dP(\cB) >0$. 
 For any $u,v$ define a mass transport sending mass 1 from $u$ to $v$ if a bi-infinite path in $M \Delta M'$ passes through both $u$ and $v$, and $\cL(v)$ holds. Clearly, the mass out of the root is at most 1 by definition. On the other hand, on the event $\cL({\bf 0})$, ${\bf 0} $ gets mass $1$ from every vertex through which its blue direction passes, hence expected mass into $\bf 0$ is infinity. This contradicts mass transport principle, thereby proving \eqref{eq:finite}.
 Thus we get $$\dP(M \Delta M' \in \cP, O({\bf 0}) < {\sf m}+\eps_1, \cE^c(\eps_2)) = \dP(M \Delta M' \in \cP, O({\bf 0}) < {\sf m}+\eps_1, \cE^c(\eps_2), \cF^c)<\frac{c}{10}.$$ as desired.

 Exactly similar bounds for the 7 other combinations in the third item of $\cG$, and choosing $\eps_F$ to be the infimum of the $\eps$s chosen along with an union bound will  gives us
$$\dP(\cG(\eps_F)) \ge c- \frac{8c}{10} = \frac{c}{5}>0$$
as required.
\end{proof}

\begin{rem}
Both \Cref{lem:MTP1,lem:unique_min} can be proved by appealing to the amenability of $\dZ^d$, but we prefer to use mass transport principle as it avoids using amenability.
\end{rem}
\bigskip
\begin{figure}
    \centering
    \includegraphics[scale=0.6]{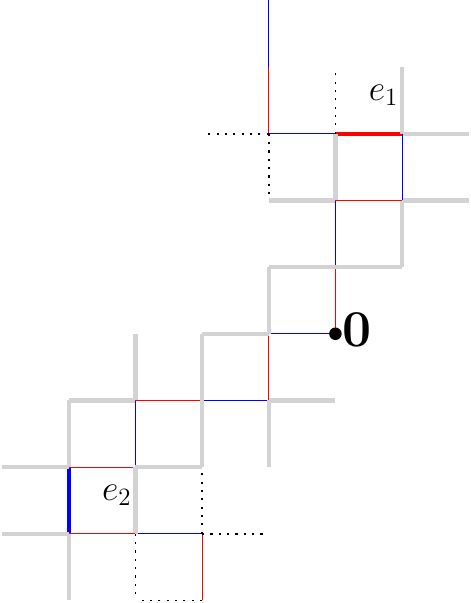}
    \caption{The set up for the event $\cG$. The edges in $E_Q$ are marked in gray. }
    \label{fig:path}
\end{figure}

On $\cG(\eps_F)$ denote by $e_1$ the closest red edge in the red direction with with $\min\{F_M(e_1)\} \ge \eps_F$ and similarly define $e_2$ to be the closest blue edge in the blue direction with $F_{M'}(e_2) \ge \eps_F$. Let $Q$ be the  set of vertices incident to $P$ between them. To be more precise, $Q$ consists of the endpoints of the component of $P \setminus \{e_1,e_2\}$ containing ${\bf 0} $. Note here that this component of $P$ containing ${\bf 0} $ might not have a single edge in which case $Q$ is the singleton ${\bf 0}$. Let $E_{Q}$ denote the set of edges adjacent to $Q$ except those adjacent to ${\bf 0}$.  

In what follows we will take a union bound over $Q =S$ for all possible potential finite paths  $S$ passing through the origin and consider perturbation over the set $E_{S}\cup \{\mv0\}$. This motivates the following definition. Now define $$
z_S = \max_{e = (u,v) \in E_{S}}\{|J_{(u,v)}-J_u - J_v|\}+V,$$
where $V$ is an independent uniform $[0,1]$ valued random variable just as in \eqref{eq:zs}. Consider a new collection of weights $ {\bf J}^{S,\eps}:=(J^{S, \eps}_x)_{x \in \Sigma}$ defined as follows:
\begin{equation}\label{eq:perturb_J}
   J^{S, \eps}_x = 
    \begin{cases}
       J_x +z_S \text{ if $x \in E_{S}$}\\
       J_{x} +\eps/2 \text{ if $x = {(\bf 0,v)},\text{ }v\sim \mv0$}  \\
     J_x \text{ otherwise.}
    \end{cases}
\end{equation}
Now define 
$$
\cG(\eps):= \cG \cap \{Q  =S\}
$$
and note that
$$
\cG(\eps) \subseteq \cup_{S} \cG(S, \eps)
$$
where the union runs over all finite subsets of $\sf E$. Note that by \Cref{cor:contmodification}, we can consider the perturbed ground states coupled with the original one:
\begin{align*}
    X&:= (\mvJ,\mvM,\Delta \mvH,{\bf J^{S,\eps}},M^{S, \eps})\\
    X'&:=(\mvJ,\mvM',\Delta \mvH',{\bf J^{S,\eps}},(M')^{S, \eps}).
\end{align*}
Finally, conditioned on ${\mvJ, \mvJ^{S, \eps}}$ we can independently sample the remaining coordinates of $X, X'$ to construct a coupling of $X,X'$:
\begin{equation} \label{Xi}
    \Xi:=(\mvJ, {\bf J^{S,\eps}}, \mvM,\Delta \mvH, \mvM',\Delta \mvH',M^{S, \eps},(M')^{S, \eps} )
\end{equation}

\begin{lem}\label{lem:abs_inf} Fix $S$, a finite path passing through $\mv0$ and $\eps>0$. Then   $({\bf J^{S,\eps}}, M^{S, \eps},(M')^{S, \eps})$ is absolutely continuous with respect to that of $({\bf J}, M,M')$.
\end{lem}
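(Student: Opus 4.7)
The plan is to separate the argument into two steps: first, establish absolute continuity of $\mvJ^{S,\eps}$ with respect to $\mvJ$ as laws on the weight space; then lift this to the full triple via the functional representations afforded by \Cref{cor:contmodification}.

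For the first step, only coordinates in the finite block $A := E_S \cup \{(\mv0, v) : v \sim \mv0\}$ are altered. On the $\mv0$-adjacent edges the perturbation is the deterministic shift $J_e \mapsto J_e + \eps/2$, which is absolutely continuous because the good density $p$ is positive on $(\beta, \infty)$ and the shift preserves this support; the corresponding Radon-Nikodym factor is the product of terms $p(\cdot - \eps/2)/p(\cdot)$, finite by the good-distribution hypothesis. On $E_S$ every coordinate receives the \emph{common} shift $z_S = \max_{e=(u,v)\in E_S}|J_e - J_u - J_v| + V$, which depends on $J_{E_S}$ itself; here the independent uniform variable $V$ plays the role of a smoothing dimension. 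Concretely, I would change variables $(J_{E_S}, V) \mapsto (J^{S,\eps}_{E_S}, z_S)$ with unit Jacobian, read off the joint density on $\dR^{|E_S|+1}$, and marginalize out $z_S$ to obtain a Lebesgue density for $J^{S,\eps}_{E_S}$ on $\dR^{|E_S|}$. Since $\mvJ_{E_S}$ has positive density on $(\beta, \infty)^{|E_S|}$ and $J^{S,\eps}_e \ge J_e > \beta$ almost surely (as $z_S \ge 0$), absolute continuity on this block follows.

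For the second step, enlarge the probability space to carry independent $\text{Unif}[0,1]$ variables $U, U'$ independent of $(\mvJ, V)$, and write $M = \Phi(\mvJ, U)$ and $M' = \Phi'(\mvJ, U')$ using the Borel disintegrations discussed preceding \eqref{eq:function_M}. Applying \Cref{cor:contmodification} along the subsequences defining $M$ and $M'$, the \emph{same} measurable functions yield $M^{S,\eps} = \Phi(\mvJ^{S,\eps}, U)$ and $(M')^{S,\eps} = \Phi'(\mvJ^{S,\eps}, U')$. Since $U, U'$ remain independent of $\mvJ^{S,\eps}$, the conditional law of $(M^{S,\eps}, (M')^{S,\eps})$ given $\mvJ^{S,\eps} = j$ coincides with the conditional law of $(M, M')$ given $\mvJ = j$. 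Combined with Step 1, this promotes absolute continuity of the weight marginals to absolute continuity of the triples, with Radon-Nikodym derivative a function of $\mvJ$ alone.

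The principal obstacle is Step 1: because $z_S$ depends on the very coordinates it is shifting, naive translation arguments fail, and one must carefully exploit the auxiliary uniform $V$ to manufacture a Lebesgue density despite the self-referential definition of the shift. Once that smoothing is accomplished, Step 2 is essentially bookkeeping built on \Cref{cor:contmodification}.
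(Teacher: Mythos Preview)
Your two-step architecture is exactly the paper's: reduce to absolute continuity at the level of weights, then push forward through the representations $M=\Phi(\mvJ,U)$, $M'=\Phi'(\mvJ,U')$ supplied by \Cref{cor:contmodification}. Step~2 in your proposal matches the paper's argument essentially line for line.

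Where you diverge is in Step~1. The paper defers this to an appendix lemma (\Cref{lem:abs}) which handles an arbitrary measurable nonnegative shift $\mvZ$: it conditions on $\mvZ\in B_\eps(\mvz)$, invokes the translation identity $\E[F(\mvJ+\mvz)\mv1_{\mvJ+\mvz\in\cA}]=\E[F(\mvJ)\mv1_{\mvJ\in\cA}\prod g_{z_x}(J_x)]$, and then controls the product of Radon--Nikodym factors via H\"older and the integrability condition $C(z,\alpha)<\infty$ built into the good-distribution hypothesis. Your route is instead a direct change of variables: conditioning on the unperturbed coordinates, the map $(J_{E_S},V)\mapsto(J_{E_S}+z_S\mv1,\,z_S)$ is a Lipschitz bijection with unit Jacobian (row-reduce by subtracting the $z_S$-row from each $J^{S,\eps}_e$-row), which yields a Lebesgue density for $J^{S,\eps}_{E_S}$ supported in $(\beta,\infty)^{|E_S|}$ where $\prod p>0$. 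This is correct and in fact shows that for \emph{this particular} perturbation only the positivity of $p$ on $(\beta,\infty)$ is needed, not the $L^\alpha$ integrability of $g_z$. The trade-off is that the paper's lemma is reusable for any measurable shift, whereas your computation is tailored to the affine-plus-uniform structure of $z_S$; conversely, your argument makes transparent why the auxiliary $V$ was introduced in the first place.

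One small point: your closing remark that the Radon--Nikodym derivative is ``a function of $\mvJ$ alone'' should read ``a function of the weight coordinate alone'' (i.e.\ it factors through the first marginal of the triple); as written it is slightly ambiguous between $\mvJ$ and $\mvJ^{S,\eps}$.
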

\begin{proof}
    The transformation that is applied to $\mvJ$ to obtain $\mvJ^{S, \eps}$ when restricted to $S$ is a continuous transformation, thereby \Cref{cor:contmodification} is applicable to each of $X$ and $X'$ defined above. Furthermore, since the ground states conditioned on ${\mvJ}, \mvJ^{S, \eps}$ are sampled independently, we can use two i.i.d.\ Uniform variables $U,U'$ in the formula \eqref{eq:function_u}. Applying this, we get
    \begin{align*}
    M^{S, \eps} = \Phi(\mvJ^{S, \eps}, U), \qquad & M = \Phi(\mvJ, U) \\
   (M')^{S, \eps} = \Phi'(\mvJ^{S, \eps}, U'), & \qquad M' = \Phi'(\mvJ, U') 
    \end{align*}
    for some measurable functions $\Phi, \Phi'$.
    It suffices then to verify that $({\mvJ}^{S, \eps}, U, U')$ is absolutely continuous with respect to $(\mvJ, U, U')$. This is standard, we postpone the proof of this to the appendix (see  \Cref{lem:abs}). 
\end{proof}
We now introduce a quick piece of notation: on $\cG(\eps_F)$, let $Q_{r} \subset Q$ be the simple, finite path starting at ${\bf 0}$ and ending at $e_1$ (excluding $e_1$) and define $Q_b \subset Q$ to be the simple, finite path starting at ${\bf 0} $ and ending at $e_2$ (excluding $e_2$).

\begin{lem}\label{lem:upper_bound_0}
Let $\cD_S$ be the event that 
\begin{itemize}
    \item  either there exists a  monomer in $M^{S, \eps_F}\Delta (M')^{S, \eps_F}$ or,
    \item $M^{S, \eps_F}\Delta (M')^{S, \eps_F} \in  \cP$ and $\inf_{P \text{ passes through $v$}} O^{S,\eps_{F}}(v)$ is achieved for some unique vertex through which $P$ passes where $O^{S,\eps_{}F}$ is the optimality is calculated with weights ${\mvJ^{S,\eps_{F}}}$.
\end{itemize}  
    $$
    \dP(\cG(S,\eps_F)) \le \dP(\cD_S  ) = 0.
    $$
\end{lem}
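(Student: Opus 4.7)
The plan is to prove $\cG(S,\eps_F)\subseteq \cD_S$ almost surely and then to conclude $\dP(\cD_S)=0$ from \Cref{lem:abs_inf} together with \Cref{lem:BK} and \Cref{lem:unique_min}. On $\cG(S,\eps_F)$ the perturbation $\mvJ\mapsto\mvJ^{S,\eps_F}$ makes every edge of $E_S$ inaccessible and, via the gauge at ${\bf 0}$, lowers $O({\bf 0})$ by $\eps_F/2$, so that $O^{S,\eps_F}({\bf 0})<{\sf m}-\eps_F/4$. To track the effect on $M$ and $M'$ separately, I will apply \Cref{lem:perturb_as_event} twice: once to $M$ along the red half of $S$ running from ${\bf 0}$ to the inner endpoint of $e_1$, and once to $M'$ along the blue half terminating at $e_2$. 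By \Cref{lem:flex2}, raising the weights in the blue half does not affect $M$ (those edges are off $M$), so the first application is legitimate; symmetrically for $M'$. Off a null event the output is that $Q_1:=M\Delta M^{S,\eps_F}$ and $Q_2:=M'\Delta (M')^{S,\eps_F}$ are each either empty or a finite simple path starting at ${\bf 0}$, tracing $P$ in opposite (red, resp.\ blue) directions and terminating strictly before $e_1$, resp.\ $e_2$.

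If both $Q_1$ and $Q_2$ are empty (Case A), then $M^{S,\eps_F}\Delta(M')^{S,\eps_F}=M\Delta M'\in\cP$ and the bi-infinite path through ${\bf 0}$ is still $P$. For every $v\in P\setminus Q$, no weight entering $O(v)$ has been touched, so $O^{S,\eps_F}(v)=O(v)\ge{\sf m}$; combined with $O^{S,\eps_F}({\bf 0})<{\sf m}-\eps_F/4$, this forces the infimum of $O^{S,\eps_F}$ on $P$ to be attained inside the finite set $Q$. Each $O^{S,\eps_F}(v)$ is a maximum of finitely many linear functions of the weights, so by continuity of the joint density the probability that two distinct vertices of $Q$ share the same optimality is zero; the minimizer is therefore almost surely unique, verifying the second clause of $\cD_S$. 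If instead $Q_1$ is nonempty (Case B), its terminal vertex $v_1^*$ lies on $P$ and is therefore a dimer endpoint in $M$ via a red edge; the structure of a symmetric-difference path requires the terminal to be a monomer in exactly one of $M, M^{S,\eps_F}$, so $v_1^*$ must be a monomer in $M^{S,\eps_F}$. Since $v_1^*\notin Q_2$ (the two sub-paths leave ${\bf 0}$ in opposite directions), the status of $v_1^*$ in $(M')^{S,\eps_F}$ coincides with its status in $M'$, which is a dimer endpoint via a blue edge of $P$. Hence $v_1^*$ is a monomer in $M^{S,\eps_F}\Delta (M')^{S,\eps_F}$, verifying the first clause of $\cD_S$; the argument is symmetric if $Q_2$ is nonempty.

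It remains to show $\dP(\cD_S)=0$. The event $\cD_S$ is a measurable function of $(\mvJ^{S,\eps_F},M^{S,\eps_F},(M')^{S,\eps_F})$, whose joint law is absolutely continuous with respect to that of $(\mvJ,M,M')$ by \Cref{lem:abs_inf}. The event obtained by replacing the perturbed triple with the unperturbed one is precisely ``either $M\Delta M'$ contains a monomer, or $M\Delta M'\in\cP$ with a unique minimizer of $O$ on the bi-infinite path through ${\bf 0}$'', whose two clauses have probability zero by \Cref{lem:BK} and \Cref{lem:unique_min} respectively. The main technical point I expect to require care is Case B, where I must use the inaccessibility of $E_S$ to prevent $Q_1,Q_2$ from escaping $P$, and the fact that they travel in opposite directions along $P$ to ensure that the terminal monomers $v_1^*,v_2^*$ do not accidentally cancel in the new symmetric difference.
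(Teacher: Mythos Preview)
Your argument is correct and follows essentially the same route as the paper: apply \Cref{lem:perturb_as_event} separately to $(\mvJ,M)$ along the red half $Q_r$ and to $(\mvJ,M')$ along the blue half $Q_b$, split into the cases where the resulting symmetric differences are empty or not, and then invoke \Cref{lem:abs_inf}, \Cref{lem:BK}, and \Cref{lem:unique_min}. Your write-up is in fact more careful than the paper's in two places: you explicitly use \Cref{lem:flex2} to justify that the weight increase on the ``other'' half of $E_S$ does not interfere with each application of \Cref{lem:perturb_as_event} (the paper leaves this implicit), and in Case~A you supply the genericity argument for uniqueness of the minimizing vertex inside $Q$, which the paper simply asserts.
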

\begin{proof}

We work with the coupling $\Xi$ as in \eqref{Xi}.
   Let $\cR(Q_{r},\eps_{F})$ and $\cA(Q_{r}, \eps_F)$ be the events described in \Cref{lem:perturb_as_event} for $X$ and $\cR'(Q_{b},\eps_{F})$ and $\cA'(Q_{b}, \eps_F)$ be the same events for $X'$. Note that since $\cG(S,\eps_{F})\subset \cR(Q_{r},\eps_{F})$ and $\cG(S,\eps_{F})\subset \cR'(Q_{b},\eps_{F})$, we know that $\dP\bigl(\cA(Q_{r},\eps_{F})|\cG(S,\eps_{F})\bigr)=1$ and $\dP\bigl(\cA'(Q_{b},\eps_{F})|\cG(S,\eps_{F})\bigr)=1$. Thus, 
   \begin{equation}
       \dP(\cG(S, \eps_F)) = \dP(\cG(S,\eps_F), \cA(Q_{r},\eps_F), \cA'(Q_{b},\eps_F)).
   \end{equation}
   However, if all of $\cG(S,\eps_F),\cA(Q_r,\eps_F), \cA'(Q_b,\eps_F)$ occurs  and one of $M \Delta M^{S, \eps_F} =:R$ or $M' \Delta (M')^{S, \eps_F}=:R'$ is nonempty then a monomer must appear in $M^{S, \eps_F}\Delta (M')^{S, \eps_F}$. Indeed, $R \subset Q_r$ on $\cA(Q_{r},\eps)$ and $R'\subset Q_b$ on $\cA'(Q_{b},\eps)$. Thus $R$, if nonempty, will have a monomer at the the endpoint which is not ${\mv0}$. If both $R$ and $R'$ are empty, then the optimality of ${\bf 0}$ decreases to ${\sf m} - \eps/2$ for ${\bf J^S}$ whereas $O^{S,\eps_{F}}(w)$ for all $w$ outside $Q$ remains unchanged and hence are at least $\sf m$. Thus the infimum of the optimality of the vertices is achieved for some vertex in $Q$. Overall, $\cD_S$ has occurred. 

   The fact that $\dP(\cD_S)=0$ follows from the fact that the same event for $(\mv J, \mv M, \mv M')$ has probability 0 using \Cref{lem:BK,lem:unique_min}, and the absolute continuity result of \Cref{lem:abs_inf}.
\end{proof}

\begin{proof}[Proof of \Cref{thm:conditional_equality}]
    If $\dP(M \Delta M' \in \cP)>0$ then $\dP(\cG(\eps_F))>0$ using \Cref{lem:MTP1}. On the other hand, using \Cref{lem:upper_bound_0}, $$\dP(\cG(\eps_F) ) \le \sum_S \cG(S,\eps_F) =0$$
    where the sum is over all finite subsets $S \subset \sf E$. This is a contradiction, and hence $\cP(M \Delta M' \in \cP ) =0$. Combined with \Cref{lem:BK}, we see that $M\Delta M' = \emptyset$ almost surely, as desired. 
\end{proof}
 \section{Perturbation of ground states}\label{sec:perturbation}

In this section we prove \Cref{thm:perturbation}. The proof essentially follows the same ideas as in \Cref{thm:conditional_equality}, except in this case there can be finite components. 
Nevertheless an identical application of Burton and Keane (\Cref{lem:BK}) almost surely rules out the existence of starting points in $M \Delta M(p)$. Thus we are left to rule out existence of bi-infinite paths in $M \Delta M(p) $. Thus we let $\cP$ be the event as in \Cref{sec:bi-infinite} and assume  $\dP(M \Delta M(p) \in \cP)>0$. On $\cP$, let $P$ denote the path passing through ${\mv0}$ as in \Cref{sec:uniqueness}.

Let $O_{\mvJ}(v)$ denote the optimality of $v$ for the weights $\mvJ$. Define 
$${\sf m}_{\mvJ}:=\inf_{v \text{ passes through } P } O_{\bf J}(v).$$ The following lemma has identical proofs as \Cref{lem:MTP1,lem:unique_min}.
\begin{lem}\label{lem:perturb_properties}
    Assume $\dP(M \Delta M(p) \in \cP)>0$. For all $\eps>0$ there exists a $c>0$ such that 
    $$\dP(M \Delta M(p) \in \cP, O_{\mvJ}({\bf 0})<{\sf m}+\eps) \ge c.$$ Furthermore, $$\dP(M \Delta M(p) \in \cP,\exists \text{ unique } v \in P, O_{\mvJ}(v) = {\sf m}_{\mvJ})=0.$$
\end{lem}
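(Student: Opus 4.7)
Both assertions should be essentially verbatim repetitions of the arguments for \Cref{lem:MTP2} and \Cref{lem:unique_min}, with the pair $(\mvJ, M, M')$ replaced by $(\mvJ, M, M(p))$. The one genuinely new input is that the joint law of $(\mvJ, M, M(p))$ is invariant under translations of $\dZ^d$, which is the step I would verify first. I would get it by recording, via \Cref{thm:main} applied to $\mvJ(p)$ (which has the same marginal law as $\mvJ$), an a.s.\ representation $M(p) = \Phi(\mvJ(p), U')$ in parallel to $M = \Phi(\mvJ, U)$, together with the independent per-edge coin tosses used to build $\mvJ(p)$ out of $(\mvJ, \mvJ')$. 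Since all of $\mvJ$, $\mvJ'$, the coins, $U$, $U'$ are i.i.d.\ across the lattice, the tuple $(\mvJ, \mvJ', \mvJ(p), M, M(p))$ is translation invariant, and hence so is its $(\mvJ, M, M(p))$ marginal.

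For the first assertion I would argue by contradiction exactly as in \Cref{lem:MTP2}. If $\dP(M \Delta M(p) \in \cP,\ O_{\mvJ}({\mv0}) < {\sf m}_{\mvJ} + \eps) = 0$ for some $\eps > 0$, then on $\{M \Delta M(p) \in \cP\}$ the definition of ${\sf m}_{\mvJ}$ as the infimum of $O_{\mvJ}$ along the bi-infinite path $P$ through ${\mv0}$ forces the existence, almost surely on $\cP$, of some vertex $v$ on $P$ with $O_{\mvJ}(v) < {\sf m}_{\mvJ} + \eps$. Shifting the origin to $v$ and invoking the translation invariance established above contradicts the assumed zero probability.

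For the second assertion I would copy the mass transport argument from \Cref{lem:unique_min}. Define a transport in which $u$ sends unit mass to $w$ whenever some bi-infinite path $P'$ of $M \Delta M(p)$ contains both $u$ and $w$ and $w$ is the unique vertex on $P'$ achieving the infimum of $O_{\mvJ}$ restricted to $P'$. Each vertex sends out at most mass $1$, but on the event that ${\mv0}$ is the unique minimizer along $P$, it receives mass $1$ from each of the infinitely many other vertices on $P$, giving infinite expected inflow. The mass transport principle then forces this event to have probability $0$, and translation invariance upgrades this to the stated claim.

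The main obstacle, if any, is just the bookkeeping that confirms translation invariance in the presence of two independent weight fields and the coin tosses; this should be routine once \Cref{thm:main} is available. One point worth flagging is that, unlike in \Cref{sec:uniqueness}, $M \Delta M(p)$ may now contain finite components, since $M$ and $M(p)$ are ground states for distinct weight configurations; but this is harmless because we restrict throughout to the event $\cP$ on which a bi-infinite component through ${\mv0}$ is guaranteed to exist.
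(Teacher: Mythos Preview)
Your proposal is correct and follows exactly the route the paper takes: the paper's own ``proof'' is simply the one-line remark that the argument is identical to that of \Cref{lem:MTP2} and \Cref{lem:unique_min} (the paper's cross-reference to \Cref{lem:MTP1} here appears to be a typo for \Cref{lem:MTP2}). Your explicit verification of translation invariance of $(\mvJ, M, M(p))$ is a useful addition the paper leaves implicit; note, however, that by \Cref{thm:main} the ground state $M$ is a measurable function of $\mvJ$ alone, so the auxiliary uniforms $U$, $U'$ in your representation are unnecessary and the equivariance of $\Phi$ follows directly from the torus construction.
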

As before, we consider the metastates by taking possibly subsequential limits:
\begin{equation*}
    (\mvJ, \mvM, \Delta \mvH), \qquad (\mvJ, \mvJ', \mvM(p), \Delta \mvH(p)).
\end{equation*}
This allows us to consider the flexibilities $F_{M}, F_{M(p)}$.

Let us color the edges of $M$ red and $M(p)$ blue and define the red direction and the blue direction of $P$ as before on the event $M \Delta M(p) \in \cP$. Define $\cG =\cG(\eps)$ to be the  following event:
\begin{itemize}
    \item $M \Delta M(p) \in \cP$,
    \item $O_{\bf J}(\bf 0) <{\sf m}_{\mvJ}+\eps/4$.
    \item there exists infinitely many red edges $e$ in $P$ in the red direction with $F_M(e) \ge \eps$, and infinitely many blue edges $e'$ with $F_{M(p)}(e')\ge \eps$ in the blue direction. 
\end{itemize}
Under the assumption $\dP(M \Delta M(p) \in \cP)>0$, there exists an $\eps_F>0$ such that 
\begin{equation}
    \dP(\cG(\eps_F))\ge c. \label{eq:pos_prob_event}
\end{equation}
On the event $\cG(\eps_F)$, let $Q$, $E_Q$, $\cG(\eps_F, S)$ be as in \Cref{sec:uniqueness}. Now define $$
z_S = \max_{e = (u,v) \in S}\{|J_{(u,v)}-J_u - J_v|, |J'_{(u,v)}-J'_u - J'_v|\} + V.$$
where $V \sim $Uniform$[0,1]$ independent of the rest.
Consider a new collection of weights $ {\bf J}^{S,\eps}:=(J^{S, \eps}_x)_{x \in \Sigma}$ defined as follows:
\begin{equation}
   J^{S, \eps}_x = 
    \begin{cases}
       J_x +z_S \text{ if $x \in S$}\\
       J_{x} +\eps/2 \text{ if $x = ({\bf 0},v),\text{ }v\sim \mv0$}  \\
     J_x \text{ otherwise.}
    \end{cases}
\end{equation}
and ${\bf J'}^{S, \eps}$ as 
\begin{equation}
   (J')^{S, \eps}_x = 
    \begin{cases}
       J'_x +z_S \text{ if $x \in S$}\\
     J'_x \text{ otherwise.}
    \end{cases}
\end{equation}
which allows us to define $(J(p))^{S, \eps}$ in the obvious manner. We can now define
\begin{align*}
    M^{S, \eps}&=\sum_{\xi\in \cV(S)}M_{S, \xi}1_{J^{S, \eps}\in \cI(S,\xi)}\\
    (M(p))^{S, \eps}&=\sum_{\xi\in \cV(S)}M_{S, \xi}1_{(J(p))^{S, \eps}\in \cI(S,\xi)}
\end{align*}
and exactly as in \Cref{lem:abs_inf}, we get that $(\mvJ, \mvJ(p), M, M(p))$ is absolutely continuous with respect to that of $((\mvJ)^{S, \eps}, (\mvJ(p))^{S, \eps}, M^{S, \eps}, (M(p))^{S, \eps})$. Since the perturbation makes sure that the edges in $S$ are made inaccessible for both $M^{S, \eps_F}$, $(M(p))^{S, \eps_F}$, we conclude in the same way  that on the event $\cG(S)$,
\begin{itemize}
    \item either there is a monomer in $M^{S, \eps} \Delta (M(p))^{S, \eps})$, or,
    \item $M^{S, \eps} =M$ and $(M(p))^{S, \eps} = M(p)$ and infimum of $O_{(\mvJ)^{S, \eps}}(v)$ over all vertices $v$ through which $P$ passes is achieved for some vertex.
\end{itemize} 
By absolute continuity and \Cref{lem:perturb_properties}, the latter event has probability 0. This renders $\dP(\cG(\eps))=0$ which is a contradiction to \eqref{eq:pos_prob_event}. Thus $M \Delta M(p) $ has finite components almost surely, as desired.

Now let us show that $M\Delta M(p) \to \emptyset$ in law as $p \to 0$. To that end, it is enough to show that for any sequence $p_k \to 0$, there exists a subsequence $p_{k_l}$ along which $\dP(\mv0 \in M\Delta M(p_{k_l})) \to 0$ as $l \to \infty$.

To that end, using compactness choose a sequence $p_{k_l}$ such that $(\mvJ(p_{k_l}), M(p_{k_l}))$ converges in law as $l \to \infty$. Call the limit $(\mvJ, M(0+))$ where we use the obvious fact that the marginal of $\mvJ(p_{k_l})$ converges to $\mvJ$ in law.
Observe that there exists a slowly growing function $\eta: \dN \to \{k_l\}_{l \ge 1}$  such that $(\mvJ(p_{\eta(n)}), M_n(p_{\eta(n)})) \to (\mvJ, M(0+))$ in law. Furthermore, $(\mvJ, M(0+))$ is translation invariant in law and $M(0+)$ is a ground state of ${\mvJ}$.  Thus we are exactly in the setup of the proof of \Cref{thm:conditional_equality} except we sample $M, M(0+)$ conditionally independently of ${\mvJ}$. Since $M(0+)$ is obtained as a weak limit, the results of \Cref{sec:finite} applies, particularly \Cref{lem:pathmod,lem:perturb_as_event}. Running the same argument as in \Cref{sec:uniqueness}, we obtain $M\Delta M(0+) = \emptyset$ almost surely, as desired.

\section{Central Limit Theorem}\label{sec:CLT}
In this section, we will be working under the  additional assumption that $\E|J_{x}|^{4+\gd}<\infty$ for some $\gd>0$. The Central Limit Theorem for the ground state energy is obtained via an application of Chatterjee's powerful method for normal approximation \cite{C_Normal_08}. {While the ground state $M_{n}$ can be sensitive to perturbation of the weights, the ground state energy $H(M_{n})$ is quite robust in the following sense.} Let $x\in \gS$. If we modify the weight of $x$, that is we consider the ground state with the new weight $J_{x}'=J_{x}+\eps$, then we have that $|H(M'_{n})-H(M_{n})|<\eps$. In particular, the ground state energy is a $1-$Lipschitz function of the weights $\mvJ$. The uniqueness of the ground state enables an adaptation of Lam and Sen's application of Chatterjee's method for the central limit theorem of the free energy in the zero temperature setting considered here. The following definitions for independent perturbations of our weights $\mvJ$ will be useful.  
\begin{enumerate}[(a)]
    \item $\mvJ^{S}$ is the random vector obtained when all the weights corresponding to $S\subset \gS$, that is $\{J_{x}\}_{x\in S}$ are replaced by independent copies $\{J_{x}'\}_{x\in S}$. 
    $\mvJ^{x}$ is the same random vector defined above for singletons $S=\{x\}$. 
    \item For a measurable function $g:\dR^{\gS_{n}}\to \dR$, $\partial_{x}g:=g(\mvJ)-g(\mvJ^{x})$. 
    \item For $S\subset V_{n}\sqcup E_{n}$ not containing $x$, $\partial_{x}g^{S}(\mvJ):=g(\mvJ^{S})-g(\mvJ^{S\cup\{x\}})$.
\end{enumerate}
Chatterjee's result, as stated and used in \cite{SL_23}, bounds the Kolmogorov-Smirnov distance $d_{KS}(\cdot, \cdot)$ between a scaled and centered version of $g$ and a standard Gaussian random variable $Z$ as follows: 
\begin{align}\label{eq:chatterjee}
d_{KS}\left(\frac{g(\mvJ)-\E g(\mvJ)}{\gs_{g}},Z\right)\leq \frac{\sqrt{2}}{\gs_{g}}\left(\sum_{x,y}c(x,y)\right)^{1/4} +\frac{1}{\gs_{g}^{3/2}} \left(\sum_{x}\E |\partial_{x}g|^{3}\right)^{1/2}, 
\end{align}
where 
\begin{align}
\gs_{g}^{2}:=\var(g) \text{ and } c(x,y):=\max_{S,T}\cov(\partial_{x}g\cdot \partial_{x}g^{S},\partial_{y}g \cdot \partial_{y}g^{T}),
\end{align}
for all $S$ not containing $x$ and all $T$ not containing $y$. The function $g$ for our purposes is $H(M_n)$ where recall that $M_n$ is the a.s. unique ground state. 

\subsection{Variance Bounds }A variance lower bound is key to the application of \eqref{eq:chatterjee}, it is a necessary ingredient in showing that the error terms vanish as $n\to \infty$. 

\begin{lem}\label{lem:varb}
There exist constants $C_{1}$ and $C_{2}$ such that 
\[
C_{1}n^{d}\leq \var(H(M_{n}))\leq C_{2}n^{d}. 
\]
\end{lem}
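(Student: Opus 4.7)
The plan is to prove the two bounds separately. The upper bound will follow from a direct application of the Efron--Stein inequality combined with the $1$-Lipschitz property of the ground state energy. The lower bound is more subtle and will combine the local flexibility analysis of \Cref{sec:excitedstates} (which shows that each site contributes a uniform positive amount to the conditional variance) with the decay of correlations implied by \Cref{thm:main} (which prevents these positive contributions from cancelling across the torus).

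For the upper bound, let $\mvJ^{x}$ denote $\mvJ$ with the weight $J_x$ replaced by an independent copy $J'_x$, and let $M_n^x$ be the corresponding ground state. Comparing $M_n$ and $M_n^x$ as candidate matchings for the two Hamiltonians (which agree off $\{x\}$), one obtains $|H(M_n) - H(M_n^x)| \le |J_x - J'_x|$, and hence Efron--Stein yields
\[
\var(H(M_n)) \le \tfrac{1}{2}\sum_{x \in \Sigma_n} \E[(H(M_n) - H(M_n^x))^2] \le |\Sigma_n|\,\var(J_1) \le C_2 n^d.
\]

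For the lower bound, the first step is a single-site analysis. By \Cref{def:Mx} together with \eqref{eq:metastate_e}--\eqref{eq:flexibility}, conditional on $\mvJ_{-x}:=(J_y)_{y\ne x}$ the ground state energy is the piecewise-linear concave function $J_x \mapsto H^{1}(\mvJ_{-x}) + (J_x \wedge K_{n,x})$, where $K_{n,x}$ is the transition point. Consequently
\[
\var(H(M_n) \mid \mvJ_{-x}) = \var(J_x \wedge K_{n,x} \mid K_{n,x}),
\]
which is bounded below by a constant $c_I > 0$ whenever $K_{n,x}$ lies in a fixed bounded interval $I$ interior to the support of the good density. Invoking \Cref{thm:main} and translation invariance on the torus, the law of $K_{n,x}$ admits a nondegenerate continuous limit as $n \to \infty$, so one can fix $I$ and $p > 0$ with $\dP(K_{n,x} \in I) \ge p$ uniformly in $n$ and $x \in \Sigma_n$.

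The second step is to boost this single-site contribution of constant order into the sharp $n^d$ bound. I would tile $\dT_n^d$ into boxes $\{B_\alpha\}$ of large constant side $L$ and apply Doob's martingale decomposition along the filtration $(\mathcal{F}_\alpha)$ generated by the weights in the first $\alpha$ boxes, giving
\[
\var(H(M_n)) = \sum_\alpha \E\!\left[\var\bigl(\E[H(M_n)\mid \mathcal{F}_\alpha] \mid \mathcal{F}_{\alpha-1}\bigr)\right].
\]
Each summand is then bounded below by the conditional variance contribution of a vertex deep inside $B_\alpha$, which by the decay-of-correlations byproduct (\Cref{cor:decay_corr}) is essentially unchanged from the single-site bound derived above; thus each box contributes a positive amount of order $L^d$, and summing over $\Theta((n/L)^d)$ boxes yields the desired bound. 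The principal obstacle lies precisely in this last step: without the uniqueness conclusion of \Cref{thm:main} and the decay of correlations it implies, the fluctuations from well-separated regions could in principle cancel and destroy the $n^d$ scaling.
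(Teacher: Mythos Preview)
Your upper bound via Efron--Stein is correct and is exactly what the paper does.

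For the lower bound, the paper's route is both simpler and avoids the gap in your second step. The key observation you are missing is that the uniform-in-$n$ positivity of $\dP(K_{n,x}\in I)$ (more precisely, of $\dP(\max\{J_x,J'_x\}<K_{n,x})$) can be obtained \emph{purely locally}, with no appeal to \Cref{thm:main} or to \Cref{cor:decay_corr}. Indeed, the event $\{O(x)<0\}=\{J_u+J_x<J_{u,x}\text{ for all }u\sim x\}$ depends only on the weights in the $1$-neighbourhood of $x$, has probability bounded away from $0$ uniformly in $n$, and forces $x$ to be a monomer in any ground state; this is \Cref{lem:varlbaux}. Once this local bound is available, a martingale along the \emph{single vertices} (not blocks) of $\mathsf V_n$ already gives $n^d$ increments each of which is bounded below by a fixed positive constant, because on the event $\{J_{j+1},J'_{j+1}<K_{n,j+1}\}$ the energy difference after resampling $J_{j+1}$ is exactly $J_{j+1}-J'_{j+1}$. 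No block tiling and no correlation decay are needed.

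Your second step, as written, has a genuine gap. You assert that the box martingale increment is bounded below by ``the conditional variance contribution of a vertex deep inside $B_\alpha$, which by \Cref{cor:decay_corr} is essentially unchanged from the single-site bound.'' But \Cref{cor:decay_corr} only controls $\cov(\mv1_{x\in M_n},\mv1_{y\in M_n})$ for well-separated $x,y$; it says nothing about how conditioning on the weights in previously revealed boxes affects the conditional variance $\var(H(M_n)\mid \mvJ_{-x})$, which is what you would need. Bridging indicator-covariance decay to a lower bound on $\var\bigl(\E[H(M_n)\mid\cF_\alpha]\mid\cF_{\alpha-1}\bigr)$ is not automatic and would require substantial additional argument. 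Moreover, invoking \Cref{thm:main} in step~1 to get convergence of the law of $K_{n,x}$ is also more than you need (and strictly speaking \Cref{thm:main} gives convergence of $M_n$, not of the excited states entering $K_{n,x}$); the local argument of \Cref{lem:varlbaux} sidesteps this entirely.
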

As is usually expected, the lower bound is less straightforward to obtain as compared to the upper bound. Recall from \eqref{eq:flexibility} the definition of the transition point for $x\in \mathsf{V}\sqcup \mathsf{E}$, $K_{n,x}$. By definition, 
\[
\{K_{n,x}>J_{x}\}=\{x\in M_{n}\} \text{ and } \{K_{n,x}<J_{x}\}=\{x\notin M_{n}\}. 
\]
The following lemma is obvious from the convergence of $K_{n,x}$ and its independence with respect to $J_{x}$. However, the following quantitative version of the proof is instructive. 
\begin{lem}\label{lem:varlbaux}
Let $x\in \gS $. Let $J_{x}$ be the weight associated to $x$ and $J_{x}'$ be an independent copy. Then $\dP(\max\{J_{x},J_{x}'\}<K_{n,x})>c_{1}>0$ uniformly in $n$. 
\end{lem}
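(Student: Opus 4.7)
By item (i) of \Cref{lem:flex_elementary}, $K_{n,x}$ is a measurable function of $\{J_y : y \in \Sigma_n \setminus \{x\}\}$ and is hence independent of the pair $(J_x, J_x')$. Writing $F$ for the common CDF of the i.i.d.\ weights, conditioning on $K_{n,x}$ and applying Jensen's inequality yields
\[
\dP\bigl(\max(J_x, J_x') < K_{n,x}\bigr) = \E\bigl[F(K_{n,x})^2\bigr] \geq \bigl(\E F(K_{n,x})\bigr)^2 = \dP(J_x < K_{n,x})^2 = \dP(x \in M_n)^2,
\]
where the last equality uses the identity $\{K_{n,x} > J_x\} = \{x \in M_n\}$ recalled just before the lemma. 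It therefore suffices to produce a constant $c_0 > 0$, uniform in $n$, with $\dP(x \in M_n) \geq c_0$.

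The plan is to exhibit, for each $x \in \Sigma_n$, a local event $\cA_x$ depending only on finitely many weights in a bounded neighborhood of $x$, with $\dP(\cA_x)$ bounded below independently of $n$, on which $x \in M_n$ necessarily. By translation invariance of the torus, $\dP(\cA_x)$ depends on $x$ only through its type (vertex or edge) and, for edges, the direction, so the minimum over these finitely many cases yields the desired $c_0$. When $x = v$ is a vertex, take $\cA_v$ to be the event that every one of the $2d$ edges incident to $v$ is inaccessible in the sense of \Cref{def:o}; by the elementary lemma following that definition no such edge can belong to the ground state, so the monomer-dimer covering constraint forces $v$ to be a monomer. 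Choosing any bounded subinterval $I \subset (\beta, \infty)$ of the support, the conditions $J_v \in I$, $J_u \in I$ for each of the $2d$ neighbors $u$, and $J_{(u,v)} > 2\sup I$ for each of the $2d$ incident edges are finitely many independent events, each of positive probability by goodness (\Cref{def:good}), and their intersection is contained in $\cA_v$.

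When $x = e = (u,v)$ is an edge, let $\cA_e$ be the event that every edge incident to $u$ or $v$ other than $e$ itself is inaccessible and that $J_e < J_u + J_v$. On $\cA_e$, the covering constraint at $u$ and $v$ forces the restriction of any matching to $\{u,v\}$ to be either two monomers of total weight $J_u + J_v$ or the dimer $e$ of weight $J_e$, and the inequality selects the dimer $e$. A weight-range construction analogous to the vertex case, with $J_e$ additionally restricted to a suitable small subinterval of $(\beta, \infty)$, realizes $\cA_e$ with positive probability uniform in $n$.

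The only delicate point is guaranteeing $\dP(\cA_x) > 0$ for distributions with $\beta > -\infty$ (for example $\operatorname{Exp}(1)$), where weights are bounded below; this is immediate since a good distribution has strictly positive density on $(\beta, \infty)$ and all our constraints involve only finitely many bounded-interval conditions on independent coordinates. Combined with the Jensen step, this completes the proof with $c_0$ equal to the minimum of the two lower bounds produced above.
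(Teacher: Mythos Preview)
Your proof is correct and follows essentially the same strategy as the paper: construct a local event (all incident edges inaccessible for a vertex; for an edge, the analogous condition together with $J_e < J_u + J_v$) that forces $x$ into the ground state, and observe that such an event has probability bounded away from zero uniformly in $n$ because it depends only on a fixed neighborhood. The one minor difference is your preliminary Jensen step $\dP(\max(J_x,J_x')<K_{n,x})=\E[F(K_{n,x})^2]\ge \dP(x\in M_n)^2$, which reduces the problem to a one-copy bound; the paper instead takes the local event $\cA_x$ and its copy $\cA_x'$ (with $J_x$ replaced by $J_x'$) and uses $\cA_x\cap\cA_x'\subseteq\{J_x,J_x'<K_{n,x}\}$ directly. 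Both routes arrive at the same local construction, so this is a cosmetic variation rather than a genuinely different argument.
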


\begin{proof}
The probability of the event $\{x\in M_{n}\}$ admits entirely local bounds which are independent of $n$. When $x$ is a vertex, observe: 
\begin{align}
\cA_{x}:=\{O(x)<0\} =\{\max_{u:u\sim x} \left(J_{u}+J_{x}-J_{u,x}\right)<0\}\subseteq\{x\in M_{n}\}.
\end{align}
Likewise, if $x = (u,v)$ is an edge, 
\begin{align*}
\cB_{x}:=\{J_{u,v}<\min_{t\sim u, w\sim v}(J_{t,u}-J_{t}+J_{v,w}-J_{w},J_{v,w}-J_{w}+J_{u},J_{t,u}-J_{t}+J_{v},J_{u}+J_{v})\}\subseteq \{x\in M_{n}\}. 
\end{align*}

We provide the proof for the vertex case as it is more convenient to write, the same principle applies to the edge case. If $\cA_{x}'$ is the analogous event defined with only $J_{x}$ replaced with $J'_{x}$, then \[
\dP(\cA_{x}\cap \cA_{x}')\leq \dP(J'_{x},J_{x}<K_{n,x}).
\]
Since all our weights have continuous distribution with full support, it follows that $\exists c_{1}>0$ such that $\dP(\cA_{x}\cap \cA_{x}')>c_{1}$. Clearly, since the event $\cA_{x}$ depends only on the immediate neighbourhood of $x$, $c_{1}$ is independent of $n$. 
\end{proof}
Note that the same method used to prove Lemma \ref{lem:varlbaux} can be used to show that the event $\{J_{x}<K_{n,x}<J_{x}'\}$ has strictly positive probability, uniformly bounded away from zero in $n$. Restriction to the event described in Lemma \ref{lem:varlbaux} enables an easier computation of the variance lower bound.
\begin{proof}[Proof of Lemma \ref{lem:varb}]
The upper bound follows via a straightforward application of the Efron-Stein inequality. Let $x\in {\mathsf \gS_{n}}$, and consider the ground state that rises from independent replacement of the weight $J_{x}$ with an independent copy $J_{x}'$. Since the energy of the ground state is a $1-$Lipschitz function of the weights, we know that 
\[
|H_{n}(M_{n})-H_{n}(M_{n}^{x})|\leq |J_{x}-J'_{x}|.
\]
Thus, by the Efron-Setin inequality,
\[
\var(H(M_{n}))\leq \frac{1}{2}\sum_{x\in \gS_{N}}\E (H_{n}(M_{n})-H_{n}(M_{n}^{x}))^{2}\leq \frac{1}{2}\sum_{x\in \gS_{n}}\E (J_{x}-J'_{x})^{2}, 
\]
which immediately implies that there is a constant $C_{1}$ such that
\[
\var{H(M_{n})}\leq C_{1}n^{d}.
\]
For the lower bound, we will employ a martingale approach. Let us enumerate the vertices in $\mathsf V_{n}$. Consider the sequence of filtrations $\cF_{j}:=\gs(\{J_{1},J_{2},J_{3}\ldots J_{j}\})$. We use this filtration to define a martingale
\[
H_{n,j}:=\E(H(M_{n})|\cF_{j}),
\]
and corresponding martingale differences 
\[
\gD H_{n,j}:=H_{n,j+1}-H_{n,j}. 
\]
Adding up the martingale differences and using the fact that the variance is always bounded below by the variance of the conditional expectation, we get
\[
\var{H(M_{n})}\geq \var{H_{n,n^{d}}}=\sum_{j=0}^{n^{d}-1}\var{\gD H_{n,j}}.
\]
Let $J_{j+1}'$ denote an independent copy of $J_{j+1}$, let $M_{n}^{(j+1)}$ be the ground state obtained after the replacement of $J_{j=1}$. We may express the martingale difference in terms of independent replacement as
\[
{\gD H_{n,j}}= \E(H(M_{n})|\cF_{j+1})-\E(H(M_{n})|\cF_{j})=\E(H(M_{n}^{})-H(M_{n}^{(j+1)})|\cF_{j+1}). 
\]
Thus, 
\[
\var(\gD H_{n,j})=\E (\E(H(M_{n}^{})-H(M_{n}^{(j+1)})|\cF_{j+1})^{2})
\]
By a combination of Jensen's inequality and the tower property, 
\begin{align*}
\E (\E(H(M_{n})-H(M_{n}^{(j+1)})|\cF_{j})^{2})\geq& \E (\E(H(M_{n})-H(M_{n}^{(j+1))})|J_{j})^{2}) \\
&=\frac{1}{2}\E (\E(H(M_{n}^{})-H(M_{n}^{(j+1)})|J_{j+1},J_{j+1}')^{2})
\end{align*}
We may further bound below by restricting to the event $\{J_{j+1}'<K_{n,j+1}\} \cap\{J_{j+1}<K_{n,j+1}\}$.
\begin{align}
\E (H(M^{(j+1)}_{n})-H(M_{n}))^{2}\geq \E (H(M^{(j+1)}_{n})-H(M_{n}))^{2}\mv1_{J_{j+1},J_{j+1}'<K_{n,j+1}}
\end{align}
On this event,
\begin{align}
(H(M_{n}^{(j)})-H(M_{n}))^{2}\mv1_{J_{j}',J_{j}<K}=(J_{j}'-J_{j})^{2}\mv1_{J_{j}',J_{j}<K_{n,j}}
\end{align}
Note that $J'=J$ with probability $0$. If $\E (J'-J)^{2}\mv1_{\cA}=0$, then it must be that $\dP(\cA)=0$. Thus, $\E(J_{j}'-J_{j})^{2}\mv1_{J_{j}',J_{j}<K(j)}>C_{2}$ for some $C_{2}>0$ by Lemma \ref{lem:varlbaux}. In turn, this tells us that $\var(\gD M_{n,j})>\eps$, and finally on adding up the martingale differences, 
\[
\var{M_{n}}>C_{2}n^{d}. 
\]
\end{proof}
There are two error terms in \eqref{eq:chatterjee} which we need to show vanish in the limit. The variance lower bound is adequate to show that the latter of the two vanishes when $g(\mvJ)=H(M_{n}(\mvJ))$. Indeed, thanks to the finite fourth moment assumption,  
\begin{align}
\frac{1}{(\var H(M_{n}))^{3/4}}\left(\sum_{x\in V_{n}\sqcup E_{n}}\E |\partial_{x}H(M_{n})|^{3}\right)^{1/2}\leq \frac{1}{C_{1}n^{3d/4}}\left(\sum_{x} |J_{x}'-J_{x}|^{3}\right)^{1/2}\leq \frac{Cd^{1/2}n^{d/2}}{C_{2}n^{3d/4}}.
\end{align}

\subsection{Transition Point Convergence} We now upper bound the first term on the right hand side of \eqref{eq:chatterjee}.  
\begin{lem}\label{lem:CLTerrbd}
There exists a constant $C$ and $\eps(n,R)$ which decays to $0$ as $R\to \infty$ and $n\to \infty$ where $R<<n$, such that for every $R>1$ and $c(x,y)$ as in \eqref{eq:chatterjee}, 
\[
\sum_{x,y} c(x,y)\leq C(\norm{J}_{4+\gd}^{4})\cdot \left(R^{d}n^{d}+n^{2d}\eps(R,n)\right).
\]
\end{lem}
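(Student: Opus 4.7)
The plan is to split $\sum_{x,y} c(x,y)$ by $d(x, y)$: near pairs are bounded by a crude moment estimate, while far pairs are controlled via a decorrelation estimate implied by \Cref{thm:main}. Since $H(M_n)$ is $1$-Lipschitz in each weight, both $|\partial_x g|$ and $|\partial_x g^S|$ are bounded by $|J_x - J_x'|$, so the integrand inside $c(x,y)$ is bounded pointwise by $|J_x - J_x'|^2 |J_y - J_y'|^2$. With the $(4+\delta)$-moment hypothesis and Cauchy--Schwarz, $|c(x,y)| \le C\|J\|_{4+\delta}^4$ uniformly in $x,y,S,T$, so the $O(R^d n^d)$ pairs with $d(x,y)\le R$ contribute the $R^d n^d$ term. (The splitting radius can be taken to be $CR$ for a suitable constant $C$, which is absorbed into the overall constant.)

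For the far pairs, the key input is the following localization estimate: there is a function $\varepsilon(R, n)$ with $\varepsilon(R, n) \to 0$ as $R, n \to \infty$ (with $R \ll n$) such that, with probability at least $1 - \varepsilon(R, n)$, the four relevant ground states $M_n(\mvJ), M_n(\mvJ^x), M_n(\mvJ^S), M_n(\mvJ^{S\cup\{x\}})$ all agree outside $B_R(x)$, and their restrictions to $B_R(x)$ are measurable functions of the weights (or their primed copies on $S$) lying in $B_R(x)$, after enlarging $R$ by a fixed constant factor. This is where \Cref{thm:main} enters: since $M$ is measurable with respect to $\mvJ$ in the infinite-volume limit, martingale convergence shows that $1_{e \in M}$ is approximated by its conditional expectation given weights in $B_R(e)$ up to vanishing probability, and pulling this back to finite $n$ via the weak convergence $(\mvJ_n, M_n) \to (\mvJ, M)$ together with translation invariance renders the error uniform in $e$. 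Uniformity in the perturbation set $S$ is then immediate from the distributional equality $\mvJ^S \equald \mvJ$. \Cref{lem:obvious} (and its variant for edge modifications) further guarantees that the relevant symmetric differences are simple, finite paths rooted at $x$, consistent with containment in $B_R(x)$ on the good event.

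On the good event $G_{x, S}$, $A := \partial_x g \cdot \partial_x g^S$ is a measurable function of the weights (and the relevant primed copies) in $B_R(x)$ alone. Splitting $A = A_{\rm loc} + A_{\rm err}$ with $A_{\rm err} = A \cdot 1_{G_{x, S}^c}$ and applying Hölder's inequality together with the $(4+\delta)$-moment hypothesis gives $\|A_{\rm err}\|_2 \le C\|J\|_{4+\delta}^2 \varepsilon(R, n)^{\delta/(2(4+\delta))}$, and an analogous decomposition holds for $B := \partial_y g \cdot \partial_y g^T$ via the event $G_{y, T}$. When $d(x, y) > CR$, the localization balls $B_R(x)$ and $B_R(y)$ are disjoint, so $A_{\rm loc}$ and $B_{\rm loc}$ are independent by the i.i.d.\ structure of $\mvJ$ and the primed copies. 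Expanding $\cov(A, B)$ bilinearly, the leading term vanishes and the cross terms are each at most $C \|J\|_{4+\delta}^4 \varepsilon(R, n)^{\delta/(2(4+\delta))}$ by Cauchy--Schwarz, so $|c(x,y)| \le C \|J\|_{4+\delta}^4 \varepsilon(R, n)^{\delta/(2(4+\delta))}$ for far pairs. Summing over the $O(n^{2d})$ far pairs and redefining $\varepsilon$ to absorb the Hölder exponent yields the $n^{2d} \varepsilon(R, n)$ term. The main obstacle is the localization estimate in the second paragraph: upgrading the qualitative measurability statement of \Cref{thm:main} to a quantitative rate on finite tori, uniformly in both $x$ and $S$, requires the careful combination of martingale approximation, weak convergence, and distributional invariance sketched above.
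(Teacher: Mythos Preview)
Your overall strategy—localize near $x$, use independence for far pairs, control the error via Hölder and the $(4+\delta)$-moment—matches the paper's in spirit, but the decomposition you actually write down does not deliver independence. You set $A_{\rm err} = A \cdot 1_{G_{x,S}^c}$, which forces $A_{\rm loc} = A \cdot 1_{G_{x,S}}$. But the indicator $1_{G_{x,S}}$ is itself a global object: to decide whether the full ground state $M_n$ is locally determined near $x$ you must compare $M_n$ (a function of all weights in $\Sigma_n$) with its local proxy, so $1_{G_{x,S}}$ depends on weights outside $B_R(x)$. Consequently $A_{\rm loc}$ and $B_{\rm loc}$ are not functions of disjoint families of i.i.d.\ variables, and the claimed independence of the leading terms fails. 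A related misstatement: your localization event asserts that all four ground states $M_n(\mvJ), M_n(\mvJ^x), M_n(\mvJ^S), M_n(\mvJ^{S\cup\{x\}})$ agree outside $B_R(x)$, but for general $S$ this is false—if $S$ contains sites far from $x$, then $M_n(\mvJ^S)$ will typically differ from $M_n(\mvJ)$ near those sites. Only the pairwise comparisons $(\mvJ,\mvJ^x)$ and $(\mvJ^S,\mvJ^{S\cup\{x\}})$ localize to a neighborhood of $x$.

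The paper sidesteps both issues by taking a \emph{deterministic} local approximation rather than a high-probability event: it replaces $\partial_x H(M_n)$ by $\partial_x H(M_{R,x})$, where $M_{R,x}$ is the ground state on a torus of side $R$ centered at $x$. This is automatically a function of the weights (and relevant primed copies) in $B_R(x)$, so in the $16$-term bilinear expansion the single non-error term is genuinely independent for far pairs with no indicator present. The error $G_{R,x}=\partial_x H(M_n)-\partial_x H(M_{R,x})$ is controlled via the explicit transition-point formula \eqref{eq:replacement1} and the convergence $|K_{n,x}-K_{R,x}|\to 0$ in probability (Lemma~\ref{lem:Kcon}), which is exactly where Theorem~\ref{thm:main} enters. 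Your martingale-convergence route is not wrong in principle, but to salvage it you would need to define $A_{\rm loc}$ as the local function itself (extended to the whole space) rather than as $A$ times an indicator—and at that point you are essentially rebuilding the paper's small-torus approximation by less explicit means.
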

We will be defining the sequence $\eps(R,n)$ to be the deviation in a certain sense, of a ground state configuration that arises from the restriction to a box $B_{R}$. We define $M_{n,R,x}$ to be the ground state of the monomer dimer model defined on $B_{R}(x)\cap {\mathsf \gS_{N}}$ with periodic boundary conditions for $R<n$. Observe that 
\begin{align}\label{eq:replacement1}
\partial_{x}H(M_{n})= (J_{x}'-J_{x})\mv1_{J_{x},J_{x}'<K_{n,x}}&+(K_{n,x}-J_{x})\mv1_{J_{x}<K_{n,x}<J_{x}'} \\ &+(J_{x}'-K_{n,x})\mv1_{J_{x}'<K_{n,x}<J_{x}}.
\end{align}
Define 
\begin{align} \label{eq:reperrorbd}
E_{x,n,1}:=\frac{K_{n,x}-J_{x}}{J_{x}'-J_{x}}\mv1_{J_{x}<K_{n,x}<J'_{x}} \text{ and } E_{x,n,2}:=\frac{J_{x}'-K_{n,x}}{J_{x}'-J_{x}}\mv1_{J_{x}'<K_{n,x}<J_{x}}.  
\end{align}
We define $E_{x,R,1}$ and $E_{x,R,2}$ analogously using $K_{R,x}$ instead of $K_{n,x}$. Observe that $|E_{x,\cdot,\cdot}|\leq 1$ almost surely, and further 
\[
|E_{x,n,\cdot}-E_{x,R,\cdot}|\leq 2.
\]
In extracting a bound for the $c(x,y)$, we will be replacing $\partial_{x} H(M_{n})$ with its corresponding local version, and then show that the error of replacement is small. In particular, we seek a bound on $\partial_{x}H(M_{n})-\partial_{x}H(M_{n,R})$. We may express \eqref{eq:replacement1} in terms of the random variables defined in \eqref{eq:reperrorbd}, as
\begin{align*}
\partial_{x}H(M_{n})=(J'_{x}-J_{x})\left(\mv1_{J_x,J_x'<K_{n,x}}+E_{x,n,1}+E_{x,n,2}\right).
\end{align*}
Thus,
\begin{align*}
\partial_{x}H(M_{n})-\partial_{x}H(M_{R,x})=(J'_{x}-J_{x})\biggl((\mv1_{J_x,J_x'<K_{n,x}}-\mv1_{J_x,J_x'<K_{R,x}})+(E_{x,n,1}-E_{x,R,1})+(E_{x,n,2}-E_{x,R,2})\biggr)
\end{align*}
For convenience, we define 
\begin{align}
A_{1}(n,R,x) &:= \mv1_{J_{x},J_{x}'<K_{n,x}}-\mv1_{J_{x},J_{x}'<K_{R,x}}\label{eq:a1nr} \\
A_{2}(n,R,x) &:= E_{x,n,1}-E_{x,R,1}\nonumber \\ 
A_{3}(n,R,x) &:= E_{x,n,2}-E_{x,R,2}\nonumber. 
\end{align}
To reiterate, $|A_{1}|,|A_{2}|,|A_{3}|\leq 2$ almost surely. By H\"older's inequality, with $p=1+\frac{\gd}{4}$ and $q=1+\frac{4}{\gd}$,
\begin{align}\label{eq:4+gd}
\E|\partial_{x}H(M_{n})-\partial_{x}H(M_{R})|^{4}\leq \left(\E(J_{x}'-J_{x})^{4+\gd}\right)^{4/(4+\gd)}\cdot \left(\E|A_{1}+A_{2}+A_{3}|^{(16+4\gd)/\gd}\right)^{\gd/(4+\gd)}. 
\end{align}
The reason for the marginally higher moment requirement as compared to \cite{SL_23} is now apparent, the finite temperature version is appropriately ``smoothed out'', enabling the use of $\gd=0$, and consequently $q=\infty$. We  define 
\begin{align}
\eps(n,R):=\left(\E|A_{1}+A_{2}+A_{3}|^{(16+4\gd)/\gd}\right)^{\gd/(4+\gd)}. 
\end{align}
Proving the CLT comes down to showing that $\eps(n,R)$ vanishes in an appropriate sense. 
\begin{lem}\label{lem:Kcon}
As $n,R\to \infty$, for every $\gd>0$,
\[
\dP(|K_{n,x}-K_{R,x}|>\gd,\max\{K_{n,x},K_{R,x}\}>\gb)\to 0.
\]
where $\gb$ is the lower bound of the support of $J$.  
\end{lem}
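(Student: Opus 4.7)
\emph{Setup and reduction.} Using the identity \eqref{eq:flexibility}, I would write $K_{n,x}=J_x+\Delta H(n,\{x\},1,0)$ and analogously for $K_{R,x}$, so that
\[
K_{n,x}-K_{R,x}=\Delta H(n,\{x\},1,0)-\Delta H(R,\{x\},1,0).
\]
The plan is to prove the stronger statement $\dP(|K_{n,x}-K_{R,x}|>\delta)\to 0$, from which the lemma (with the harmless extra restriction $\max\{K_{n,x},K_{R,x}\}>\beta$) follows immediately; that restriction is included in the statement only to delimit the effective range of transition points for the subsequent CLT error estimates. Thus what I need is that $\Delta H(n,\{x\},1,0)$ and $\Delta H(R,\{x\},1,0)$, realized as functions of a single coupled sample $\mvJ$ on $\dZ^d$ by restriction, converge to a common limit in probability as $n,R\to\infty$.

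\emph{Metastate convergence and upgrade to convergence in probability.} The joint vector $(\mvJ_n,\mvM_n,\Delta\mvH_n)$ is tight thanks to \eqref{eq:energy_diff_tight} and compactness of $\{0,1\}^{\gS}$, and hence has subsequential weak limits $(\mvJ,\mvM,\Delta\mvH)$. I would first extend Theorem \ref{thm:main} to each constrained ground state $M_{\infty,\{x\},\xi}$ by replaying the argument of \Cref{sec:uniqueness} for the constrained problem: couple two conditionally independent copies of $M_{\infty,\{x\},\xi}$ given $\mvJ$, rule out starting points of their symmetric difference via the mass-transport/Burton-Keane argument of \Cref{lem:BK}, and rule out bi-infinite paths via the absolutely continuous perturbation of \Cref{lem:perturb_as_event}. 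The argument only uses translation invariance of the background weights (preserved outside the single site $x$) and the genericity of the distribution, so it transfers with only cosmetic changes. Granting this, $\Delta H(\{x\},1,0)=f(\mvJ)$ for some measurable $f$, all subsequential limits coincide, and $\Delta H(n,\{x\},1,0)\to f(\mvJ)$ jointly with $\mvJ$ in law along the full sequence (and similarly as $R\to\infty$). Weak convergence to a measurable function of $\mvJ$ then upgrades automatically to convergence in probability: testing the joint law against a bounded continuous function of the form $h(\Delta H(n,\{x\},1,0)-f(\mvJ))$ for $h$ a Lipschitz bump at $0$ gives $\dE h(\cdot)\to h(0)$, which forces the in-probability conclusion. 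The triangle inequality then yields $|K_{n,x}-K_{R,x}|\to 0$ in probability.

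\emph{Main obstacle.} The principal technical step is the extension of Theorem \ref{thm:main} to the constrained ground states, i.e., showing $M_{\infty,\{x\},\xi}$ is a measurable function of $\mvJ$. While the proof template of \Cref{sec:uniqueness} transfers, one has to verify that each ingredient---optimality, flexibility, and the absolutely continuous path perturbation of \Cref{lem:perturb_as_event}---remains valid in the presence of a single-site constraint, since the constraint explicitly breaks translation invariance at one vertex. A cleaner alternative, avoiding any reworking of \Cref{sec:uniqueness}, is to approximate the hard constraint by continuous weight modifications: replace $J_x$ by $J_x+L$ with $L\to+\infty$ (forcing $x$ out) or $L\to-\infty$ (forcing $x$ in), invoke \Cref{cor:contmodification} together with Theorem \ref{thm:main} to see the modified ground state as a measurable function of $\mvJ$, and then take $L\to\pm\infty$; the stabilization threshold in $L$ is a.s.\ finite thanks to the tightness of $\Delta H(n,\{x\},1,0)$ recorded in \eqref{eq:energy_diff_tight}, so the constrained infinite-volume object is identified as an a.s.\ limit of measurable functions and is itself measurable in $\mvJ$.
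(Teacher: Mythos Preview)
Your proposal takes a quite different route from the paper. The paper does not attempt to show that $K_{n,x}$ itself converges; instead it observes that $\{x\in M_n\Delta M_R\}=\{J_x\in(\min\{K_{n,x},K_{R,x}\},\max\{K_{n,x},K_{R,x}\})\}$ and that, by \Cref{thm:main}, $\dP(x\in M_n\Delta M_R)\to 0$. Since $J_x$ is independent of $(K_{n,x},K_{R,x})$ and has a strictly positive continuous density on $(\gb,\infty)$, a gap of width exceeding $\gd$ between the two transition points with $\max>\gb$ would force $J_x$ to land inside it with probability bounded away from zero; this is impossible, so the gap must shrink. No extension of \Cref{thm:main} to the constrained states $M_{\{x\},\xi}$ is needed.

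Your plan instead hinges on establishing precisely that extension, and as written it does not close. Your ``cleaner alternative'' sends $J_x\to J_x+L$ with $L\to-\infty$ to force $x$ into the matching, but when $\gb>-\infty$ a negative shift of $J_x$ is not absolutely continuous with respect to the original law (\Cref{def:good} and \Cref{lem:abs} cover only nonnegative shifts), so you cannot invoke \Cref{thm:main} through \Cref{cor:contmodification} in that direction. Your first alternative, replaying \Cref{sec:uniqueness} with a single-site constraint, is underspecified: \Cref{lem:BK} and \Cref{lem:MTP2,lem:unique_min,lem:MTP1} all use translation invariance of the joint law, which the constraint at $x$ destroys; ``only cosmetic changes'' does not account for this. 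Finally, the condition $\max\{K_{n,x},K_{R,x}\}>\gb$ is not a harmless technicality. On its complement both transition points lie at or below the left endpoint of the support of $J_x$ and carry no information about $M_n$ or $M_R$; the paper's independence argument genuinely says nothing there, and your stronger unrestricted claim would amount to showing that $\Delta H(n,\{x\},1,0)$ converges even in that regime---essentially controlling the critical droplet of \Cref{sec:open}, a question the paper leaves open.
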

\begin{proof}

By \Cref{thm:main}, we know that the triple $(\mvJ, \mvM_{n},\mvM_{R})$ converges in law to $(\mvJ, \mvM,\mvM)$ where $\mvM $ is a measurable function of $\mvJ$.  Thus for any $x\in {\mathsf\gS}$,
\begin{align}\label{eq:Kcon1}
\dP\biggl(J_{x}\in \bigl(\min\{K_{n,x},K_{R,x}\},\max\{K_{n,x},K_{R,x}\}\bigr)\biggr)=\dP(x\in M_n\gD M_{R})\to 0
\end{align}
as $n,R \to \infty$, where the symmetric difference is appropriately defined via periodic extension. Define
\begin{align}
\cA_{n,R,\gd}:=\{J_{x}\in (K_{n,x},K_{R,x}),\text{ }|K_{n,x}-K_{R,x}|>\gd,\text{ }\max\{K_{n,x},K_{R,x}\}>\gb\},
\end{align}
and for $M\in \dR$ and $y>0$, define 
\[
\cA_{n,R,\gd,M,y}:=\cA_{n,R,\gd}\cap \{-M\leq K_{n,x},K_{R,x}\leq M\}\cap\bigl\{\max\{K_{n,x},K_{R,x}\}>y+\gb\bigr\}.
\]
Now, using the independence of $J_{x}$ and $(K_{n,x},K_{R,x})$, we can estimate 
\begin{align*}
\dP(\cA_{n,R,\gd,M,y})\geq \dP\biggl(|K_{n,x}-K_{R,x}|>\gd,\text{ }\max\{K_{n,x},K_{R,x}\}\in(\gb+y,M),\text{}\min\{K_{n,x},K_{R,x}\}>-M\biggr)\cdot P_{M,y} 
\end{align*}

where 
\[
P_{M,y}=\min\bigl\{p(z):z\in [\max\{\gb+y,-M\},M]\bigr\}
\]
and $p(z)$ is the density of $J$. Observe that $p(z)>0$ on $(\gb,\infty)$ by hypothesis, and thus $P_{M,y}>0$ by continuity. Thus, 
\[
\dP\biggl(|K_{n,x}-K_{R,x}|>\gd,\text{ } \gb+y<\max\{K_{n,x},K_{R,x}\}\leq M,\text{ }\min\{K_{n,x},K_{R,x}\}>-M\biggr)\leq \frac{\dP(\cA_{n,R,\gd, M,y})}{P_{M,y}}.
\]
Now, let $\eps>0$ be fixed. Since $K_{n,x}$ and $K_{R,x}$ are both tight, we may find an $M_{\eps}$ such that $\dP(\max\{K_{n,x},K_{R,x}\}>M_{\eps})<\eps$. Further, since the sequence of sets $\cA_{n,R,\gd,M,y}$ is nested and increasing as $y\downarrow 0$, we may also find a $y_{\eps}$ sufficiently small such that $\dP(\cA_{n,R,\gd}\setminus \cA_{n,R,\gd,y_{\eps}})<\eps$ Thus,
\[
\dP(|K_{n,x}-K_{R,x}|>\gd, \max\{K_{n,x},K_{R,x}\}>\gb)\leq \frac{\dP(\cA_{n,R,\gd,M_{\eps},y_{\eps}})}{P_{M_{\eps},y_{\eps}}}+2\eps.
\]
Next, by \eqref{eq:Kcon1}, we know that for $n,R$ sufficiently large, $\dP(\cA_{n,R,\gd,M_{\eps},y_{\eps}})\leq P_{M_{\eps},y_{\eps}}\cdot\eps$. Thus, for every $\gd>0$ and $\eps>0$, for $n,R$ sufficiently large, 
\[
\dP(|K_{n,x}-K_{R,x}|>\gd,\max\{K_{n,x},K_{R,x}\}>\gb)<3\eps. 
\]
If $-\infty<\gb$ and $p(\gb)>0$, the step involving $y_{\eps}$ can be avoided. 
\end{proof}
 
\begin{corollary}\label{cor:epszero}

Let $R(n)$ be a sequence increasing to infinity such that $R/n$ is $o(1)$. As $n\to \infty$, 
\[
\E(|A_{1}|^{p})+\E(|A_{2}|^{p})+\E(|A_{3}|^{p})\to 0
\]
for all $1\leq p<\infty$. Thus, 
\[
\eps(n,R(n))\to 0. 
\]
\end{corollary}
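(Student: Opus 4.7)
The plan is to reduce the $L^p$ convergence to convergence in probability via the almost sure bound $|A_i|\le 2$, and then to obtain convergence in probability from \Cref{lem:Kcon} together with tightness of the flexibilities and continuity of the density of $J_x$. Since $|A_i|^p \le 2^p$ almost surely, bounded convergence reduces the first display to showing $A_i \to 0$ in probability for each $i\in\{1,2,3\}$. The second display then follows by applying the elementary inequality $(a+b+c)^p \le 3^{p-1}(a^p+b^p+c^p)$ with $p=(16+4\gd)/\gd$ to the first assertion.

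To prove $A_i \to 0$ in probability as $n\to\infty$ with $R=R(n)$, I would fix $\theta,\epsilon>0$. By tightness of $(K_{n,x},K_{R,x})$, which is a consequence of the uniform bound \eqref{eq:energy_diff_tight}, one can choose $M$ large enough that $\dP(\max(|K_{n,x}|,|K_{R,x}|)>M)<\epsilon/3$ for all large $n$; on $[-M,M]$ the continuous density of $J_x$ is bounded above by some $C_M<\infty$. By \Cref{lem:Kcon}, for each fixed $\rho>0$ and all $n$ sufficiently large, $\dP(|K_{n,x}-K_{R,x}|>\rho,\ \max(K_{n,x},K_{R,x})>\gb)<\epsilon/3$. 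On the event $\{\max(K_{n,x},K_{R,x})\le \gb\}$, all indicators appearing in $A_i$ vanish since $J_x,J_x'>\gb$ almost surely, so $A_i=0$ there.

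It remains to bound $\dP(|A_i|>\theta)$ on $\cE_\rho := \{|K_{n,x}-K_{R,x}|\le\rho,\ \max(|K_{n,x}|,|K_{R,x}|)\le M\}$. Since each $K_{\cdot,x}$ is a function of $(J_y)_{y\neq x}$, it is independent of $(J_x,J_x')$, so I would condition on the pair $(K_{n,x},K_{R,x})$. For $A_1$, the event $\{A_1\neq 0\}$ forces $\max(J_x,J_x')$ into an interval of length at most $\rho$ inside $[-M,M]$, giving conditional probability at most $2C_M\rho$. For $A_2$, I would split by the values of the indicators $\mv1_{J_x<K_{n,x}<J_x'}$ and $\mv1_{J_x<K_{R,x}<J_x'}$: in the two off-diagonal cases, either $J_x$ or $J_x'$ must fall in an interval of length at most $\rho$, contributing at most $4C_M\rho$; in the ``both active'' case $|A_2|=|K_{n,x}-K_{R,x}|/(J_x'-J_x)\le \rho/(J_x'-J_x)$, so $|A_2|>\theta$ forces $J_x$ into an interval of length at most $\rho/\theta$, contributing at most $C_M\rho/\theta$. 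The analysis of $A_3$ is identical. Choosing $\rho$ small enough relative to $\theta$, $C_M$, and $\epsilon$ yields $\dP(|A_i|>\theta)<\epsilon$ for all large $n$.

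The main subtle point is that $A_2$ and $A_3$, being ratios, can be close to $1$ in absolute value even when $|K_{n,x}-K_{R,x}|$ is tiny, provided $J_x'-J_x$ is also tiny; the case analysis above resolves this by forcing $J_x$ into a small interval on such events, which is itself improbable by the uniform density bound $C_M$ on compact sets. Everything else is a routine assembly of the three error terms, and applying Minkowski once more gives the conclusion $\eps(n,R(n))\to 0$.
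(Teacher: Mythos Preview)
Your proof is correct and follows the same high-level strategy as the paper: use the uniform bound $|A_i|\le 2$ to reduce $L^p$ convergence to convergence in probability, invoke \Cref{lem:Kcon}, and note that $A_i$ vanishes on $\{\max(K_{n,x},K_{R,x})\le\gb\}$. The paper's own proof is extremely terse at the key step, simply writing ``apply the dominated convergence theorem and \Cref{lem:Kcon},'' whereas you actually supply the missing bridge from $|K_{n,x}-K_{R,x}|\to 0$ in probability to $A_i\to 0$ in probability. Your case analysis for $A_2,A_3$---in particular the observation that in the ``both indicators active'' case $|A_2|>\theta$ forces $J_x$ into an interval of length $\rho/\theta$ around $K_{n,x}$---is exactly the content that the paper suppresses, and it is handled correctly using independence of $(K_{n,x},K_{R,x})$ from $(J_x,J_x')$ and the local density bound $C_M$. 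One cosmetic point: the interval into which $J_x$ is forced may stick slightly outside $[-M,M]$, so strictly speaking you should take $C_M$ to be the supremum of the density over $[-M-\rho/\theta,M]$, which is still finite by continuity; this does not affect the argument.
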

\begin{proof}
For $i\in \{1,2,3\}$, 
\[
A_{i}(n,R)\cdot \mv1_{\max\{K_{n,x},K_{R,x}\}\leq\gb}=0.
\]
It follows from definition (see \eqref{eq:a1nr}) that 
\[
|A_{i}(n,R)|\leq 2\cdot \mv1_{\max\{K_{n,x},K_{R,x}\}>\gb}. 
\]
We may now apply the dominated convergence theorem and Lemma \ref{lem:Kcon} to conclude that  
\[
\E|A_{i}(n,R)|^{p}\to 0
\]
as $n,R(n)\to \infty$. The decay of $\eps(n,R(n))$ follows from Jensen's inequality. 
\end{proof}
\begin{corollary}\label{cor:decay_corr}
Let $x_{1},x_{2}\in {\mathsf \gS_{n}}$, such that $d(x_{1},x_{2})\geq 2R$. Then the following bound on the covariance holds:
\begin{align*}
\cov\left(\mv1_{x_{1}\in M_{n}},\mv1_{x_{2} \in M_{n}}\right)\leq 8\E|A_{1}(n,R)|+4\E(A_{1}(n,R))^{2}\leq 12\eps(n,R)^{(4+\gd)/\gd},
\end{align*}
where $A_{1}(n,R)$ is defined in \eqref{eq:a1nr}. Consequently, 
$$
\cov\left(\mv1_{x_{1}\in M_{n}},\mv1_{x_{2} \in M_{n}}\right) \to 0
$$
as $n,R \to \infty$ with $R = o(n)$.
\end{corollary}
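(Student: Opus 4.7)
The plan is to prove the decay of correlations by approximating the global indicator $\mathbf{1}_{x_i\in M_n}=\mathbf{1}_{J_{x_i}<K_{n,x_i}}$ with a local proxy built from the ground state on a box of radius $R$ around $x_i$. Concretely, set $Y_i:=\mathbf{1}_{J_{x_i}<K_{n,x_i}}$ and $Y_i^R:=\mathbf{1}_{J_{x_i}<K_{R,x_i}}$, where $K_{R,x_i}$ is the transition point computed for the ground state $M_{n,R,x_i}$ on $B_R(x_i)$. Since $K_{R,x_i}$ is a measurable function of $\{J_y:y\in B_R(x_i)\}$ only, the hypothesis $d(x_1,x_2)\ge 2R$ forces the pairs $(J_{x_1},K_{R,x_1})$ and $(J_{x_2},K_{R,x_2})$ to be independent, so in particular $\cov(Y_1^R,Y_2^R)=0$.

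Writing $D_i:=Y_i-Y_i^R\in\{-1,0,1\}$ and expanding bilinearly,
\[ \cov(Y_1,Y_2)=\cov(Y_1^R,Y_2^R)+\cov(D_1,Y_2^R)+\cov(Y_1^R,D_2)+\cov(D_1,D_2), \]
the first term vanishes by the above. Combined with $|Y_i^R|,|D_i|\le 1$ and the identity $D_i^2=|D_i|$, this reduces the task to estimating $\E|D_i|$ and $\E|D_1D_2|$ up to absolute constants. The key observation is the interval-containment interpretation: $|D_i|=\mathbf{1}_{J_{x_i}\in I_i}$ and $|A_1(n,R,x_i)|=\mathbf{1}_{\max(J_{x_i},J_{x_i}')\in I_i}$, where $I_i$ is the interval with endpoints $K_{n,x_i}$ and $K_{R,x_i}$. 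Conditioning on $(K_{n,x_i},K_{R,x_i})$ and using that $J_{x_i},J_{x_i}'$ are i.i.d.\ with CDF $F$ and independent of both transition points,
\[ \E|D_i|=\E[F(K_\vee)-F(K_\wedge)], \qquad \E|A_1(n,R,x_i)|=\E[(F(K_\vee)-F(K_\wedge))(F(K_\vee)+F(K_\wedge))], \]
with $K_\vee,K_\wedge$ the max and min of the two transition points. Setting $\xi_i:=F(K_\vee)-F(K_\wedge)$ and $\eta_i:=F(K_\vee)+F(K_\wedge)$, the elementary inequality $\xi_i\le\eta_i$ together with $A_1^2=|A_1|$ yields the first stated inequality $\cov(Y_1,Y_2)\le 8\E|A_1(n,R)|+4\E A_1(n,R)^2$ after invoking translation invariance $\E|A_1(n,R,x_1)|=\E|A_1(n,R,x_2)|$.

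The second inequality $\le 12\,\eps(n,R)^{(4+\delta)/\delta}$ then follows by Jensen's inequality together with the uniform bound $|A_i|\le 2$ and the defining formula of $\eps(n,R)$ in terms of the $(16+4\delta)/\delta$-moment of $|A_1+A_2+A_3|$. The vanishing $\cov\to 0$ as $n,R\to\infty$ with $R=o(n)$ is then immediate from Corollary \ref{cor:epszero}. The main obstacle in this proof is avoiding the naive Cauchy--Schwarz step that would yield only a square-root bound $\E|D_i|\lesssim\sqrt{\E|A_1(n,R,x_i)|}$; extracting a linear-in-$|A_1|$ estimate hinges on the factorization $|A_1|=\xi\eta$ with $\xi\le\eta$, which converts $L^1$ control of the "jump" $\xi$ into the desired control via the "mass scale" $\eta$.
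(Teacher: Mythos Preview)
Your approach mirrors the paper's: write $\mathbf{1}_{x_i\in M_n}=\mathbf{1}_{x_i\in M_R}+D_i$, use independence of the local proxies when $d(x_1,x_2)\ge 2R$, and bound the three remaining covariance terms. The gap is in your claim that ``$\xi_i\le\eta_i$ together with $A_1^2=|A_1|$ yields the first stated inequality.'' In your own notation $\E|D_i|=\E[\xi_i]$ while $\E|A_1(n,R,x_i)|=\E[\xi_i\eta_i]$, and the covariance decomposition produces terms controlled by $\E|D_i|$, not by $\E|A_1|$. The inequality $\xi_i\le\eta_i$ only gives $\xi_i^2\le\xi_i\eta_i$, hence $\E[\xi_i^2]\le\E|A_1|$; it does \emph{not} give $\E[\xi_i]\le C\,\E|A_1|$. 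Indeed no such linear bound can hold in general: if $F(K_\wedge)=0$ and $F(K_\vee)=\epsilon$ (both transition points near $\beta$), then $\xi_i=\eta_i=\epsilon$, so $\E[\xi_i]\asymp\epsilon$ while $\E[\xi_i\eta_i]\asymp\epsilon^2$. Thus the first displayed inequality, with $A_1$ literally as defined in \eqref{eq:a1nr}, is not established by your argument.

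In fact the paper's own proof commits the same slip: its bilinear expansion tacitly identifies $\mathbf{1}_{x_i\in M_n}-\mathbf{1}_{x_i\in M_R}$ with $A_1(n,R,x_i)$, but the latter involves the independent copy $J'_{x_i}$ and is a different random variable. What your decomposition genuinely yields is a bound of the shape $|\cov(Y_1,Y_2)|\le C(\E|D_1|+\E|D_2|)$, and the ``Consequently'' part then follows directly because $\E|D_i|=\dP(x_i\in M_n\Delta M_R)\to 0$; this is exactly \eqref{eq:Kcon1}, the starting point for Lemma~\ref{lem:Kcon} and Corollary~\ref{cor:epszero}. So your plan correctly proves the correlation decay, but not the precise inequality with $A_1$ as written; that appears to be a notational slip in the statement which the $\xi\le\eta$ trick cannot repair.
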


\begin{proof}
Using the bilinearity of the covariance, we can expand the covariance as follows: 
\begin{align*}
\cov\left(\mv1_{x_{1}\in M_{n}},\mv1_{x_{2} \in M_{n}}\right) &=\cov\left(\mv1_{x_{1}\in M_{R}},\mv1_{x_{2} \in M_{R}}\right)+\cov(A_{1}(n,R,x_{1}),\mv1_{x_{2}\in M_{R}})\\ &+\cov(\mv1_{x_{1}\in M_{R}},A_{1}(n,R,x_{2}))+\cov(A_{1}(n,R,x_{1}),A_{2}(n,R,x_{2})).
\end{align*}
Note, since $d(x_{1},x_{2})>2R$, the tori of side length $R$ centered at $x_{1}$ and $x_{2}$ are disjoint. Therefore, the ground states defined on the respective boxes are independent, since the corresponding weights are independent. This tells us
\[
\cov(\mv1_{x_{1}\in M_{R}},\mv1_{x_{2}\in M_{R}})=0. 
\]
The following bound is straightforward since $\mv1_{x\in M_{R}}$ is bounded above by $1$;
\begin{align*}
\cov(A_{1}(n,R,x_{1}),\mv1_{x_{2}\in M_{R}})\leq 4\E|A_{1}(n,R)|. 
\end{align*}
Of course an analogous bound follows when the roles of $x_{1}$ and $x_{2}$ are swapped. Next, as a consequence of the Cauchy-Schwarz inequality, we find that 
\[
\cov(A_{1}(n,R,x_{1}),A_{2}(n,R,x_{2}))\leq 4\E (A_{1}(n,R))^{2}. 
\]
The final bound follows from Jensen's inequality.  
\end{proof}
\subsection{Proof of CLT} 

We emphasize now that $\partial_{x}H(M_{n})$ and $\partial_{x}H(M_{n})^{S}$ are identically distributed, so any moment calculations directly carry over. We will write, for convenience
\begin{align}
G_{R,S,x}:=\partial_{x}H(M_{n})^{S}-\partial_{x}H(M_{R})^{S}. 
\end{align}
{and \[G_{R,x} = G_{R, \emptyset, x}.\]}
\begin{proof}[Proof of Lemma \ref{lem:CLTerrbd}]
The primary ingredient in this proof is a double application of the Cauchy-Schwarz inequality. Note, for $S$ not containing $x$ and $T$ not containing $y$, we need to evaluate the following covariance:
\[
\cov\biggl (\partial_{x}H(M_{R,x})+G_{R,x})\cdot (\partial_{x}H(M_{R,x})^{S}+G_{R,S,x}), (\partial_{y}H(M_{R,y})+G_{R,y})\cdot (\partial_{y}H(M_{R,y})^{T}+G_{R,T,y})\biggr). 
\]
There are $16$ terms that arise on expansion of the product. First, observe that for any monomial that contains a $G$ term, that is $\cov(XY,ZW)$ where atleast one of $X,Y,Z,\text{ or }W$ is one of the error terms $G$, we have that 
\[
\cov(XY,ZW)\leq C\eps(n,R).
\]
We obtain this bound by applying the Cauchy-Schwarz inequality twice and using \eqref{eq:4+gd}. We still need a bound on the expectation of the unique monomial containing all non $G$ terms;
\begin{align}\label{lem:bigexpectation}
\cov(\partial_{x}H(M_{R,x})\partial_{x}H(M_{R,x})^{S},\partial_{y}H(M_{R,y})\partial_{y}H(M_{R,y})^{T}). 
\end{align}
In this case, note that the expectation vanishes unless $y\in B_{R}(x)$, or equivalently $x\in B_{R}(y)$. This is because if $x$ and $y$ are separated by a distance greater than $R$, then $M_{R,x}$ and $M_{R,y}$ are independent. We know that $H$ is $1-$Lipschitz, and thus, 
\[
|\partial_{x}H(M_{R})|^{4}\leq|J_{x}-J_{x}'|^{4} 
\]
This is bounded in expectation since we know that $\norm{J}_{4+\gd}<\infty$. Thus,  \eqref{lem:bigexpectation} is bounded above by 
\[
16C\eps(n,R)+C_{2}\mv1_{y\in B_{R}(x)}.
\]
Summing over $x,y$ completes the proof. 
\end{proof}
Lemma \ref{lem:CLTerrbd} was the last piece required to prove Theorem \ref{thm:CLT}.
\begin{proof}
We have already shown that
\[
\frac{1}{({\var{H(M_{n})}})^{3/4}}\left(\sum_{x\in \mathsf{V}_{n}\sqcup \mathsf{E}_{n}}\E|\partial_{x}H(M_{n})|^{3}\right)^{1/2}\to 0.
\]
The only step remaining is to specify the $R$ in Lemma \ref{lem:CLTerrbd}. The only requirement is that $R\to \infty$ as $n\to \infty$ and $R/n=o(1)$. A valid choice is yielded by $R=\lfloor \sqrt{n}\rfloor$. It is clear then that 
\[
\frac{1}{\sqrt{\var{H(M_{n})}}}\left(\sum_{x,y}c(x,y)\right)^{1/4}\leq C\eps(n,\sqrt{n})\to 0.
\]
\end{proof}
\subsection{Generalizations}\label{sec:generalization} 
We have defined the finite model on the torus and then taken a large $n$ limit so that translation invariance is guaranteed from the start. However, the methods described will work for any weight independent boundary condition (with suitable averaging over translates to yield translation invariance in limit). This method can also easily be adapted to the setting of amenable, unimodular, transitive graphs. This is because translation invariance and the Burton Keane argument are the key steps. Our methods should also be generalizable to obtain a `$1$ or $\infty$' type result for the cardinality of $\sf G(\mvJ)$  in the spirit of \cite{arguin_damron}, however we do not pursue this in this article.

Another direction in which our results can be generalized is related to the support of the distribution of the weights on the monomers. Our methods (directly) cover the important case of the weights on the edges being random variables with good distribution taking values in $[\gb,\infty)$ where $\gb<0$, but for the vertices the weights are identically 0. This is because in the perturbation step \Cref{eq:perturb_J}, we only perturb the edge weights, and hence absolute continuity for this case still holds. Another observation is that this case is equivalent to studying the minimum weight independent set on the line graphs of transitive, unimodular, amenable graphs, which is perhaps of independent interest. Our results directly imply uniqueness and a CLT for ground states of independent sets on such graphs.

\section{Open Questions}\label{sec:open}.

One obvious direction of generalizing \Cref{thm:main} is to extend the result for weights distribution given by any nonatomic measure. One particular case which is interesting is that of Uniform$(0,1)$ which our results do not cover. In particular, the step of making many edges inaccessible in the perturbation step breaks down in this setup.

Let $M$ be the infinite volume ground state for $\mvJ$.
A \textbf{critical droplet} for $x \in \sf \Sigma$ is the set $M\Delta M'$ where $M$ is the ground state for $\mvJ$ and $M'$ is the ground state $M_{x,\eps}$ where $\eps=1$ if $M$ is 0 at $x$ and vice-versa. In other words, we flip the status of $M$ at  $x$ and ask how the ground state is changed.
\begin{question}
Is the size of critical droplet at $\mv0$ finite almost surely?
\end{question}
The next question is about the size of $M\Delta M(p)$. Although \Cref{thm:perturbation} yields that $M \Delta M(p)$ has  finite components almost surely, the method is not quantitative. 

\begin{question}
    Let $L$ be the size of the component containing $\bf0$ of $M \Delta M(p)$. What can be said about the tail of $L$? 
\end{question}

\begin{question}
What can be deduced about the rate of convergence to $N(0,1)$ in the Central limit \Cref{thm:CLT}? This question is intimately tied to the rate of correlation decay of the ground state (\Cref{cor:decay_corr}). 
\end{question}

\section{Appendix}
\subsection{Mass Transport Principle}
In the study of translation invariant processes the {mass transport principle} (MTP) is a crucial tool. We state it below in the form used here, and refer the reader to \cite{LP:book,AL07} for more details.

Let us consider a random function ${\sf X}:{\sf \Sigma} \to \Gamma$ where $\Gamma$ is some complete separable metric space. Let us assume $\sf X$ is translation invariant in law. We will apply this for ${\sf X} = \Xi$ or ${\sf X} = X$ where $\Xi$  and $ X$ are as in \eqref{Xi}. A function $g : {\sf V} \times {\sf V} \times \Gamma^{{\sf \Sigma}} \to [0,\infty]$ is said to be invariant under diagonal action of $\dZ^d$ if for every $\alpha \in \dZ^d$, and every $\xi: {\sf \Sigma} \to \Gamma$,
$$
g(x,y,\xi) =  g(x+\alpha, y+\alpha, \xi+\alpha)
$$
where $\xi+\alpha:{\sf \Sigma} \to \Xi$ is defined as $(\xi+\alpha)_{i+\alpha} = \xi_{i}$ for all $i \in \Sigma$.
The function $g(x,y,\xi)$ can be thought of as the `mass sent from $x$ to $y$'. The mass transport principle states that in expectation (over the law of ${\sf X}$), the total mass out of the origin is the same as the total mass in.

\begin{lem}[Mass transport principle]\label{lem:MTP}
Let $g$ be a function as above which is invariant under the diagonal action of $\dZ^d$.  Then
\[
\dE\left(\sum_{y}g({\bf 0},y,{\sf X})\right)=\dE\left(\sum_{y}g(y,{\mv0},{\sf X})\right)
\]
\end{lem}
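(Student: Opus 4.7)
The plan is to reduce the identity to translation invariance of the law of ${\sf X}$ via a change of summation index. The three ingredients I will use are: non-negativity of $g$ (which makes Tonelli applicable and removes any integrability concern), the diagonal invariance hypothesis on $g$, and translation invariance in law of ${\sf X}$.

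First, using $g \ge 0$ and Tonelli's theorem, I interchange $\dE$ and $\sum_y$ on both sides, so it suffices to prove
\[
\sum_{y} \dE\!\left[g({\bf 0}, y, {\sf X})\right] \;=\; \sum_{y} \dE\!\left[g(y, {\bf 0}, {\sf X})\right].
\]

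Next, for each fixed $y \in {\sf V}$, I will apply the diagonal invariance of $g$ with shift $\alpha = -y$ to rewrite
\[
g(y, {\bf 0}, {\sf X}) \;=\; g({\bf 0}, -y, {\sf X} - y),
\]
where ${\sf X} - y$ denotes the translate of ${\sf X}$ by $-y$. Translation invariance of the law of ${\sf X}$ then gives
\[
\dE\!\left[g({\bf 0}, -y, {\sf X} - y)\right] \;=\; \dE\!\left[g({\bf 0}, -y, {\sf X})\right].
\]
Re-indexing $y \mapsto -y$ in the resulting sum produces the left-hand side, completing the argument.

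There is no real obstacle: the argument is a routine application of diagonal invariance together with translation invariance of ${\sf X}$. The only mild subtlety is that both sides could \emph{a priori} be infinite, but this is harmless because the identity holds in $[0, \infty]$ thanks to non-negativity of $g$.
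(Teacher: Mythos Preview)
Your argument is correct and is the standard proof of the mass transport principle on $\dZ^d$: Tonelli to pass the sum through the expectation, diagonal invariance of $g$ with shift $-y$, translation invariance in law of ${\sf X}$, and the bijection $y\mapsto -y$ on ${\sf V}=\dZ^d$. The paper does not actually supply a proof of this lemma; it merely states it and refers the reader to \cite{LP:book,AL07}, so there is nothing to compare against beyond noting that your proof is exactly the one found in those references.
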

\subsection{Absolutely Continuous Perturbations}
In this section we finish the proof of \Cref{lem:abs_inf}. Let $\Sigma$ be a countable set. Recall the definition of $g_z$ from \Cref{def:good}, as well as $C(z,\ga)$.
\begin{lem}\label{lem:abs}
Consider $(\mvJ,U,U')\in [\gb,\infty)^{\Sigma}\times [0,1]^{2}$, where $(U,U')$ are independent uniform random variables, which are independent of $\mvJ$. Let $S\subset \Sigma$ be a fixed subset. Let $\mvZ\in [0,\infty)^\Sigma$ depend measurably on $(\mvJ,U,U')$ and assume that $\mvZ$ is identically 0 outside $S$, that is $Z_{x}=0$ for all $x\notin S$. Define $\tilde{\mvJ}$ by setting $\tilde{J}_{x}=J_{x}+Z_x$ for all $x\in \Sigma$. The distribution of $(\tilde{\mvJ},U,U')$ is absolutely continuous with respect to the distribution of $(\mvJ,U,U')$.  
\end{lem}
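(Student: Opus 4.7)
The approach is to condition on the non-shifted coordinates, reduce the problem to a pure translation of $\mvJ_S$, and then reassemble via the tower property. First reduce to the case of finite $S$ by passing to an exhausting sequence of finite subsets and using monotone convergence; for the remainder of the plan assume $|S|<\infty$. Condition on the $\sigma$-algebra $\mathcal{H}$ generated by $(\mvJ_{S^c},U,U')$; regular conditional probabilities exist because we are on a Borel space. By independence of $\mvJ$ from $(U,U')$ together with the product form of $\mathcal{L}(\mvJ)$, the conditional law of $\mvJ_S$ given $\mathcal{H}$ is the product density $\prod_{x\in S} p(j_x)\,dj_x$ on $[\gb,\infty)^S$.

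In the core case, where $\mvZ$ is $\mathcal{H}$-measurable, conditioning fixes $\mvZ$ at a deterministic $\mvz\in[0,\infty)^S$, so that $\tilde{\mvJ}_S=\mvJ_S+\mvz$ is a pure translation of the conditional law. The translated density is $\prod_{x\in S} p(\tilde j_x - z_x)$. Because $z_x\geq 0$ and $\{p>0\}\supseteq(\gb,\infty)$, a positive numerator forces $\tilde j_x>\gb+z_x\geq\gb$ and hence $p(\tilde j_x)>0$, so the conditional Radon--Nikodym derivative is well defined:
\[
\frac{d\mathcal{L}(\tilde{\mvJ}_S\mid\mathcal{H})}{d\mathcal{L}(\mvJ_S\mid\mathcal{H})}(\tilde{\mvj}_S) \;=\; \prod_{x\in S} g_{z_x}(\tilde j_x),
\]
with $g_z$ as in \Cref{def:good}. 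The goodness hypothesis yields $L^\alpha$-integrability of $g_z$ against $p(x)\,dx$ and continuous dependence on $z$, ensuring that averaging over the $\mathcal{H}$-measurable $\mvz$ keeps the derivative finite and well behaved. Taking the tower expectation over $\mathcal{H}$ then assembles the unconditional Radon--Nikodym derivative of $\mathcal{L}(\tilde{\mvJ},U,U')$ with respect to $\mathcal{L}(\mvJ,U,U')$.

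The main obstacle is the reduction to the $\mathcal{H}$-measurable case. In the stated generality $\mvZ$ may depend on $\mvJ_S$, and then the map $\mvJ_S\mapsto\mvJ_S+\mvZ(\mvJ_S)$ can fail to be almost-everywhere invertible (for instance, its Jacobian can vanish on a positive-measure set when $\mvZ$ is built from a maximum of absolute values), producing a singular component in the pushforward. In the concrete applications of this lemma the issue is resolved by the specific form of the perturbation used in \Cref{lem:perturb_as_event}: the independent uniform $V$ enters $z_S$ additively, so enlarging $\mathcal{H}$ to include $V$ and then performing a change of variables that absorbs the remaining $\mvJ_S$-dependence reduces matters to the core translation case above, allowing the Radon--Nikodym calculation to go through.
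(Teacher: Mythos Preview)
Your core case---$\mvZ$ measurable with respect to $\sigma(\mvJ_{S^c},U,U')$---is handled correctly, and is essentially the same conditioning-plus-translation computation that the paper performs. The substantive part of your write-up is the observation that the lemma is \emph{false} in the stated generality: for $\Sigma=S=\{x\}$ and $Z_x=(c-J_x)_+$ with any fixed $c>\gb$, one has $\tilde J_x=\max(J_x,c)$, which places positive mass at the single point $c$ and is therefore not absolutely continuous with respect to the law of $J_x$. The paper's own proof contains exactly this gap: the displayed identity, obtained by applying the deterministic-$\mvz$ translation formula with the random $\mvZ$ in place of $\mvz$, is not valid when $\mvZ$ depends on $\mvJ_S$, as the counterexample shows.

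Your rescue for the concrete perturbation \eqref{eq:perturb_J} is the right idea but imprecisely stated. Enlarging $\mathcal H$ to include $V$ does not by itself help: after conditioning on $V$, the shift $z_S=\max_{e}|J_e-J_{u(e)}-J_{v(e)}|+V$ still depends on $\mvJ_{E_S}$, and on the region where the maximizing edge $e^*=(u^*,v^*)$ has $J_{e^*}<J_{u^*}+J_{v^*}$ the Jacobian of the coordinatewise map $j_e\mapsto j_e+z_S$ vanishes. What does work is the change of variables $(\mvJ_{E_S},V)\mapsto(\tilde{\mvJ}_{E_S},z_S)$, which factors as the Jacobian-one map $(\mvj,V)\mapsto(\mvj,z_S)$ followed by the linear shear $(\mvj,s)\mapsto(\mvj+s\,(1,\dots,1),\,s)$. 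Hence, conditionally on $(\mvJ_{\Sigma\setminus E_S},U,U')$, the vector $\tilde{\mvJ}_{E_S}$ has a Lebesgue density; since $\tilde J_e>J_e>\gb$ almost surely, this density is supported where $\prod_e p>0$, which gives the absolute continuity actually needed for \Cref{lem:abs_inf}. The deterministic $\eps/2$ shift on the edges at the origin is then covered by your core case. So the result the paper uses is true, but it is a statement about the specific perturbation, not an instance of the general lemma.
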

 We state the following measure theoretic fact before proving our lemma. Let $\mvz \in [0,\infty)^\Sigma$ be such that it is $0$ except at a finite fixed collection of indices $S \subset \gS$. The following equality is a standard computation for any Borel set $\cA \subset [\gb,\infty)^{\gS}$ and measurable $F$ such that the expectations are finite on both sides:
\begin{align}
\E \bigl(F(\mvJ+\mvz)\mv1_{\mvJ+\mvz\in \cA}\bigr)=\E\biggl(F(\mvJ)\mv1_{\mvJ \in \cA}\prod_{x\in \gS}g_{z_{x}}(J_{x}) \biggr)
\end{align}
Note that $g_{z_x} (J_x) =1$ if $x \not \in S$.

\begin{proof}[Proof of Lemma \ref{lem:abs}]
Let $\cA$ be a Borel subset of $[\gb,\infty)^{\gS}$, and $\cB$ be a Borel subset of $[0,1]^{2}$. For fixed $\mvz\in [0,\infty)^{S}$ and $\eps>0$, let $B_{\eps}(\mvz)$ denote the ball of radius $\eps$ centered at $\mvz$, appropriately truncated at the boundary. We know that 
\begin{align*}
\dP &\left((\tilde{\mvJ},U,U')\in \cA\times \cB,  \mvZ(\mvJ,U,U')\in B_{\eps}(\mvz)\right)\\ &=\dP\left((\mvJ,U,U')\in (\cA-\mv{Z})\times \cB,\mvZ(\mvJ,U,U')\in B_{\eps}(\mvz)\right)
\end{align*}

\begin{align*}
&=\E \biggl(\mv1_{(\mvJ,U,U')\in (\cA-\mvZ)\times \cB}\cdot \mv1_{\mvZ \in B_{\eps}(\mvz)}\biggr)\\
&=\E \biggl(\mv1_{(\mvJ,U,U') \in \cA\times \cB}\cdot \mv1_{\mvZ\in B_{\eps}(\mvz)}\cdot\prod_{x\in S}g_{Z_{x}}(J_{x})\biggr).
\end{align*}
By H\"older's inequality, we obtain an upper bound given by 
\begin{align*}
\max_{y\in B_{\eps}(z)}\E\biggl(\prod_{x\in S}g^{\ga}_{y_{i}}(J_{x})\biggr)^{1/\ga}\cdot & \dP ((\mvJ,U,U') \in \cA\times \cB)^{(\ga-1)/\ga} \\
& \leq \max_{\mvy\in B_{\eps}(\mvz)}C(y_{i},\ga)^{|S|/\ga}\cdot \dP((\mvJ,U,U')\in \cA\times \cB)^{\frac{\ga-1}{\ga}}.
\end{align*}
This is sufficient to conclude that if $\dP((\mvJ,U,U')\in\cA\times \cB)=0$, then $\dP((\tilde{\mvJ},U,U') \in \cA\times \cB)=0$ by a standard limiting argument involving $\mvz$. 
\end{proof}

\subsection{Acknowledgements} We would like to thank Partha Dey, Arnab Sen and Grigory Terlov for helpful discussions. KK acknowledges the Pacific Insititute for Mathematical Sciences, this work was done with the support of a PIMS Postdoctoral Fellowship.  \bibliography{monomer_dimer}
\bibliographystyle{abbrv}
\end{document}